  \theoremstyle{plain}
  \newtheorem{theorem}{Theorem}
  \newtheorem{lemma}[theorem]{Lemma}
  \newtheorem{corollary}[theorem]{Corollary}
  \newtheorem{definition}[theorem]{Definition}
  \newtheorem{remark}[theorem]{Remark}
  \newtheorem{claim}[theorem]{Claim}
  \providecommand\BibTeX{{%
    \normalfont B\kern-0.5em{\scshape i\kern-0.25em b}\kern-0.8em\TeX}}}
\begin{document}

\title{A Heterogeneous Schelling Model for Wealth Disparity and its \\ Effect on Segregation}

\author{Zhanzhan Zhao}
\authornote{Corresponding authors. This work was supported by Georgia Tech ARC-TRIAD Fellowship, NSF Award CCF-2106687, CCF-1733812, and ARO MURI Award W911NF-19-1-0233.}
\email{zhanzhanzhao@gatech.edu}
\affiliation{%
 \institution{Georgia Institute of Technology}
 \department{School of Computer Science}
 \country{USA}}

\author{Dana Randall}
\authornotemark[1]
%\footnote[1]
%\authornote{Supported by NSF Award CCF-2106687, CCF-1733812, and ARO MURI Award W911NF-19-1-0233.}
\email{randall@cc.gatech.edu}
\affiliation{%
 \institution{Georgia Institute of Technology}
 \department{School of Computer Science}
 \country{USA}}

\begin{abstract}
 The Schelling model of segregation was introduced in economics to show how micro-motives can influence macro-behavior. 
 Agents on a lattice have two colors and try to move to a different location if the number of their neighbors with a different color exceeds some threshold. Simulations reveal that even such mild local color preferences, or homophily, are sufficient to cause segregation.
 In this work, we propose a stochastic generalization of the Schelling model, based on both race and wealth, to understand how carefully architected placement of incentives, such as urban infrastructure, might affect segregation. 
 In our model, each agent is assigned one of two colors along with a label, rich or poor. Further, we designate certain vertices on the lattice as “urban sites,” providing civic infrastructure that most benefits the poorer population, thus incentivizing the occupation of such vertices by poor agents of either color. 
 We look at the stationary distribution of a Markov process reflecting these preferences to understand the long-term effects. 

 We prove that when incentives are large enough, we will have "urbanization of poverty," an observed effect whereby poor people tend to congregate on urban sites. Moreover, even when homophily preferences are very small, if the incentives are large and there is income inequality in the two-color classes, we can get racial segregation on urban sites but integration on non-urban sites. In contrast, we find an overall mitigation of segregation when the urban sites are distributed throughout the lattice and the incentives for urban sites exceed the homophily biases. We prove that in this case, no matter how strong homophily preferences are, it will be exponentially unlikely that a configuration chosen from stationarity will have large, homogeneous clusters of agents of either color, suggesting we will have racial integration with high probability.
\end{abstract}

\begin{CCSXML}
<ccs2012>
   <concept>
       <concept_id>10002950.10003648.10003700.10003701</concept_id>
       <concept_desc>Mathematics of computing~Markov processes</concept_desc>
       <concept_significance>500</concept_significance>
       </concept>
   <concept>
       <concept_id>10003752.10010061.10010065</concept_id>
       <concept_desc>Theory of computation~Random walks and Markov chains</concept_desc>
       <concept_significance>500</concept_significance>
       </concept>
   <concept>
       <concept_id>10010405.10010455.10010461</concept_id>
       <concept_desc>Applied computing~Sociology</concept_desc>
       <concept_significance>500</concept_significance>
       </concept>
   <concept>
       <concept_id>10003456.10010927.10003611</concept_id>
       <concept_desc>Social and professional topics~Race and ethnicity</concept_desc>
       <concept_significance>500</concept_significance>
       </concept>
   <concept>
       <concept_id>10003456.10010927.10003618</concept_id>
       <concept_desc>Social and professional topics~Geographic characteristics</concept_desc>
       <concept_significance>500</concept_significance>
       </concept>
 </ccs2012>
\end{CCSXML}

\ccsdesc[500]{Mathematics of computing~Markov processes}
\ccsdesc[500]{Theory of computation~Random walks and Markov chains}
\ccsdesc[500]{Applied computing~Sociology}
\ccsdesc[500]{Social and professional topics~Race and ethnicity}
\ccsdesc[500]{Social and professional topics~Geographic characteristics}

\keywords{Schelling segregation model, 
wealth disparity, 
stochastic processes, 
stationary distribution, 
Peierls arguments}

\maketitle

\section{Introduction}
\label{introduction}
Over fifty years ago,  economist Thomas Schelling studied segregation by modeling residents as colored particles on a chessboard. 
Each particle is considered {\it happy} if its color agrees with more than a fixed fraction of its neighbors and {\it unhappy} particles try to move to new locations with more favorable neighborhoods \cite{hatna2014combining}. 
Simulations reveal even a mild preference for neighbors of one's own color is sufficient to cause segregation on a macroscopic scale~\cite{schelling1971dynamic}. %
Extensive work has been done by economists and sociologists to expand Schelling's model using statistical analysis, simulation tools, and enhanced models \cite{bayer2012tiebout, kortum2012joint, lehman2011segregation, tammaru2020relationship}. 
This work primarily focuses on how the dynamics determine the limiting distribution and try to connect the model to the real world population dynamics \cite{clark1991residential, clark2008understanding, laurie2003role, singh2009schelling,cottrell2017multidimensional, perez2019geospatial,yinger1976racial}. Recent work also seeks to understand the dual segregation of ethnicity and wealth with empirical studies \cite{sahasranaman2018ethnicity, fossett2011generative, paolillo2018different}.

Additional heuristical and rigorous studies on the implications of Schelling-like dynamics have been undertaken in the theoretical computer science and statistical physics communities, where the concept of micro-motives affecting macro-behavior such as phase transitions is well-understood. For instance, Brandt et al. \cite{brandt2012analysis, gerhold2008computing} rigorously determined the precise limiting distributions for the Schelling model in one dimension.  Additional rigorous analysis was provided for a modified Schelling model with simplified neighborhood interactions~\cite{barmpalias2014digital, stauffer2007ising, pollicott2001dynamics} or with generalized local interactions \cite{stauffer2008social, bhakta2014clustering}. Bhakta et al. \cite{bhakta2014clustering} introduced a randomized variant and proved that slight biases maintain well-integrated populations, whereas stronger biases lead to segregation. 
Unlike Schelling's model where each person's happiness has a deterministic threshold regarding one's tolerance for differently colored neighbors, the model in \cite{bhakta2014clustering} allows all particles to move stochastically and they are increasingly inclined to move when they have more neighbors of the opposite color.   
Improved bounds on the amount of bias that leads to integration and segregation were given by Cannon et al. \cite{cannon2019local} for specific geometric incentive functions where the model can be mapped onto problems of heterogeneous particle separation in the programmable matter.

Most variants of the Schelling model assume that agents of each race are homogeneous and have identical incentives influencing where they prefer to live purely based on {\it homophily}, the desire for each particle to have neighbors that are similar to oneself, regardless of socio-economic status and location. However, such simple models cannot explain two widely observed phenomena: {\it centralization}, whereby one racial group clusters near the the city center, and {\it urbanization of poverty}, whereby city centers  and other areas dense with public amenities and infrastructure disproportionately attract the poorer populations.  Centralization  is widely-used to measuring racial segregation in metropolitan areas \cite{massey1988dimensions, iceland2002racial}. Urban economists show that urbanization of poverty results from better access to public transportation in central cities and other resources~\cite{glaeser2008poor}.  Such evidence shows us that socio-economic considerations such as the spatial distributions of urban infrastructure are significant factors influencing racial segregation but these are not captured by any of the theoretical models. This motivated our work which simultaneously considers both homophily and each individuals' incentives according to their wealth level and their access to public amentities.  With our proposed new model, we are able to rigorously explore the impact of wealth disparity on racial segregation, as well as civic interventions to potentially help mitigate segregation.

\subsection{The heterogeneous Schelling model}
To better understand these socioeconomic distinctions and the effects of economic disparity within a city, we introduce a new 
{\it heterogeneous Schelling model} where  individuals are each assigned  a color and designated  {\it rich} or {\it poor}. We also distinguish some vertices on the underlying lattice to be {\it urban sites} if they  provide useful infrastructure (or resources) that is most beneficial to  poor citizens.  The urban sites might be  grouped centrally, for instance representing a metropolitan city center, or  distributed evenly throughout large parts of the city, representing a vast public transportation network or other distributed amenities (see Fig~\ref{1}). While all individuals have uniform homophily preferences, as in the standard Schelling model, we add additional incentives that favor configurations with more poor people residing on urban sites, capturing the presumption that urban sites provide sufficient benefits to poor individuals to incentivize their relaxing their racial biases. {We are interested in understanding when urban infrastructure can help mitigate racial biases and lessen segregation for various placements of urban sites for such a model.}

%%%%%%%%%%%%%%%%%
Specifically, we  represented the city by a finite torus on the triangular lattice, with each site accommodating exactly one person. Each person (or agent)  is  blue or red, representing race, and  {\it rich} or {\it poor}, representing wealth.  The vertices $\mathcal{U} \subseteq \mathcal{V}$ are the {\it urban sites}.   Each pair of neighbors has homophily (or racial) bias $\lambda$, representing how much they each prefer neighbors of their own color. Setting $\lambda > 1$ is the ``ferromagnetic'' setting corresponding to agents preferring same-colored neighbors.
Further, poor agents have an affinity for urban sites with a wealth bias parameter $\gamma$;
setting $\gamma > 1$ biases poor agents to prefer residing on urban sites. When $\gamma = 1,$ we recover the pure standard homophily model where wealth of individuals is not considered. 
Let $\Omega = (\{{\rm red}, {\rm blue}\} \times \{{\rm rich}, {\rm poor}\})^{|V|}$ be the state space.  The stationary probability of any configuration $\sigma \in  \Omega$ is given by $$\pi(\sigma) = \lambda^{-h(\sigma)}\gamma^{p(\sigma)}/Z,$$
where $h(\sigma)$ is the number of racially heterogeneous edges (whose endpoints do not share the same color), $p(\sigma)$ is the number of poor agents on urban sites, and 
$$Z = \sum_{\sigma \in \Omega} \lambda^{-h(\sigma)}\gamma^{p(\sigma)}$$
is the normalizing constant.

A randomized algorithm $\mathcal{M}$ for sampling from $\pi$ can be described as follows.
At each time step, two random agents are selected, and they swap locations with the appropriate Metropolis probabilities  so as to converge to $\pi$.   In particular, they are more likely to swap  if they are each in less homogeneous neighborhoods, as previously studied in \cite{bhakta2014clustering, cannon2019local}, with an additional bias toward keeping poor agents on urban sites, so happier individuals are less likely to move.
We note that when there are no urban sites (or all vertices are urban sites), then the wealth of individuals becomes irrelevant and we recover the racial segregation model studied in \cite{cannon2019local}, where the dichotomy of the phase change between integration and segregation has been proved. Here we are interested in the effects in heterogeneous cases where both urban and non-urban sites are present. We also require the size of the urban sites to be of a constant fraction of the total sites. For topology, we study the impact of the centralized or distributed placement of the urban sites on segregation.

\subsection{Effects on wealth and racial segregation}
First, we show that our model yields {\it urbanization of poverty} when the wealth bias $\gamma$ is sufficiently large, with all but an arbitrarily small fraction of urban sites being occupied by poor agents.  
Conversely, we show that for any racial bias $\lambda>1$, if the wealth bias $\gamma>1$ is small enough, then it is exponentially unlikely that poor agents will be disproportionately concentrated on urban sites. 

Moreover, when the urban sites are centralized and  both racial bias $\lambda$ and wealth bias $\gamma$ are large enough, {\it urbanization of poverty} and {\it racial segregation} will occur simultaneously. However, when there is significant inequality in the distribution of wealth and many more poor people come from one race, then
even when the racial bias $\lambda$ is small, as long as the wealth bias $\gamma$ is large, we will have {\it racial segregation on urban sites} and {\it racial integration on the non-urban sites}. This suggests that the urbanization of poverty can enhance racial segregation when the infrastructure is centralized, such as with a dense city center with civic services and perhaps subsidized housing, providing a primary location that incentivizes occupation by poor people.

We show there will be a dramatically different outcome when the urban sites are well-distributed throughout the city, such as with public transportation stops that service the entire city. First, we prove under income inequality, where one race has a higher proportion of poor people, no matter how large racial bias $\lambda$ is, as long as the wealth bias $\gamma$ exceeds racial bias $\lambda$ sufficiently, both the urban and non-urban sites will be integrated with high probability. That is, the probability of large spatial clusters with predominantly one race forming anywhere is exponentially small. This suggests that distributing urban infrastructure equitably throughout the city will have a better effect on mitigating segregation when the incentives are large enough compared to the inherent racial biases.

Our proofs build on {\it Peierls arguments} from statistical physics for the integration and separation of heterogenous particles in the context of programmable matter \cite{cannon2019local}. The essential idea is to map the set of configurations not satisfying a target property to a set of configurations that have exponentially larger probability at stationarity, so that the inverse maps do not require significant information, thus proving that configurations outside of the target set must have small probability by  evaluating ``energy\slash entropy'' balancing the probabilities and the number of preimages. However, the introduction of urban sites and wealth bias greatly complicates the proofs as we have to keep the same number of people for each pair of wealth level and race before and after the mapping $\nu = f(\sigma).$  In our setting, all four groups may deviate under the maps and it requires careful arguments to be able to  restore the cardinalities of all the sets without losing too much information about the inverse map, which is significantly more challenging than earlier proofs that only considered race.

\section{Preliminaries}
The dynamics we study can be viewed as a Markov chain that converges to a distribution reflecting the overall effects of individual biases.  We briefly review properties of Markov chains and summarize techniques used to analyze their stationary distributions.

\subsection{Markov chains}\label{A0}
A Markov chain is a memoryless random process on a state space $\Omega$, which is is finite and discrete in our setting. We focus on discrete time Markov chains, where one transition occurs per time step. The transition matrix $M$ on $\Omega \times \Omega \to [0, 1]$ is defined so that $M(x, y)$ is the probability of transiting from state $x$ to state $y$ in one step, for any pair $x, y \in \Omega$. The $t-$step transition probability $M^{t}(x, y)$ is the probability of moving from $x$ to $y$ in exactly $t$ steps.

A Markov chain is {\it irreducible} if for all $x, y \in \Omega$, there exists a $t$ such that $M^{t}(x, y) > 0$ and is ${\it aperiodic}$ if for all $x, y \in \Omega,$ g.c.d.$\{t: M^{t}(x, y) > 0\} = 1,$ where g.c.d. stands for the greatest common divisor. A Markov chain is {\it ergodic} if it is {\it irreducible} and {\it aperiodic} (see, e.g.,  \cite{levin2017markov}). 

A {\it stationary distribution} of a Markov chain is a probability distribution $\pi$ over the state space $\Omega$ such that $\pi P = \pi.$ Any finite, ergodic Markov chain converges to a unique stationary distribution given by $\pi(y) = \lim_{t \to \infty}P^{t}(x, y)$ for any $x, y \in \Omega;$ moreover, for such chains the stationary distribution $\pi(y)$ is independent of starting state $x.$
To verify $\pi^{'}$ is the unique stationary distribution of a finite ergodic Markov chain, it suffices to check the {\it detailed balance condition}, i.e., $\pi^{'}(x)M(x, y) = \pi^{'}(y)M(y, x)$ and $\sum_{x \in \Omega} \pi^{'}(x) = 1$ for all $x, y \in \Omega$ (see \cite{feller1957introduction}).

\subsection{Peierls arguments} \label{PA}
Peierls arguments are helpful in analyzing a chain's limiting behavior by showing that the probability a sample drawn from the stationary distribution $\pi$ of a Markov chain falls into some target set is exponentially small in $n$,  indicating that $\pi(\Omega_{t}) \leq \xi^{n}$ for some constant $\xi \in (0, 1).$
The Peierls argument is based on a map from configurations in the target set $\Omega_{\rm t}$ to configurations in the configuration space $\Omega$ such that the map has an exponential gain in probability. Thus the targeted configurations are exponentially unlikely in $\Omega$.
Mathematically, the mapping  $\nu = f(\sigma)$ is defined from $\sigma \in \Omega_{\rm t}$ to $\nu \in \Omega$, which yields
\begin{align}\label{map1}
	\pi(\Omega_{\rm t}) = \sum_{\sigma \in \Omega_{\rm t}} \pi(\sigma) &\leq \sum_{\nu \in \Omega} \sum_{\sigma \in f^{-1}(\nu)} \pi(\sigma)\nonumber\\ &= \sum_{\nu \in \Omega} \pi(\nu) \frac{\sum_{\sigma \in f^{-1}(\nu)} \pi(\sigma)}{\pi(\nu)}.
\end{align} 
In order to show $\pi(\Omega_{\rm t}) \leq \xi^{n}$, the mapping needs to be carefully defined to get the upper bounds of the probability ratio $\frac{\pi(\sigma)}{\pi(\nu)}$ and the number of the preimages $|f^{-1}(\nu)|$ for any given $\nu$.
A mapped configuration $\nu$ with large probability ratio $\frac{\pi(\sigma)}{\pi(\nu)}$ can also have many possible preimages, which necessitates carefully balancing  $\frac{\pi(\sigma)}{\pi(\nu)}$ and $|f^{-1}(\nu)|$,  representing an {\it energy\slash entropy tradeoff.} 

To facilitate the mapping operations $f(\cdot)$ and the counting of $|f^{-1}(\nu)|$,  certain {\it bridge systems} have been introduced in \cite{miracle2011clustering, cannon2019local} to efficiently encode some information about mapped configurations to facilitate inverting the map and help bound the number of preimages. 
Unlike \cite{miracle2011clustering, cannon2019local}, here the configuration space is enlarged by the introduction of a wealth dimension,  requiring extending the bridge system to encode multi-dimensional information representing race and wealth. Moreover, because the additional wealth term is reflected in the stationary distribution, more careful mapping rules are required to account for tradeoffs  between  $\frac{\pi(\sigma)}{\pi(\nu)}$ and $|f^{-1}(\nu)|,$ balancing the effects of both the wealth and homophily biases in the probability measure.

\section{The Heterogeneous Schelling Model with Incentives}
\label{urb_seg_alg}
In our proposed model, a city is represented by a finite toroidal region of the triangular lattice $G_{\triangle} = (\mathcal{V}, \mathcal{E})$, shown in Figure~\ref{1a}. Each vertex in $\mathcal{V}$ represents a potential residence or {\it site}. Two adjacent vertices are {\it neighboring sites}, and each site has six nearest neighbors on $G_{\triangle}$. Some  vertices $\mathcal{U} \subseteq \mathcal{V}$ are designated  {\it urban sites}.   We denote the set of agents as $\mathcal{A}$  and the poor agents as $\mathcal{P \subseteq A}.$   Figure~\ref{1a} shows an example of centralized placement of urban sites, whereas Figure~\ref{1b} shows the distributed placement, with urban sites depicted as yellow hexagons.

We assume each agent $i$ is assigned a race $r(i) \in \{\text{blue, red}\}$ and  wealth $w(i) \in \{\text{rich, poor}\}.$ Each site in $\mathcal{V}$ can accommodate at most one agent. For simplicity of analysis, we assume that $n$ agents fully occupy all the sites on $G_{\triangle}$, where $|\mathcal{V}| = n.$ The size of the urban sites is assumed to be of a constant fraction of all the sites, i.e., $|\mathcal{U}| = c \cdot n,$ where $c \in [0, 1].$ 
As shown in Figure~\ref{1c},  we represent the race of an agent by color and the wealth of an agent by the shade of each color; poor blue agents are referred to as  {\it cyan}, poor red agents are {\it pink} and {\it blue} and {\it red} are reserved for the rich members of each color class.

Among the $n$ agents, $\mathcal{P}$ is the subset that are poor; the fraction that are poor is denoted by $p$, so  $|\mathcal{P}| = p \cdot n.$ Similarly, the fraction of red agents $\mathcal{R}$ is $r$, with $|\mathcal{R}| = r \cdot n.$  Among the red agents, we further denote the fraction of poor red agents $\mathcal{R}_{\rm p}$ as $r_{\rm p}$, and the fraction of  rich red agents $\mathcal{R}_{\rm r}$ as $r_{\rm r}$, so that $|\mathcal{R}_{\rm p}| = r_{\rm p} n,$ and $|\mathcal{R}_{\rm r}| = r_{\rm r} n.$ Similarly, we define the fraction of the blue $\mathcal{B}$ as $b$, the fraction of poor blue $\mathcal{B}_{\rm p}$ as $b_{\rm p}$, and the fraction of  rich blue as $\mathcal{B}_{\rm r}$ as $b_{\rm r}$.

\begin{figure*}[t]
	\begin{subfigure}[t]{0.3\textwidth} 
		\centering
		\includegraphics[height=1.22in]{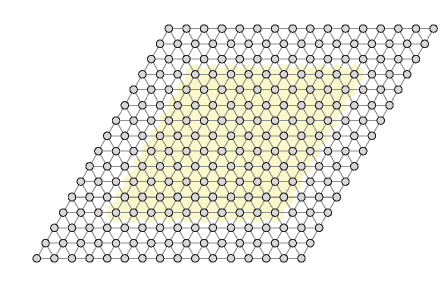}
		\captionsetup{justification=centering}
		\subcaption{} \label{1a}
	\end{subfigure}
	\begin{subfigure}[t]{0.3\textwidth} 
		\centering
		\includegraphics[height=1.22in]{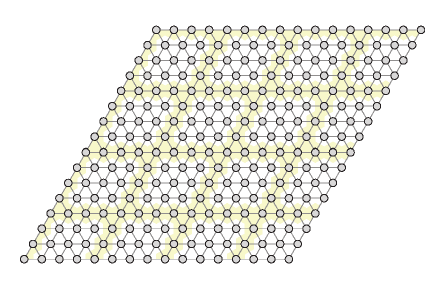}
		\captionsetup{justification=centering}
		\subcaption{}\label{1b}
	\end{subfigure}
	\begin{subfigure}[t]{0.3\textwidth} 
		\centering
		\includegraphics[height=1.22in]{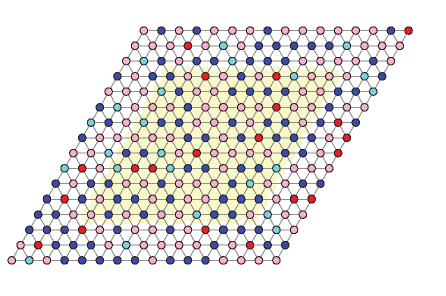}
		\captionsetup{justification=centering}
		\subcaption{}\label{1c}
	\end{subfigure}
	\caption[]{A city lattice region $G_{\triangle}$  (a) with centralized urban sites (shaded in yellow), (b) with distributed urban sites, and (c) centralized and fully occupied by the four types of agents.  }\label{1}
\end{figure*}

A {\it configuration} (or a {\it state}) $\sigma$ is an assignment of the four types (race and wealth) to each of the vertices of $G_{\triangle}$. The {\it state space} (or configuration space) $\Omega = (\{{\rm red}, {\rm blue}\} \times \{{\rm rich},{\rm poor}\})^{|V|}$ is the set of all possible  configurations.

For a configuration $\sigma,$ we denote  $\ell_{\sigma}(i) \in \mathcal{V}$ as the site where agent $i$ resides. Agents living at adjacent sites are {\it neighbors} and each agent has six neighbors.
Each agent $i$ is assigned a race $r(i)$, wealth $w(i)$, and occupies a site $\ell_{\sigma}(i)$, which it can recognize as an urban site  or not. We define an indicator function that takes agent $i$ as input and outputs true when $i$ is poor and currently on the urban sites as the following:
$$u_{\sigma}(i) \triangleq \left\{\begin{matrix}
	1, &\text{if\,\,} i \in \mathcal{P}, \ell_{\sigma}(i) \in \mathcal{U}\\ 
	0,&\text{otherwise}
\end{matrix}\right.$$ 
For a configuration $\sigma,$ the number of agents that are both poor and on the urban sites is defined to be $p(\sigma) \triangleq \sum_{i \in \mathcal{A}}u_{\sigma}(i).$

For each agent $i$, let $N_{\sigma}(i)$ be the number of neighbors of $i$ that share its color. 
An edge in a configuration $\sigma$ with vertices occupied by agents $i$ and $j$ is {\it racially homogeneous} if their colors agree (i.e., $r(i) = r(j)$) and {\it racially heterogeneous} otherwise. We define the total number of racially heterogeneous edges of a configuration $\sigma$ as $h(\sigma),$ and the total number of racially homogeneous edges as $e(\sigma).$

The Markov chain $\mathcal{M}$ is defined so that it will converge to $\pi(\sigma) = \lambda^{-h(\sigma)} \gamma^{p(\sigma)} / Z$, which generalizes the Schelling probabilities to reflect the additional contribution of urban sites.
Each agent $i$ is able to swap locations with any agent $j \in \mathcal{A}, j \neq i$ in the city $G_{\triangle}$, which we denote it a {\it swap move} $s_{ij}.$
 Beginning with any configuration $\sigma_0 \in \Omega,$ at each time step, 
 %given the current configuration $\sigma$, 
 the algorithm randomly picks two agents $i$ and $j$ at sites $\ell_{\sigma}(i) \in \mathcal{V}$ and $\ell_{\sigma}(j) \in \mathcal{V}$ and tries to swap their positions with the appropriate Metropolis probabilities (so agents are  more likely to move if the move increases its number of racially homogeneous neighbors, but with a  dampening factor $\frac{1}{\gamma} < 1$ if the agent is poor and currently at the urban site. Mathematically, 
$$P(\sigma: i \to j) = \frac{\lambda^{-N_{\sigma}(i)}}{\gamma^{u_{\sigma}(i)}},$$
where $\lambda > 1$, and $\gamma > 1$. The probability of agents $i$ and agent $j$ swapping positions satisfies 
\begin{align}
		P(\sigma: s_{ij}) = \frac{1}{n^2} \lambda^{-N_{\sigma}(i)-N_{\sigma}(j)} \gamma^{-\sum_{k \in \{i,j\}} u_{\sigma}(k)}.\label{transition}
\end{align}

\begin{algorithm}
\renewcommand{\algorithmicrequire}{\textbf{Beginning at any configuration $\sigma_0$ with $n$ agents, repeat: }}
	\caption{Markov Chain $\mathcal{M}$.}\label{alg}
	\begin{algorithmic}
		\REQUIRE {}
		\STATE {Choose two agents $i$ and $j$ uniformly at random in the current configuration $\sigma$.}
		\STATE {Choose $q \in (0, 1)$ uniformly at random.}
		$\,\,\,\,$ \IF {$q < \lambda^{-N_{\sigma}(i)-N_{\sigma}(j)} \gamma^{-\sum_{k \in \{i,j\}} u_{\sigma}(k)}$} 
		    \STATE{agents $i$ and $j$  swap positions.}
		\ELSE 
		    \STATE { agents $i$ and $j$ keep their current locations.}
		\ENDIF
	\end{algorithmic}
\end{algorithm}

It is easy to see that the Markov chain $\mathcal{M}$ is ergodic on the state space $\Omega,$ since  swap moves of $\mathcal{M}$ suffice to transform any configuration to any other configuration (irreducible) and there is a non-zero self-loop probability for $\lambda > 1$ and $\gamma >1$ (aperiodic).
\noindent Using detailed balance it is easy to confirm that the Markov chain converges to 
\begin{align}
			\pi(\sigma) = \lambda^{-h(\sigma)} \gamma^{p(\sigma)} / Z,\label{pi}
\end{align} 
with $h(\sigma)$ the number of racially heterogeneous edges in $\sigma$, $p(\sigma)$ represent the number of poor people on the urban sites in $\sigma$, and $Z = \sum_{\sigma \in \Omega} \lambda^{-h(\sigma)} \gamma^{p(\sigma)}$ the partition function that normalizes the probability distribution. See Section 1.1 of Supplementary Information (SI) for proof details.

\section{Urbanization of Poverty}
\begin{figure*}
		\centering
		\begin{subfigure}[t]{0.45\textwidth}
			\centering
	\includegraphics[width=\textwidth]{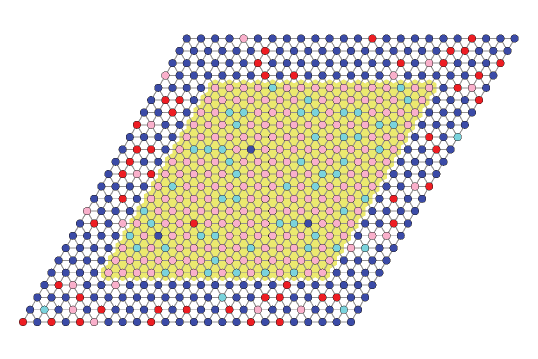}
			\caption[]%
			{{\footnotesize mixed urbanization and racial integration outside the centralized urban sites, with $\gamma = 200, \lambda = 1.01$. }}
			\label{fig:sim1a}
		\end{subfigure}
		\hfill
		\begin{subfigure}[t]{0.45\textwidth}  
			\centering 
	\includegraphics[width=\textwidth]{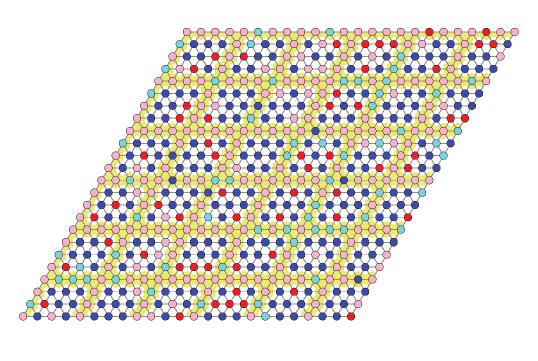}
			\caption[]%
			{{\footnotesize urbanization of poverty and integration with distributed urban sites with $\gamma = 200, \lambda = 1.01$.}} 
		\label{fig:sim1b}
		\end{subfigure}
\caption[]{Simulations of $\mathcal{M}$ after five million iterations with 40$\%$ poor red(pink), 10$\%$ rich red (red), 40$\%$ rich blue (blue), and 10$\%$ poor blue (cyan) and a $46\%$ fraction of the urban sites.}\label{fig:sim1}
\end{figure*} 
We begin by confirming the {\it urbanization of poverty}, whereby a $1-\epsilon$ fraction of the urban sites are occupied by poor agents, for any constant $\epsilon>0$.
We prove in Theorem~\ref{urb1} that under {\it centralized placement}, for any $\lambda$ and $\epsilon$, and $\gamma$ sufficiently large,  urbanization of poverty will occur at stationarity with high probability. See Figure~\ref{fig:sim1a} for simulations.
Further, in Corollary~\ref{urb2} we prove that when urban sites are distributed throughout the lattice, for any $\epsilon$ and $\lambda$ with   $\gamma>\lambda$,  we again will have urbanization of poverty (simulations in Figure~\ref{fig:sim1b}). 
Finally, we prove that for centralized urban sites, if $\gamma > 1$ and $\epsilon>0$ are sufficiently small, then for any $\lambda > 1$, we are very unlikely to have urbanization of poverty, i.e., more than a $1-\epsilon$ fraction of urban sites will be occupied by rich agents.

\begin{definition} \label{def_urb}
	For any $\epsilon \in (0, \frac{1}{2})$, a city is said to have \textbf{$\epsilon-$urbaniza-tion of poverty} if the number of poor agents on the urban sites is at least $\min\{c,p\} n - \epsilon n.$
\end{definition}
\noindent The parameter $\epsilon$ captures the tolerance for allowing rich agents on urban sites: smaller $\epsilon$ indicates a higher density of poor agents on urban sites, whereas larger $\epsilon$ allows a larger fraction to be occupied by rich agents. The term $\min\{c,p\} n$ is the maximum possible occupancy of poor agents on urban sites for given populations.
Hence, $\epsilon-$urbanization of poverty requires that the maximum number of poor agents occupy urban sites, leaving at most an $\epsilon$ fraction to be occupied by rich agents.

First, we show in Theorem \ref{urb1} and Corollary \ref{urb2} that for either centralized urban sites or distributed urban sites, if $\gamma$ is sufficiently large, then we are likely to observe urbanization of poverty at stationarity.

\begin{figure*}
		\begin{subfigure}[t]{0.45\textwidth}   
			\centering 
	\includegraphics[width=\textwidth]{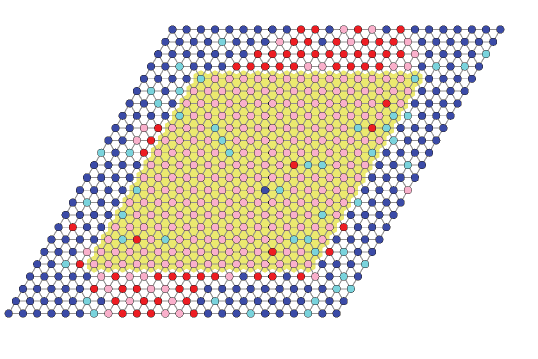}
			\caption[]%
			{{\footnotesize urbanized segregation with $\gamma = 200, \lambda = 2.$ }}    
		\label{fig:sim1c}
		\end{subfigure}
		%\vskip\baselineskip
		\begin{subfigure}[t]{0.45\textwidth}   
			\centering 
	\includegraphics[width=\textwidth]{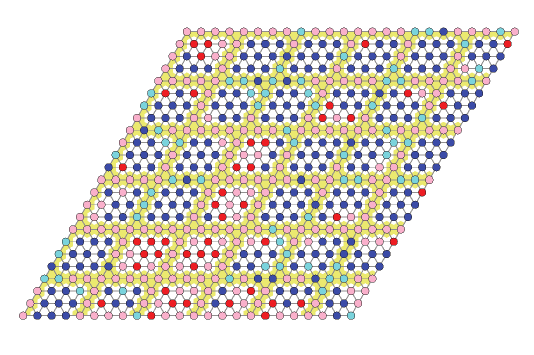}
			\caption[]%
			{{\footnotesize mitigation of segregation via distributing the urban sites, with $\gamma = 200, \lambda = 2$.}}  \label{fig:sim1d}
		\end{subfigure}
\caption[]{Simulations of $\mathcal{M}$ after five million iterations with 40$\%$ poor red(pink), 10$\%$ rich red (red), 40$\%$ rich blue (blue), and 10$\%$ poor blue (cyan) and a $46\%$ fraction of the urban sites.}\label{fig:sim3}
\end{figure*}

\begin{theorem}[\textbf{Centralized Urbanization of Poverty}] \label{urb1}
	If $\gamma > 16^{\frac{3(3\epsilon + 2)}{2\epsilon^2}}$ and $\lambda > 1$, with the centralized urban sites, when $n$ is sufficiently large, then for $\mathcal{M}$, configurations drawn from distribution $\pi$ have $\epsilon-$urbanization of poverty with probability at least $1 - \xi_{1}^{n}$, where $0< \xi_{1} < 1$.
\end{theorem}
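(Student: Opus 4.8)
The plan is to run a Peierls argument showing that the ``bad'' set $B = \{\sigma \in \Omega : p(\sigma) < \min\{c,p\}n - \epsilon n\}$ of configurations lacking $\epsilon$-urbanization of poverty (Definition~\ref{def_urb}) has stationary probability $\pi(B) \le \xi_1^n$. The core object is a weight-increasing, few-to-one map $f : B \to \Omega$ that relocates poor agents onto urban sites. First I would record two counting facts that hold for every $\sigma \in B$: since $p(\sigma) < \min\{c,p\}n - \epsilon n \le cn - \epsilon n$ and $|\mathcal{U}| = cn$, there are more than $\epsilon n$ rich agents on urban sites; and since the total number of poor is $pn$ with $p(\sigma) < \min\{c,p\}n - \epsilon n$, there are more than $\epsilon n$ poor agents off urban sites. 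Thus there is always ``room'' to improve $\sigma$ by swapping at least $\epsilon n$ rich-on-urban agents with poor-off-urban agents, and each such swap raises $p(\sigma)$ by one, multiplying the weight $\lambda^{-h(\sigma)}\gamma^{p(\sigma)}$ by a factor of $\gamma$.

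The key step is to control the change in the homophily term $\lambda^{-h(\sigma)}$, and this is exactly where the \emph{centralized} geometry is used so that $\lambda$ drops out. When enough of the rich-on-urban and poor-off-urban agents can be matched \emph{by color} (i.e.\ swapping a rich red on an urban site with a poor red off it, and likewise for blue), those swaps leave the color at every site unchanged, so $h(\sigma)$ is untouched and only $p(\sigma)$ grows. The difficulty is the degenerate case in which same-color matches run out --- say the urban hexagon is dominated by rich red agents while the poor outside it are mostly blue --- so that achieving urbanization forces recoloring of urban sites. Here I would exploit that the centralized urban region is a compact hexagon of perimeter $O(\sqrt{n})$: I recolor a \emph{compact} sub-blob of roughly $\epsilon n$ urban sites and route the evicted rich agents into a compact destination region among the poor-off-urban agents. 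Because both the recolored blob and its destination are compact, the net number of newly heterogeneous edges created is bounded by their perimeters, i.e.\ $\Delta h = O(\sqrt{n})$, rather than the $O(n)$ one would incur from arbitrary swaps. Consequently $\pi(f(\sigma))/\pi(\sigma) \ge \gamma^{\epsilon n}\,\lambda^{-O(\sqrt{n})}$, and since $\lambda^{O(\sqrt{n})}$ is subexponential it is swamped by $\gamma^{\epsilon n}$ for any fixed $\lambda > 1$ once $n$ is large --- matching the claim that $\lambda$ plays no role in the centralized case.

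Finally I would bound the number of preimages of each image $\nu = f(\sigma)$, which is the entropy side of the tradeoff. Since $f$ moves only $\Theta(\epsilon n)$ agents and places them into canonically described compact regions, $\sigma$ can be recovered from $\nu$ by specifying which $\Theta(\epsilon n)$ urban and off-urban agents were involved, giving at most roughly $\binom{n}{\Theta(\epsilon n)}^2 = 16^{\,\Theta(\epsilon\log(1/\epsilon)\,n)}$ preimages per image. Combining, $\pi(B) \le (\text{max preimages}) \cdot \lambda^{O(\sqrt{n})} \cdot \gamma^{-\epsilon n}$, and the hypothesis $\gamma > 16^{\,3(3\epsilon+2)/(2\epsilon^2)}$ is precisely what makes $\gamma^{\epsilon n}$ dominate the entropy factor, yielding a ratio $\xi_1^n$ with $0 < \xi_1 < 1$. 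The main obstacle, as flagged in the techniques discussion, is the color/wealth bookkeeping in the degenerate case: the map must preserve the cardinalities of all four groups $\mathcal{R}_{\rm p}, \mathcal{R}_{\rm r}, \mathcal{B}_{\rm p}, \mathcal{B}_{\rm r}$ simultaneously while keeping the relocations compact enough to hold $\Delta h = O(\sqrt{n})$ and keeping $f$ few-to-one, so the delicate work is in defining the recoloring blobs and verifying that they can always be realized inside the centralized hexagon.
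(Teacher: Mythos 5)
Your setup is sound and matches the paper's strategy in spirit: the two counting facts (more than $\epsilon n$ rich agents on urban sites and more than $\epsilon n$ poor agents off them), the observation that color-matched swaps raise $p(\sigma)$ at zero cost in $h$, and the recognition that $\lambda$ must be neutralized by an $O(\sqrt{n})$ boundary cost in the centralized geometry. The genuine gap is the degenerate case, and it is not a detail one can defer: the construction you sketch does not exist as described. The rich-on-urban and poor-off-urban agents sit at scattered, $\sigma$-dependent sites, so you cannot designate ``a compact sub-blob'' of them nor ``a compact destination region among the poor-off-urban agents'' --- if, say, $2\epsilon n$ poor agents are spread uniformly through the non-urban region, any compact region of $\epsilon n$ sites contains only $O(\epsilon^2 n)$ of them. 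If instead the moved agents keep their scattered positions and you backfill those sites, the backfilling agents generically have the wrong color (in the fully degenerate case, \emph{every} available agent does), so the color pattern changes at $\Theta(\epsilon n)$ scattered sites and $\Delta h = \Theta(\epsilon n)$, not $O(\sqrt{n})$; the gain $\gamma^{\epsilon n}\lambda^{-\Theta(\epsilon n)}$ is then useless for large $\lambda$, while the theorem demands a threshold for $\gamma$ depending only on $\epsilon$. The two natural repairs both abandon your claimed accounting: (a) rebuild whole regions into a canonical low-$h$ pattern, but then the image forgets $\sigma$'s scattered color/wealth arrangement and the preimage count jumps from $\binom{n}{\epsilon n}^2$ to $2^{\Theta(n)}$ with a constant independent of $\epsilon$ (this still proves the theorem under the stated hypothesis, but your claim that $\gamma > 16^{3(3\epsilon+2)/(2\epsilon^2)}$ ``precisely'' balances a binomial entropy factor is then false --- a binomial factor would need only $\gamma$ polynomial in $1/\epsilon$); or (b) flip entire monochromatic clusters so that $\Delta h \le O(\sqrt{n})$, but then inverting the map requires encoding the flipped contours, whose total length can be $\Theta(n)$, which forces exactly a contour-entropy factor of the form $4^{\Theta(z)}$ and a case analysis comparing $\lambda$ to that base.

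This unresolved bookkeeping is precisely what the paper's machinery exists to handle: the $\delta$-color-and-richness bridge system, the inversion maps $f_1,f_2$ that erase the arbitrarily scattered wealth and color contours, and the centralized recovery maps $f_3,f_4,f_5$ that rebuild canonical compact regions while restoring all four cardinalities $|\mathcal{R}_{\rm p}|,|\mathcal{R}_{\rm r}|,|\mathcal{B}_{\rm p}|,|\mathcal{B}_{\rm r}|$. The price is the bridge-entropy factor $9^{\alpha\sqrt{n}}\,4^{(\frac{3\delta+1}{4\delta})(z_{\rm c}+3n)}$ in Claim~\ref{C:2}, together with the two-case analysis on $\lambda$ versus $4^{\frac{3\delta+1}{4\delta}}$; the hypothesis $\gamma^{\delta/3} > 16^{\frac{3\delta+1}{4\delta}}$ with $\delta=\epsilon/2$ (i.e.\ your $16^{3(3\epsilon+2)/(2\epsilon^2)}$) is what pays for \emph{that} factor, not for a binomial one. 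So you have correctly located the difficulty, but your proposal does not resolve it: as written, the degenerate case of your map either fails for large $\lambda$ or requires a preimage count incompatible with your energy/entropy balance.
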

\noindent	To prove this theorem, we first define $\Omega_{\neg \text{urb}}$ to be the set of the configurations that do not have $\epsilon-$urbanization of poverty. It suffices to show $\pi( \Omega_{\neg \text{urb}}) \leq \xi_{1}^{n}$.
	We use Peierls argument using a mapping from non-urbanized configurations to urbanized configurations, along with appropriate \textit{bridging}, to show that the image of the map has an exponentially higher probability than their preimages. With careful counting, this lets us conclude that non-urbanized configurations are exponentially less likely than  urbanized ones, even though there are many more non-urbanized configurations and some of those configurations can have large probability weights in terms of $\lambda^{-h(\sigma)}$.

	While similar to \cite{miracle2011clustering, cannon2019local}, the addition of a wealth in the model requires significantly modifying the bridge systems to encode both race and wealth for each agent using a $\delta-$race\_wealth bridge system, as specified in Section 1.2 of SI.
	{Here we have to carefully account for  additional effects contributing the energy term because some configurations with large probability  $\gamma^{p(\sigma) - p(\nu)}$ can be mapped  to configurations with small weight $\lambda^{-(h(\sigma) - h(\nu))}$. }
	 This necessitates designing more careful mapping rules to balance $\frac{\pi(\sigma)}{\pi(\nu)}$ and $|f^{-1}(\nu)|$; see Section 1.3-1.5 of SI for details of the mapping.

	 \begin{proof}[Proof of Theorem \ref{urb1}]
	For any $\sigma \in \Omega_{\neg \text{urb}}$, we first construct a $\delta-$race\_wealth bridge system (see Section 1.2 of SI for definition and Figure 2a of SI for illustration) and define the mapping $f(\sigma) = (f_{5} {\circ} f_{4} {\circ} f_{3} {\circ} f_{2} {\circ} f_{1}) (\sigma)$, where $\psi = f_{1}(\sigma)$ is the richness inversion mapping (defined in Section 1.3 of SI and see Figure 2b in SI for illustration), and $\tau = f_2(\psi)$ is the color inversion mapping (defined in Section 1.3 of SI and see Figure 2b in SI for illustration). $\tau = (f_{2} {\circ} f_{1})(\sigma)$ eliminates the bridged racially heterogeneous edges and the bridged poor agents.
	For $(f_{5} {\circ} f_{4} {\circ} f_{3}) (\tau)$ (defined in Section 1.4 of SI, we first assume the urban sites are centralized, under which we recover the same ratios of each color and richness as in $\sigma$ in the centralized way defined in Section 1.5 of SI (also see Figure 3 in SI for illustrations). 
	Then the upper bounds of $h(\nu) - h(\sigma) \leq 3\alpha \sqrt{n}  - z_{\rm c}$ and $p(\sigma) - p(\nu) \leq - \delta n$ and $|f^{-1}(\nu)| \leq   (z_{\rm c}+1) 9^{\alpha \sqrt{n}}4^{(\frac{3\delta + 1}{4\delta})(z_{\rm c}+3n)} $ can be obtained from Claim 15 and 16 in Section 2 of SI. The color contour length $z_{\rm c}$ is defined in the bridge system (Section 1.2 of SI), which is the sum of the length of the contours separating the red (or pink) from the blue (or cyan) in $\sigma$ with a no more than $\delta-$fraction omission. Finally, substituting \eqref{pi} and our other bounds into the Peierls argument yields
	\begin{align*}
		\pi(\Omega_{\neg \text{urb}}) \nonumber &= \sum_{\sigma \in \Omega_{\neg \text{urb}}} \pi(\sigma) \leq \sum_{\nu \in \Omega} \sum_{\sigma \in f^{-1}(\nu)} \pi(\sigma)\\ &= \sum_{\nu \in \Omega} \pi(\nu) \frac{\sum_{\sigma \in f^{-1}(\nu)} \pi(\sigma)}{\pi(\nu)}\nonumber\\&\leq \sum_{\nu \in \Omega} \pi(\nu) \sum_{\sigma \in f^{-1}(\nu)} \lambda^{h(\nu) - h(\sigma)} \gamma^{p(\sigma) - p(\nu)} \nonumber \\
		&\leq  \sum_{\nu \in \Omega} \pi(\nu) \sum_{z_{\rm c} = \sqrt{r\cdot n}}^{3n} a(n) (z_{\rm c}+1) (\frac{4^{\frac{3\delta + 1}{4\delta}}}{\lambda})^{z_{\rm c}} (\frac{64^{\frac{3\delta + 1}{4\delta}}}{\gamma^{\delta}})^{n} ,
	\end{align*}
	where $a(n) \triangleq (3n + 1) \cdot 9^{\alpha\sqrt{n}}n^{6}$, $z_{\rm c} \leq 3n$ is because the color contour length is upper bounded by the sum of all edges of $G_{\triangle}$, and $z_{\rm c} \geq \sqrt{r \cdot n}$ is due to the triangular lattice geometry, which is proved in Lemma 2.1 in \cite{cannon2016markov}.
	
	If $\lambda \geq 4^{\frac{3\delta + 1}{4\delta}}$, as long as $\gamma^{\delta/3} >  4^{\frac{3\delta + 1}{4\delta}}$, the sum will be exponentially small for sufficiently large $n$. Or if $ 1 \leq \lambda < 4^{\frac{3\delta + 1}{4\delta}}$, the sum further yields $\pi(\Omega_{\neg \text{urb}})  \leq n^{6} \cdot (3n +1) \cdot 9^{\alpha\sqrt{n}} \cdot 3n \cdot (\frac{16^{\frac{3\delta + 1}{4\delta}}}{\lambda\gamma^{\delta/3}})^{3n}.$ As long as $\gamma^{\delta/3} > 16^{\frac{3\delta + 1}{4\delta}} \geq 16^{\frac{3\delta + 1}{4\delta}}/\lambda$, the sum will still be exponentially small for sufficiently large $n$. Combining the two cases, we can see that as long as $\gamma^{\delta/3} > 16^{\frac{3\delta + 1}{4\delta}}$ and $\lambda > 1$, $\pi(\Omega_{\neg \text{urb}}) \leq \xi_{1}^{n},$ for $\xi_{1} \in (0,1).$ Substituting $\delta = \epsilon /2$ into $\gamma^{\delta/3} > 16^{\frac{3\delta + 1}{4\delta}}$ yields Theorem~\ref{urb1}.
\end{proof}

In the above theorem, to realize the urbanization of poverty under the centralized urban sites placement,  it suffices to have $\lambda \cdot \gamma^{\delta/3} > 16^{\frac{3\delta + 1}{4\delta}}$, where the wealth bias and the racial bias both contribute to the urbanization of poverty.
In contrast, when urban sites are distributed, we find competing effects  between $\gamma$ and $\lambda$, whereby urbanization of poverty is achieved when $\gamma$ is strictly larger than $\lambda$.  See detailed proofs in Section 3 of SI.
\begin{corollary}[\textbf{Distributed Urbanization of Poverty}] \label{urb2}
	If $\gamma > 4^{\frac{3(3\epsilon + 2)}{2\epsilon^2}} \cdot \max\{\lambda^{6/\epsilon}, 4^{\frac{3(3\epsilon + 2)}{2\epsilon^2}} \}$ and $\lambda > 1$, with the distributed urban sites, when $n$ is sufficiently large, then for $\mathcal{M}$, configurations drawn from distribution $\pi$ have $\epsilon-$urbanization of poverty with probability at least $1 - \xi_{1}^{n}$, where $0< \xi_{1} < 1$. 
\end{corollary}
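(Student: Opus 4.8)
The plan is to mirror the Peierls argument of Theorem~\ref{urb1} almost verbatim, reusing the $\delta$-color-and-richness bridge system together with the inversion maps $f_1$ (richness inversion) and $f_2$ (color inversion), and replacing only the recovery stage $f_5\circ f_4\circ f_3$ with a version adapted to scattered urban sites. As in the centralized case it suffices to bound $\pi(\Omega_{\neg\text{urb}})$, and the first two maps are geometry-independent: $f_1$ and $f_2$ strip off the bridged racially heterogeneous edges and the bridged poor agents exactly as before, so the intermediate configuration $\tau=(f_2\circ f_1)(\sigma)$ still removes the $z_{\rm c}$ contour edges and still yields $p(\sigma)-p(\nu)\le-\delta n$. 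The hexagonal geometry entered the earlier proof only in Appendix~\ref{recovery_way}, where homogeneous agents could be repacked along a short boundary; that is precisely the step I would rewrite.

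The one quantity that changes is the cost of restoring the four cardinalities $|\mathcal{R}_{\rm p}|,|\mathcal{R}_{\rm r}|,|\mathcal{B}_{\rm p}|,|\mathcal{B}_{\rm r}|$ to their values in $\sigma$. Because the distributed urban sites are spread across the torus, the recovered agents cannot be clustered and may sit next to neighbors of the opposite color, so I would replace the centralized estimate $h(\nu)-h(\sigma)\le 3\alpha\sqrt{n}-z_{\rm c}$ by the distributed estimate
\[ h(\nu) - h(\sigma) \leq 3\alpha\sqrt{n} - z_{\rm c} + 3n, \]
valid since each of the at most $n$ repositioned agents meets at most six edges and every edge is double counted, so recovery creates at most $3n$ new heterogeneous edges. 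The bound $p(\sigma)-p(\nu)\le-\delta n$ is unaffected, and the preimage estimate $|f^{-1}(\nu)|\le (z_{\rm c}+1)\,9^{\alpha\sqrt{n}}\,4^{(\frac{3\delta+1}{4\delta})(z_{\rm c}+3n)}$ of Claims~\ref{C:1} and \ref{C:2} retains the same form (to be re-verified for the distributed recovery), since counting the ways to invert is a combinatorial quantity that does not see $\lambda$.

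Substituting into \eqref{pi} as in \eqref{urbproof1}, the extra $3n$ heterogeneous edges contribute a factor $\lambda^{3n}$, so the dominant geometric term becomes
\[ \left(\frac{4^{\frac{3\delta+1}{4\delta}}}{\lambda}\right)^{z_{\rm c}} \left(\frac{\lambda^{3}\,64^{\frac{3\delta+1}{4\delta}}}{\gamma^{\delta}}\right)^{n}. \]
I would then split exactly as in Theorem~\ref{urb1}: when $\lambda\ge 4^{\frac{3\delta+1}{4\delta}}$ the $z_{\rm c}$-factor is at most one and the $n$-factor is summably small once $\gamma^{\delta/3}>\lambda\cdot 4^{\frac{3\delta+1}{4\delta}}$; when $1\le\lambda<4^{\frac{3\delta+1}{4\delta}}$ I bound $z_{\rm c}\le 3n$, at which point the $\lambda^{-3n}$ from the $z_{\rm c}$-factor cancels the new $\lambda^{3n}$ and the requirement collapses to $\gamma^{\delta/3}>16^{\frac{3\delta+1}{4\delta}}$, independent of $\lambda$, just as in the centralized proof. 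Combining the two cases gives $\gamma^{\delta/3}>\max\{\lambda\cdot 4^{\frac{3\delta+1}{4\delta}},\,16^{\frac{3\delta+1}{4\delta}}\}$, and substituting $\delta=\epsilon/2$ and raising to the power $3/\delta=6/\epsilon$ reproduces the stated threshold $\gamma>4^{\frac{3(3\epsilon+2)}{2\epsilon^2}}\max\{\lambda^{6/\epsilon},\,4^{\frac{3(3\epsilon+2)}{2\epsilon^2}}\}$.

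The main obstacle is the distributed recovery step itself. Unlike the hexagonal case, where homogeneous agents are repacked along an $O(\sqrt{n})$ boundary, the scattered urban sites force $\Theta(n)$ new heterogeneous edges in the worst case, and the delicate point is to confirm that the preimage count still stays at $4^{O(z_{\rm c}+n)}$ rather than blowing up once the recovery no longer proceeds along a single contour. Checking that the $\lambda^{3n}$ energy penalty is exactly offset by the $\gamma^{\delta n}$ gain---so that $\gamma$ need only beat $\lambda$ by the power $6/\epsilon$ rather than by an uncontrolled amount---is what forces the wealth bias to strictly dominate the racial bias in the distributed regime, in sharp contrast to the centralized case where $\lambda$ drops out entirely.
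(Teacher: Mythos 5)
Your proposal is correct and follows essentially the same route as the paper's own proof: the paper likewise reuses the $\delta$-color-and-richness bridge system, the inversions $f_1,f_2$, and Claim~\ref{C:2} unchanged, swaps in the distributed recovery of Appendix~\ref{recovery_way}, and performs the identical two-case analysis yielding $\gamma^{\delta/3} > 4^{\frac{3\delta+1}{4\delta}}\max\{\lambda,\,4^{\frac{3\delta+1}{4\delta}}\}$ and hence the stated threshold after substituting $\delta=\epsilon/2$. The only cosmetic difference is that the paper bounds the heterogeneous edges created by the recovery by $2cn$ via the diamond-boundary geometry (Lemma~\ref{I:f5_h2}) and then relaxes this to $3n$, whereas you invoke the trivial bound of $3n$ directly; since the paper's final estimate \eqref{urbproof2} only uses $\lambda^{3n}$, this changes nothing.
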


On the other hand, when the incentives for the poor agents to occupy urban sites are small, urbanization of poverty will not occur.  In particular, 
we prove in Theorem~\ref{deurb} that for any $\lambda > 1$, if $\gamma > 1$ but is smaller than a threshold, it is exponentially unlikely we will observe urbanization of poverty at stationary under certain demographic parameter choices. See the proof details in Section 4 of SI.
\begin{theorem}[\textbf{Dispersion of Poverty}]\label{deurb}
	Given centralized urban sites, $p < c < p + \epsilon$, $r_{\rm p} < r_{\rm r} - \delta$ and $b_{\rm p} < b_{\rm r} - \delta$,  for any $\lambda > 1$, if $ 1< \gamma < (\frac{r-\delta}{r_{\rm p}})^{\frac{r_{\rm p}}{p}} (\frac{b-\delta}{b_{\rm p}})^{\frac{b_{\rm p}}{p}} / 2,$ when $n$ is sufficiently large, then for $\mathcal{M}$, configurations drawn from distribution $\pi$ have $\epsilon-$urbanization of poverty with probability at most $\xi_{2}^{n}$ for some constant $0< \xi_{2}< 1$ and $\delta = \frac{\epsilon}{2}.$ 
\end{theorem}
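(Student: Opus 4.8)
The plan is to bound the total stationary weight of the ``bad'' set $\Omega_{\text{urb}}$ of configurations that \emph{do} exhibit $\epsilon$-urbanization of poverty and to show $\pi(\Omega_{\text{urb}}) \le \xi_2^n$. Unlike Theorem~\ref{urb1}, I would not use a contour/Peierls map; the decisive observation is that $h(\sigma)$ depends only on the red/blue coloring and is invariant under permuting the wealth labels within each color class, whereas $p(\sigma)$ is not. Hence I would condition on the color pattern $C$ (the set of red sites): the factor $\lambda^{-h(C)}$ is constant across all wealth-assignments compatible with $C$, so it cancels, and it suffices to prove that for every fixed $C$,
\[
\frac{\sum_{\sigma \sim C,\ \sigma \in \Omega_{\text{urb}}} \gamma^{p(\sigma)}}{\sum_{\sigma \sim C} \gamma^{p(\sigma)}} \le \xi_2^n .
\]

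For a fixed $C$, write $U_R$ and $U_B$ for the numbers of red and blue urban sites. A wealth-assignment independently chooses which $r_{\rm p}n$ of the $rn$ red sites are poor and which $b_{\rm p}n$ of the $bn$ blue sites are poor, and $p(\sigma)=x+y$, where $x$ (resp. $y$) is the number of poor reds (resp. blues) on urban sites. The weighted partition sum therefore factorizes as a product of two $\gamma$-tilted Vandermonde/hypergeometric sums, $W = W_R\,W_B$ with $W_R = \sum_x \binom{U_R}{x}\binom{rn-U_R}{r_{\rm p}n-x}\gamma^{x}$ and an analogous expression for $W_B$. Since $\epsilon$-urbanization forces $s := r_{\rm p}n-x$ and $t := b_{\rm p}n-y$ to satisfy $s+t \le \epsilon n$, and in particular $s\le\epsilon n$ and $t\le\epsilon n$, I would bound the bad weight by the product of the two upper-tail sums (using nonnegativity of the terms), reducing the claim to showing that each per-color ratio $W_R^{\le\epsilon n}/W_R$ and its blue analogue is exponentially small.

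Each such ratio is an upper-tail probability for a $\gamma$-tilted hypergeometric variable, which I would control with a Chernoff/large-deviation estimate. The hypotheses $r_{\rm p}<r_{\rm r}-\delta$ and $b_{\rm p}<b_{\rm r}-\delta$ guarantee that the poor are a strict minority within each color, so at $\gamma=1$ the typical number of same-color poor on urban sites sits a constant below its maximum, leaving a genuine entropy gap between ``poor spread among the majority-rich same-color sites'' and ``poor concentrated on urban sites.'' The tilt by $\gamma$ shifts this mean upward, and the critical value is the $\gamma$ at which the tilted mean reaches the urbanized window; a Stirling estimate of the competing binomial counts should give the per-color thresholds $\tfrac{r-\delta}{r_{\rm p}}$ and $\tfrac{b-\delta}{b_{\rm p}}$. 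Combining the two factors, whose exponents $r_{\rm p}/p$ and $b_{\rm p}/p$ sum to one, then yields the stated weighted-geometric-mean threshold $\gamma < \bigl(\tfrac{r-\delta}{r_{\rm p}}\bigr)^{r_{\rm p}/p}\bigl(\tfrac{b-\delta}{b_{\rm p}}\bigr)^{b_{\rm p}/p}/2$, with $\delta=\epsilon/2$ absorbing the tolerance and the factor $1/2$ absorbing the polynomially many terms in the tail sums.

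The main obstacle I anticipate is making the per-color tail bound uniform over all color patterns $C$ simultaneously, i.e. over every admissible split $(U_R,U_B)$ with $U_R+U_B=cn$ subject to $p<c<p+\epsilon$. Near-degenerate patterns, where one color nearly fills the central hexagon so that one marginal tail is trivially $1$ and all the decay must come from the other color, are the delicate case, and it is precisely here that the minority conditions $r_{\rm p}<r_{\rm r}-\delta$ and $b_{\rm p}<b_{\rm r}-\delta$ are needed to keep the surviving factor's entropy gap bounded away from zero. Verifying that the crude closed-form threshold dominates the sharp Chernoff rate uniformly across all such splits, and not merely at the balanced split, is the step I expect to require the most care.
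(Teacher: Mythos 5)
Your factorization observation is sound, and it actually mirrors the structure of the paper's own proof: the paper's map $d=(d_2\circ f_1)$ (richness inversion under a $\delta$-richness bridge system, then random re-labeling of wealth within each color class) never changes any agent's color, so it too operates within a fixed color pattern, and the cancellation of $\lambda^{-h}$ is the same structural fact. The genuine gap is your reduction itself: the uniform per-pattern bound you reduce to is \emph{false} under the stated hypotheses. Take $r_{\rm p}=0.1$, $r_{\rm r}=0.5$, $b_{\rm p}=0.05$, $b_{\rm r}=0.35$ (so $p=0.15$, $r=0.6$, $b=0.4$), $\epsilon=0.1$, $\delta=\epsilon/2=0.05$, $c=0.2$: all hypotheses of Theorem \ref{deurb} hold, and the stated threshold is $(5.5)^{2/3}(7)^{1/3}/2\approx 2.98$. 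Condition on any pattern $C$ in which all $cn$ urban sites are red (possible since $c<r$). Then $y\equiv 0$ and, since $\min\{c,p\}=p$, $\epsilon$-urbanization is exactly the event $x\ge(p-\epsilon)n=(r_{\rm p}+b_{\rm p}-\epsilon)n=0.05n$, which is nonempty precisely because $b_{\rm p}<\epsilon$. The conditional law of $x$ is your $\gamma$-tilted hypergeometric with parameters $(rn,\,U_R=cn,\,r_{\rm p}n)$, whose mode $\chi^{*}n$ solves
\begin{align*}
(c-\chi)(r_{\rm p}-\chi)\,\gamma \;=\; \chi\,(r-c-r_{\rm p}+\chi),
\end{align*}
and at $\chi=p-\epsilon=0.05$ equality holds at $\gamma=7/3$. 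Hence for every $\gamma\in(7/3,\,2.98)$ the mode lies strictly inside the bad region, and by log-concavity of the tilted weights the conditional probability of urbanization is $1-e^{-\Theta(n)}$, not $e^{-\Theta(n)}$. This failure holds for \emph{every} $\lambda>1$, since by your own cancellation the conditional law is $\lambda$-free. So the statement you plan to prove by Chernoff bounds is simply false, and the minority conditions cannot rescue it: $b_{\rm p}<b_{\rm r}-\delta$ is fully compatible with $b_{\rm p}<\epsilon$, which is exactly what makes the blue marginal's tail trivial and hands the entire burden to the red factor, whose critical tilt ($7/3$ here) sits below the stated threshold.

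Because of this, the theorem (to the extent it holds) can only be saved by showing that such patterns carry exponentially little weight in the color marginal $\propto\lambda^{-h(C)}W_R(C)W_B(C)$ --- a comparison \emph{across} patterns, which is precisely the information your conditioning discards, and which is itself delicate: for large $\lambda$ the marginal concentrates on segregated patterns, and among these the tilt $W_R$ favors the red blob parked on the urban hexagon (since $r_{\rm p}/r>b_{\rm p}/b$), so the offending patterns are not automatically negligible. The paper never attempts per-pattern tail asymptotics; it balances the worst-case energy ratio $\gamma^{pn}$ and the preimage count $2^{pn}e^{o(n)}$ (this $2^{pn}$, not a polynomial absorption, is where the factor $1/2$ in the threshold comes from) against an image count of at least $\bigl(\tfrac{r-\delta}{r_{\rm p}}\bigr)^{(r_{\rm p}-\delta)n}\bigl(\tfrac{b-\delta}{b_{\rm p}}\bigr)^{(b_{\rm p}-\delta)n}$, using Remark \ref{R:w} to bound the richness contour length. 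Note the $\delta$-shifted exponents: what that argument actually needs is $\gamma^{p}2^{p}<\bigl(\tfrac{r-\delta}{r_{\rm p}}\bigr)^{r_{\rm p}-\delta}\bigl(\tfrac{b-\delta}{b_{\rm p}}\bigr)^{b_{\rm p}-\delta}$, strictly stronger than the stated hypothesis (in the example above, with $b_{\rm p}=\delta$, the shifted condition admits no $\gamma>1$ at all). So the case you flagged as ``requiring the most care'' is not a technicality to be verified: it is exactly where your reduction breaks, and it is also the point that the paper's own write-up glosses over.
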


\section{Urbanized Racial Segregation}
\label{summary1}
Next, we explore conditions that lead to {\it urbanized racial segregation,} where the large regions in the urban sites are occupied by poor agents of predominantly one race. 
We define \textbf{$(\beta, \delta)-$segregation} as follows.
\begin{definition}\label{segjust}
	For $\beta > 4\sqrt{r}$ and $\delta \in (0, \frac{1}{2})$, a city configuration $\sigma$ is said to be \textbf{$(\beta, \delta)-$segregated} if there exists a subset of agents $R$ such that:
	\begin{itemize}
		\item there are at most $\beta\sqrt{n}$ racially heterogeneous edges of $\sigma$ with exactly one endpoint in $R$;
		\item the number of red agents in $R$ is at least $r n - \delta n$.
	\end{itemize}	
\end{definition}
\noindent The parameter $\delta$ is the tolerance for having agents of the other color within the red region $R$, with smaller $\delta$ corresponding to a increased segregation.
If one color class has fewer than $\delta n$ agents, then the entire configuration space will be $(\beta, \delta)-$segregated, with $R = \varnothing,$ or $\bar{R} = \varnothing.$  We require that each color class has more than $\delta n$ agents and, accordingly, we need $\delta < 1/2.$
The parameter $\beta$ controls how small the boundary is between the red region $R$ and the rest of the configuration, and the minimal value $4\sqrt{r}$ corresponds to the extremal case where the red region forms a homogeneous hexagonal cluster.   We say that a configuration $\sigma$ is {\bf integrated} if the city is not $(\beta, \delta)$-segregated for any $\beta$ and $\delta$.

\begin{definition} \label{urbanized red} 
	For $\beta > 4\sqrt{r},$ and $\epsilon \in (0, \frac{1}{2})$, we say a city has \textbf{$(\beta, \epsilon)-$urbanized segregation} if it is both \textit{$(\beta, \epsilon)-$segregated} and has \textit{$\epsilon-$urbanization of poverty}.
\end{definition}
Like Definition~\ref{segjust}, $\beta$ here controls how small the boundary is between the red region $R$ and the rest of the city, while $\epsilon$ expresses the tolerance for having agents of the wrong color within the monochromatic color regions or having rich agents on the urban sites.
In the following theorem, we show that for large enough $\lambda$ and $\gamma$,  with high probability, $\mathcal{M}$ leads to urbanized segregation. See Figure~\ref{fig:sim1c} for simulated visualizations.

\begin{theorem} [\textbf{Urbanized Racial Segregation}]\label{urb red thm}
	With centralized urban sites, $\lambda > 3^{\frac{\alpha}{\beta}} 4^{\frac{3\delta+1}{4\delta}}$, $\gamma^{\delta/3} > 4^{\frac{3\delta+1}{4\delta}}$, and $n$ sufficiently large, configurations from $\mathcal{M}$ drawn from distribution $\pi$ have $(\beta, \epsilon)-$urbanized  segregation  with probability at least $1 - \xi^{\sqrt{n}}$ for some constant $0< \xi < 1,$ and $\delta = \frac{\epsilon}{2}.$
\end{theorem}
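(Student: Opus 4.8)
The plan is to separate the two defining requirements of $(\beta,\epsilon)$-urbanized racial segregation in Definition~\ref{urbanized red} and bound each failure probability by its own Peierls argument, combining them with a union bound. Write $A$ for the event that a configuration is $(\beta,\epsilon)$-racially segregated (Definition~\ref{segjust}) and $B$ for the event of $\epsilon$-urbanization of poverty (Definition~\ref{def_urb}); since the theorem asks for $A\cap B$, it suffices to show $\pi(A^{c})+\pi(B^{c})\le\xi^{\sqrt n}$. The urbanization piece comes essentially for free: the hypotheses $\gamma^{\delta/3}>4^{\frac{3\delta+1}{4\delta}}$ and $\lambda>3^{\alpha/\beta}4^{\frac{3\delta+1}{4\delta}}\ge 4^{\frac{3\delta+1}{4\delta}}$ place us precisely in the large-$\lambda$ branch inside the proof of Theorem~\ref{urb1} (the case $\lambda\ge 4^{\frac{3\delta+1}{4\delta}}$, where the urbanization threshold collapses to exactly $\gamma^{\delta/3}>4^{\frac{3\delta+1}{4\delta}}$), so $\pi(B^{c})=\pi(\Omega_{\neg\text{urb}})\le\xi_{1}^{\,n}$. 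All the remaining work is in bounding $\pi(A^{c})$.

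For the segregation bound I would run a Peierls argument on the \emph{color} contours alone, reusing the $\delta$-color-and-richness bridge system of Appendix~\ref{A:crbridge}. Let $\Omega_{\neg\text{seg}}$ be the configurations failing Definition~\ref{segjust}; for each such $\sigma$ every candidate red region $R$ has boundary exceeding $\beta\sqrt n$, so the total color contour length obeys $z_{\rm c}\ge\beta\sqrt n$. I build a map $g(\sigma)=\nu$ that first performs the color inversion $f_{2}\circ f_{1}$ to erase the bridged heterogeneous edges (cutting $h$ by an amount of order $z_{\rm c}$) and then applies a recovery, analogous to $f_{3},f_{4},f_{5}$, restoring the four cardinalities (red/blue $\times$ rich/poor). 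The color-bridge placements contribute entropy of order $3^{\alpha\sqrt n}$ and the local richness recovery contributes $4^{(\frac{3\delta+1}{4\delta})\,O(z_{\rm c})}$. Using $3^{\alpha\sqrt n}\le(3^{\alpha/\beta})^{z_{\rm c}}$ for $z_{\rm c}\ge\beta\sqrt n$ and dividing by the $\lambda^{\,z_{\rm c}}$ gain, the per-unit-contour base becomes $3^{\alpha/\beta}4^{\frac{3\delta+1}{4\delta}}/\lambda$; summing the geometric series over $z_{\rm c}\ge\beta\sqrt n$ under $\lambda>3^{\alpha/\beta}4^{\frac{3\delta+1}{4\delta}}$ yields $\pi(A^{c})\le\xi_{3}^{\sqrt n}$. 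This folding is exactly what produces both the $3^{\alpha/\beta}$ factor in the $\lambda$-threshold and the $\sqrt n$ (rather than $n$) in the final exponent, since the cheapest non-segregated defect already costs contour length $\beta\sqrt n$.

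The crux, and the one genuinely new difficulty compared with Theorem~\ref{urb1}, is the energy-energy tradeoff in the recovery: the color inversion and cardinality restoration must not generate a $\gamma^{p(\sigma)-p(\nu)}$ penalty. I would design the recovery so that it never displaces a poor agent off an urban site, guaranteeing $p(\nu)\ge p(\sigma)$ and hence $\gamma^{p(\sigma)-p(\nu)}\le 1$, which removes the $\gamma$ factor from the Peierls sum for $A^{c}$ and leaves a purely $\lambda$-driven bound. This is feasible precisely because the urban sites are centralized as a single dense hexagon: the color contours and their bridges live throughout the lattice, so the re-colorings and the richness swaps needed to restore $|\mathcal R_{\rm p}|,|\mathcal R_{\rm r}|,|\mathcal B_{\rm p}|,|\mathcal B_{\rm r}|$ can be confined to agents near the contours, leaving the interior occupation of the urban hexagon, and thus $p$, untouched. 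The delicate point is keeping this recovery local: restoring all four counts simultaneously must cost only $O(z_{\rm c})$ richness entropy rather than the global $O(n)$ shuffle used in Theorem~\ref{urb1}, since that locality is what preserves the $\sqrt n$ exponent.

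Finally I would combine the two estimates, $\pi((A\cap B)^{c})\le\pi(A^{c})+\pi(B^{c})\le\xi_{3}^{\sqrt n}+\xi_{1}^{\,n}\le\xi^{\sqrt n}$ for a suitable constant $\xi\in(0,1)$ once $n$ is large, which is the claimed bound. The main obstacle I anticipate is the recovery design of the third paragraph: simultaneously (i) restoring all four race-wealth cardinalities, (ii) keeping the associated entropy local to the contour so the exponent stays $\sqrt n$, and (iii) never moving a poor agent off the urban hexagon so the $\gamma$-term stays $\le 1$. Verifying that these three constraints are mutually satisfiable, which amounts to a refinement of the counting in Claims~\ref{C:1} and~\ref{C:2}, is where the careful case analysis will concentrate.
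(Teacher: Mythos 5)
Your skeleton is sound: splitting the failure event into an urbanization part and a segregation part, discharging the urbanization part through the large-$\lambda$ branch of Theorem~\ref{urb1} (where the threshold is exactly $\gamma^{\delta/3}>4^{\frac{3\delta+1}{4\delta}}$), and folding $3^{\alpha\sqrt{n}}\le(3^{\alpha/\beta})^{z_{\rm c}}$ for $z_{\rm c}\ge\beta\sqrt{n}$ into a geometric series to get the $\xi^{\sqrt{n}}$ rate all match the paper, which uses the disjoint decomposition $\Omega\setminus U_{\beta,\epsilon}=\Omega_{\neg\text{urb}}\sqcup\Omega_{\text{urb}\wedge\neg\text{seg}}$ rather than your union bound. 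The genuine gap is in your map for the segregation part. You begin with $f_{2}\circ f_{1}$, i.e.\ you perform the richness inversion, and then assert that the richness side of the bookkeeping can be kept ``local to the contour,'' costing only $4^{(\frac{3\delta+1}{4\delta})O(z_{\rm c})}$. This cannot work: inverting $f_{1}$ requires recording a $\delta$-richness bridge system, whose count is of order $4^{\frac{3\delta+1}{4\delta}z_{\rm r}}$ (Lemma~\ref{I:rbridge}) with $z_{\rm r}$ as large as $3n$, and the wealth pattern of a configuration failing Definition~\ref{segjust} is arbitrary and uncorrelated with the color contours --- it can be a wealth checkerboard with richness contours of total length $\Theta(n)$, so no invertible scheme can compress that information into $O(z_{\rm c})+O(\sqrt{n})$ bits. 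Since your energy gain is only $\lambda^{z_{\rm c}}$ with $z_{\rm c}$ possibly just $\beta\sqrt{n}$, a $\Theta(n)$ entropy term makes the Peierls sum diverge. Your third paragraph also contradicts your second: the requirement that no poor agent be displaced off an urban site (so that $\gamma^{p(\sigma)-p(\nu)}\le 1$) is violated by $f_{1}$ itself, which turns every bridged poor agent rich.

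The resolution the paper adopts --- and which your constraints (i)--(iii) are implicitly pointing at --- is to never touch wealth at all. In Claim~\ref{C:12} the map is $g=g_{3}\circ g_{2}\circ f_{2}$: only a $\delta$-\emph{color} bridge system is constructed, only color bits are ever flipped, and positions and richness bits stay fixed, so $p(\nu)=p(\sigma)$ identically, the $\gamma$ factor is exactly $1$, and no richness entropy appears anywhere. The four cardinalities are restored by two deterministic layer-by-layer sweeps: $g_{2}$ from the center of the urban hexagon (fixing the pink/cyan counts) and $g_{3}$ from just outside the urban boundary (fixing the red/blue counts), with a compensation rule that exchanges a pink for a cyan at the stopping location of $g_{2}$ whenever $g_{3}$'s sweep disturbs the poor-red count; this gives $h(\nu)-h(\sigma)\le 2\alpha\sqrt{n}-z_{\rm c}$ and $p(\sigma)-p(\nu)\le 0$ (Claim~\ref{C:4}) together with $|g^{-1}(\nu)|\le n^{3}(z_{\rm c}+1)3^{\alpha\sqrt{n}}4^{\frac{3\delta+1}{4\delta}z_{\rm c}}$ (Claim~\ref{C:5}). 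Note finally that the paper's decomposition, not yours, is what makes this sweep geometry clean: its second Peierls argument only ever sees \emph{urbanized} configurations, so the poor are known to occupy the central hexagon, $g_{2}$ meets mostly poor agents and $g_{3}$ mostly rich ones. Your union bound would force the same map to handle arbitrary wealth placements, where controlling the number of compensations and the added boundary length is substantially harder.
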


\begin{proof}[Proof of Theorem \ref{urb red thm}]
	First, we define $U_{\beta, \epsilon} \subset \Omega$ to be the configurations that have $(\beta, \epsilon)-$urbanized segregation. To prove Theorem \ref{urb red thm}, it suffices to prove $\pi(\Omega \setminus U_{\beta, \epsilon}) \leq \xi^{\sqrt{n}},$ where $\xi \in (0, 1).$ 
	We can further divide $\Omega \setminus U_{\beta, \epsilon}$ into two parts: $\Omega_{\neg \text{urb}}$ that do not have $\epsilon-$urbanization of poverty, and $\Omega_{\text{urb} \wedge \neg \text{seg}}$ that have $\epsilon-$urbanization  of poverty and do not have $(\beta, \epsilon)-$segregation. Thus it suffices to prove $\pi( \Omega_{\neg \text{urb}}) \leq \xi_{1}^{n}$ and $\pi(\Omega_{\text{urb} \wedge \neg \text{seg}}) \leq \xi_{0}^{\sqrt{n}}$, for $0 < \xi_{1}, \xi_{0} < 1$. 
	It follows from the proof of Theorem \ref{urb1} that if $\lambda \geq 4^{\frac{3\delta+1}{4\delta}}$, as long as $\gamma^{\delta/3} > 4^{\frac{3\delta+1}{4\delta}},$ $\pi( \Omega_{\neg \text{urb}}) \leq \xi_{1}^{n}$.
	It is proved in Claim 20 in Section 5 of SI that If $\lambda > 3^{\frac{\alpha}{\beta}} 4^{\frac{3\delta+1}{4\delta}}$, $\pi(\Omega_{\text{urb} \wedge \neg \text{seg}}) \leq \xi_{0}^{\sqrt{n}}$ for some $\xi_0 \in (0, 1).$
	Combining the two parts, to have $\pi(\Omega \setminus S_{\beta, \epsilon})$ to be exponentially small for large $n$, it suffices to have $\lambda > 3^{\frac{\alpha}{\beta}} 4^{\frac{3\delta+1}{4\delta}}$ and $\gamma^{\delta/3} > 4^{\frac{3\delta+1}{4\delta}}$.
\end{proof}

To complement Theorem~\ref{urb red thm},  we prove that for large enough $\gamma$ but $\lambda>1$ below a threshold, we will  likely observe urbanization of poverty but {\it racial integration} outside the urban area under certain demographic parameter choices. The proof technique is very similar to the proof of Theorem~\ref{deurb}. See Section 6 of SI for proof details.  A special case of Theorem \ref{mix2} is shown in Remark~\ref{seg_not_hom}, where segregation of poor red agents occurs inside the urban area and racial integration occurs outside. See Figure~\ref{fig:sim1d} for simulated visualizations.
\begin{theorem}[\textbf{Coexistence of Urbanization and Racial Integration}] \label{mix2}
	With the centralized urban sites, for the demographics choices such that $(\frac{p-\delta}{r_{\rm p}})^{r_{\rm p}} (\frac{1-p-\delta}{r_{\rm r}})^{r_{\rm r}} > 2^r$,  if $ 1< \lambda^3 < (\frac{p-\delta}{b_{\rm p}})^{b_{\rm p}} (\frac{1-p-\delta}{b_{\rm r}})^{b_{\rm r}} / 2^r,$ and $\gamma^{\delta/3} > 16^{\frac{3\delta + 1}{4\delta}},$ when $n$ is sufficiently large, then for $\mathcal{M}$, configurations drawn from distribution $\pi$ have $\epsilon-$urbanization of poverty and are integrated outside the urban area with probability at least $1 - \xi_{3}^{n}$ for some constant $0< \xi_{3}< 1$ and $\delta = \frac{\epsilon}{2}.$  
\end{theorem}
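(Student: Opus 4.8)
The plan is to mirror the two-part structure used in the proof of Theorem~\ref{urb red thm}: write the complement of the target event as a union of two pieces and bound each separately. Let $I_{\beta,\epsilon}\subset\Omega$ denote the configurations that simultaneously have $\epsilon$-urbanization of poverty and $(\beta,\delta)$-racial integration outside the urban area. Then
\[
\Omega\setminus I_{\beta,\epsilon}\;=\;\Omega_{\neg\text{urb}}\;\cup\;\Omega_{\text{urb}\wedge\neg\text{int}},
\]
where $\Omega_{\neg\text{urb}}$ lacks $\epsilon$-urbanization of poverty and $\Omega_{\text{urb}\wedge\neg\text{int}}$ has urbanization of poverty but fails to be integrated on the non-urban sites. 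Since the hypothesis $\gamma^{\delta/3}>16^{\frac{3\delta+1}{4\delta}}$ is exactly the urbanization threshold of Theorem~\ref{urb1} (after substituting $\delta=\epsilon/2$) and the urban sites are centralized, the first piece already satisfies $\pi(\Omega_{\neg\text{urb}})\le\xi_1^{n}$ for some $\xi_1\in(0,1)$, with no constraint on $\lambda$. Everything therefore reduces to showing $\pi(\Omega_{\text{urb}\wedge\neg\text{int}})\le\xi^{n}$.

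For the second piece I would use the entropy-versus-energy counting argument in the spirit of Theorem~\ref{deurb} (and of the integration direction in \cite{cannon2019local}), rather than a Peierls map, since here $\lambda$ lies \emph{below} a threshold and we are proving that order is \emph{unlikely}. The key structural observation is that on $\Omega_{\text{urb}}$ the centralized urban hexagon holds all but $\epsilon n$ of the poor; because $p<c<p+\epsilon$, this forces the non-urban region to be filled, up to $\epsilon n$ slack, by rich agents of both colors. Hence a failure of $(\beta,\delta)$-integration outside is witnessed by a subset $R$ of non-urban sites that is predominantly one color, has at least $(\text{color})\,n-\delta n$ agents, and has boundary at most $\beta\sqrt n$. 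I would enumerate these witnesses, and for each compare the total $\pi$-weight of urbanized configurations realizing it against a reference family of urbanized, \emph{fully integrated} configurations, using Stirling estimates for the number of arrangements of each color-and-wealth class consistent with the constraints.

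The heart of the computation is a bound of the shape $(\text{weight of a segregated-outside class})/(\text{weight of the integrated reference})\le(\text{energy factor})\cdot(\text{entropy factor})^{-1}$, where the energy factor is at most $\lambda^{\Theta(n)}$, rewarding at a per-agent rate up to $\lambda^{3}$ the heterogeneous edges saved by clustering, and the entropy factor counts the exponentially many ways to interleave the two colors subject to the wealth constraint. A contour/subset enumeration for $R$ contributes at most $2^{rn}$ choices, which is the source of the $2^r$ in the hypotheses, while the multinomial counts of distributing poor and rich agents of each color produce the products $\bigl(\tfrac{p-\delta}{b_{\rm p}}\bigr)^{b_{\rm p}n}\bigl(\tfrac{1-p-\delta}{b_{\rm r}}\bigr)^{b_{\rm r}n}$ and its red analogue. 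The condition $\lambda^{3}<\bigl(\tfrac{p-\delta}{b_{\rm p}}\bigr)^{b_{\rm p}}\bigl(\tfrac{1-p-\delta}{b_{\rm r}}\bigr)^{b_{\rm r}}/2^{r}$ is precisely what makes the blue entropy dominate the blue energy gain, so no large blue cluster survives; the unconditional inequality $\bigl(\tfrac{p-\delta}{r_{\rm p}}\bigr)^{r_{\rm p}}\bigl(\tfrac{1-p-\delta}{r_{\rm r}}\bigr)^{r_{\rm r}}>2^{r}$ plays the same role for red and holds for \emph{every} $\lambda>1$, so no large red cluster survives either. Summing the resulting geometric series over all witness sizes gives $\pi(\Omega_{\text{urb}\wedge\neg\text{int}})\le\xi_3^{n}$, and combining with the first piece yields the theorem.

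I expect the main obstacle to be the bookkeeping that couples the wealth constraint to the color counting: because urbanization pins essentially all poor agents inside the hexagon, the pool of agents available to build a monochromatic cluster outside is restricted, so one must count arrangements consistent with urbanization \emph{and} a hypothesized segregation at once, then match them bijectively (up to a controlled multiplicity) to the integrated reference set without double-counting or losing the $\epsilon n$ slack. Making the Stirling exponents line up exactly with the stated products and with the $2^r$ contour factor, so that the two demographic hypotheses emerge as the sharp thresholds for red and for blue respectively, is where the delicate part of the argument lives.
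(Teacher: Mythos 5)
Your overall skeleton does match the paper's: the same two-way decomposition of the complement into $\Omega_{\neg\text{urb}}$ and $\Omega_{\text{urb}\wedge\text{seg}}$, the first piece dispatched by Theorem~\ref{urb1} exactly as you say, and the second piece handled by a one-to-many entropy-versus-energy counting argument in the style of Theorem~\ref{deurb}. But two steps in your plan for the second piece are genuinely flawed. The most serious one is your reading of the two hypotheses: you claim the $\lambda$-free inequality $(\frac{p-\delta}{r_{\rm p}})^{r_{\rm p}}(\frac{1-p-\delta}{r_{\rm r}})^{r_{\rm r}}>2^{r}$ by itself rules out large red clusters ``for every $\lambda>1$.'' With only two colors this cannot be right: a large predominantly red region and a large predominantly blue region are complementary descriptions of the same configuration, so red and blue segregation are not separate events that can be killed by separate conditions, and for $\lambda$ large enough (with $\gamma$ large) Theorem~\ref{urb red thm} shows segregation \emph{does} occur. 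No demographic condition alone can defeat the energy term, which grows with $\lambda$. In the paper's proof there is a single map and a single sufficient condition, $\lambda^{3n}2^{rn}<\bigl((\frac{p-\delta}{r_{\rm p}})^{r_{\rm p}-\delta}(\frac{1-p-\delta}{r_{\rm r}})^{r_{\rm r}-\delta}\bigr)^{n}$, i.e.\ an upper bound on $\lambda^{3}$; the displayed demographic inequality serves only to make that threshold exceed $1$ so the admissible window for $\lambda$ is non-empty. (Relatedly, the $2^{rn}$ is not a witness-enumeration count as you assert: the small-boundary region costs only $\exp(O(\sqrt{n}))$ via the $\delta$-color bridge system; $2^{rn}$ is the preimage multiplicity of the random recoloring map $s_2$, recording which red agents of the image were flipped. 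The appearance of both the red and blue products across statement and proof is an internal inconsistency of the paper, not evidence of two independent mechanisms.)

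The second flaw is that the coupling between urbanization and color counting, which you identify as the crux and for which you import the assumption $p<c<p+\epsilon$ (not a hypothesis of Theorem~\ref{mix2}), never arises in the paper's argument and does not need to be solved. The paper's map for the second piece is $s=s_2\circ f_2$: a color inversion driven by a $\delta$-color bridge system, followed by random flips of cyan to pink and blue to red. This map alters only color bits; every agent's wealth and location are unchanged, so $p(\nu)=p(\sigma)$ and the $\gamma$-term cancels exactly from the weight ratio (Claim~\ref{C:9}). The comparison on $\Omega_{\text{urb}\wedge\text{seg}}$ is therefore a pure color entropy/energy computation, with urbanization riding along for free. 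Your plan — counting arrangements jointly consistent with urbanization and a hypothesized segregation, then matching them to an integrated reference family — would force you to control how the wealth constraint distorts both multinomial counts (exactly the bookkeeping you flag as the obstacle) and would prove a weaker statement conditioned on $c$. To repair your proposal: drop the joint counting, keep wealth frozen under the map, and derive the single red-based threshold on $\lambda^{3}$, with the demographic hypothesis used only for non-vacuousness.
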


\begin{remark} \label{seg_not_hom}
	As a special case when the size of the urban sites can roughly accommodate all of poor agents, where $p < c < p + \epsilon,$ if the demographics satisfies Theorem \ref{mix2} with $b_{\rm p} \leq m \epsilon$, then under the same bias parameter choices as Theorem \ref{mix2}, then with high probability the stationary configuration will have urbanized segregation of poor red agents, where the density of poor red on the urban sites is at least $1 - (m+1)\epsilon,$ and racial integration of the rich outside the urban area.
\end{remark}

\begin{proof}[Proof of Theorem \ref{mix2}]
    	We define $\Omega_{\rm{urb} \wedge \neg \text{seg}} \subset \Omega$ to be the configurations that have $\epsilon-$urbanization of the poor and $(\beta, \delta)-$integration outside the urban area. 
	It suffices to show that with all but exponentially small probability, a sample drawn from \eqref{pi} is not in $\Omega_{\rm{urb} \wedge \neg \text{seg}} $: $\pi(\Omega \setminus \Omega_{\rm{urb} \wedge \neg \text{seg}} ) \leq \xi_3^{\sqrt{n}},$ where $\xi_3 \in (0, 1)$ and $n$ is sufficiently large.

	We can further divide the configuration space $\Omega \setminus \Omega_{\rm{urb} \wedge \neg \text{seg}}$ into two parts: the set of configurations $\Omega_{\neg \text{urb}}$ that do not have $\epsilon-$urbanization of poverty, and the set of configurations $\Omega_{\text{urb} \wedge \text{seg}}$ that have $\epsilon-$urbanization  of poverty and $(\beta, \delta)-$segregation. Since $\Omega \setminus \Omega_{\rm{urb} \wedge \neg \text{seg}} = \Omega_{\neg \text{urb}} + \Omega_{\text{urb} \wedge \text{seg}}$, to prove $\pi(\Omega \setminus \Omega_{\rm{urb} \wedge \neg \text{seg}}) \leq \xi_3^{\sqrt{n}},$ it suffices to prove $\pi( \Omega_{\neg \text{urb}}) \leq \xi_{1}^{n}$ and $\pi(\Omega_{\text{urb} \wedge \text{seg}}) \leq \xi_{0}^{\sqrt{n}}$, for some constant $0 < \xi_{1}, \xi_{0} < 1$ and sufficiently large $n$.
	
	It follows from Theorem \ref{urb1} that for $\gamma^{\delta/3} > 16^{\frac{3\delta + 1}{4\delta}}$ and $\lambda > 1$, $\pi(\Omega_{\neg \text{urb}}) \leq \xi_1^{n}$ for some $\xi_1 \in (0, 1).$ 
	To prove the second part $\pi(\Omega_{\text{urb} \wedge \text{seg}}) \leq \xi_{0}^{\sqrt{n}}$, for each $\sigma \in \Omega_{\text{urb} \wedge \text{seg}},$ we construct a $\delta-$color bridge system (see Section 1.2 of SI for definition). Then we define the mapping $s = (s_2 \circ f_2)(\cdot)$: we do the color inversion and obtain $\tau = f_2(\sigma);$ next for $\tau$, we randomly flip the cyan to pink until the right number of the pink, and we randomly flip the blue to red until the right number of the red and obtain $\nu = s_2(\tau).$
	
	Finally, we define a weighted bipartite graph $G(\Omega_{\text{urb} \wedge \text{seg}}, \Omega,E)$ with an edge of weight $\pi(\sigma)$ between $\sigma \in \Omega_{\text{urb} \wedge \text{seg}}$ and $\nu \in \Omega$. The total weight of edges is
	\begin{align}\label{LHS2}
		\sum_{\sigma \in \Omega_{\text{urb} \wedge \text{seg}}} &\pi(\sigma) \cdot |S(\sigma)| \nonumber \\ &\geq \pi(\Omega_{\text{urb} \wedge \text{seg}})  (\frac{p-\delta}{r_{\rm p}})^{(r_{\rm p}-\delta)n} (\frac{1-p-\delta}{r_{\rm r}})^{(r_{\rm r}-\delta)n}.
	\end{align}
	On the other hand, the weight of the edges is at most 
	\begin{align} \label{RHS2}
		&\sum_{\nu \in \Omega} \sum_{\sigma \in s^{-1}(\nu)}\max_{\sigma \in \Omega_{\text{urb} \wedge \text{seg}}}{\pi(\sigma)} \nonumber \\ &=  \sum_{\nu \in \Omega} \pi(\nu) \sum_{\sigma \in s^{-1}(\nu)} \frac{\max_{\sigma \in \Omega_{\text{urb} \wedge \text{seg}}}{\pi(\sigma)}}{\pi(\nu)}  |s^{-1}(\nu)|\nonumber\\
		&\leq \sum_{\nu \in \Omega} \pi(\nu) \sum_{z_{\rm c = \sqrt{rn}}}^{\beta \sqrt{n}} l \cdot \lambda^{\max(h(\nu)-h(\sigma))} \gamma^{\max(p(\sigma)-p(\nu))}  4^{\frac{3\delta + 1}{4\delta}z_{\rm c}} 2^{rn}
		\nonumber \\&\leq \sum_{z_{\rm c} = \beta_{\min}\sqrt{n}}^{\beta \sqrt{n}} l \cdot (\frac{4^{\frac{3\delta + 1}{4\delta}}}{\lambda})^{z_{\rm c}} \lambda^{3n}2^{rn}.
	\end{align}
	where $l \triangleq (z_{\rm c}+1)3^{\alpha \sqrt{n}}$, and the inequalities in Claims 23-25 from Section 6 of SI have been substituted in the above derivation. Combining equations~\eqref{LHS2} and \eqref{RHS2},  we find
	\begin{align}\label{inequal_coo}
		&\pi(\Omega_{\text{urb} \wedge \text{seg}})  (\frac{p-\delta}{r_{\rm p}})^{(r_{\rm p}-\delta)n} (\frac{1-p-\delta}{r_{\rm r}})^{(r_{\rm r}-\delta)n} 
		\nonumber \\ &\leq \sum_{z_{\rm c} = \beta_{\min}\sqrt{n}}^{\beta \sqrt{n}} (z_{\rm c}+1)3^{\alpha \sqrt{n}} (\frac{4^{\frac{3\delta + 1}{4\delta}}}{\lambda})^{z_{\rm c}} \lambda^{3n}2^{rn}.
	\end{align}
	For large enough $n$, to have $\pi(\Omega_{\text{urb} \wedge \text{seg}})  \leq \xi_3^{n}$ for some $\xi_3 \in (0, 1)$, it suffices to have \begin{align*}
\lambda^{3n}2^{rn} &< (\frac{p-\delta}{r_{\rm p}})^{(r_{\rm p}-\delta)n} (\frac{1-p-\delta}{r_{\rm r}})^{(r_{\rm r}-\delta)n} \\ &< (\frac{p-\delta}{r_{\rm p}})^{r_{\rm p}n} (\frac{1-p-\delta}{r_{\rm r}})^{r_{\rm r}n},
\end{align*}
which can be rewritten as 
	$$\lambda^3 < (\frac{p-\delta}{r_{\rm p}})^{r_{\rm p}} (\frac{1-p-\delta}{r_{\rm r}})^{r_{\rm r}} / 2^r.$$
	Since $\lambda > 1,$ to make the right hand side of the above inequality greater than one, it suffices to have $(\frac{p-\delta}{r_{\rm p}})^{r_{\rm p}} (\frac{1-p-\delta}{r_{\rm r}})^{r_{\rm r}} > 2^r.$
	Combining the above parameter choices with Theorem \ref{urb1} requires $\gamma^{\delta/3} > 16^{\frac{3\delta + 1}{4\delta}}$ and $\lambda >1$,  proving Theorem~\ref{mix2}.
\end{proof}

\section{Integration in Cities with Distributed Urban Sites}
{As we've shown, when urban sites are centralized,  the wealth and homophily biases align to cause segregation, as shown in Theorem~\ref{urb red thm}. However, when the placement of urban sites is  {\it distributed}, the racial and wealth biases will work against each other, and we will get integration if the influence of the wealth bias exceeds the homophily bias. See Figure~\ref{fig:sim1d} for simulations. 
\begin{theorem} \label{miti}
	In a city with $|\mathcal{U}| = cn$ urban sites evenly partitioning in the city (like we find with bus routes), a small number of poor blue agents, with $b_{\rm p} < \hat{b}_{\rm p}$, and any $\lambda>1$, if $\gamma^{\hat{b}_{\rm p} - b_{\rm p}} > \lambda^{2c}64^{\frac{3\delta + 1}{4\delta}}$,  and $n$ sufficiently large, then configurations drawn from distribution $\pi$ will be $(\beta, \delta)-$segregated with exponentially small probability $\xi_4^{n}$, for some constant $0< \xi_4< 1.$
\end{theorem}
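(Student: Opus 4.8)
The plan is to show $\pi(\Omega_{\mathrm{seg}}) \le \xi_4^{n}$, where $\Omega_{\mathrm{seg}}\subseteq\Omega$ is the set of $(\beta,\delta)$-racially segregated configurations, by a weight-increasing Peierls map of the same flavor as Corollary~\ref{urb2}, but now harvesting the probability gain from the segregation structure rather than from generic non-urbanization. The mechanism I would exploit is that a distributed (``bus-route'') placement evenly partitions the lattice, so any region with an $O(\sqrt n)$ boundary contains close to its proportional share $c$ of urban sites. Hence in a $(\beta,\delta)$-segregated configuration the blue region $\bar R$ (size $\approx bn$, boundary $\le\beta\sqrt n$) contains $\approx cb\,n$ urban sites, of which at most $b_{\rm p}n$ can be held by poor blue agents. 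When $b_{\rm p}<\hat b_{\rm p}$, with $\hat b_{\rm p}$ essentially the urban capacity $cb$ of the blue region up to the $\delta$-tolerance, this forces a deficit of $\ge(\hat b_{\rm p}-b_{\rm p})n$ urban sites in $\bar R$ that must be occupied by rich agents and therefore contribute nothing to $p(\sigma)$. This is exactly the $\gamma$-weight that a more integrated image can recover, and it is the structural fact absent in the centralized case.

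Concretely, for each $\sigma\in\Omega_{\mathrm{seg}}$ I would build the distributed $\delta$-color-and-richness bridge system (Appendix~\ref{A:crbridge}) and define $f=f_5\circ\cdots\circ f_1$ as in Corollary~\ref{urb2}: the maps $f_1,f_2$ invert richness and color along the bridges to erase the contour, and $f_3,f_4,f_5$ restore the exact cardinalities of all four groups (poor/rich $\times$ red/blue) using the distributed recovery rule of \ref{recovery_way}. The only change is that the recovery is \emph{targeted}: instead of merely restoring urbanization to within the $\epsilon$-tolerance, it fills the $(\hat b_{\rm p}-b_{\rm p})n$ deficit urban sites of $\bar R$ with surplus poor red agents drawn from $R$, displacing the rich agents sitting there into the interior of $R$. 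Reusing the distributed bounds (Lemmas~\ref{I:f5_h2} and~\ref{I:f2}) I would then claim $h(\nu)-h(\sigma)\le 2cn-z_{\rm c}$, the new gain $p(\nu)-p(\sigma)\ge(\hat b_{\rm p}-b_{\rm p})n$, and $|f^{-1}(\nu)|\le\mathrm{poly}(n)\,9^{\alpha\sqrt n}\,64^{\frac{3\delta+1}{4\delta}\,n}$, where $64=4^3$ is exactly the per-$n$ bridging entropy.

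Substituting into the Peierls sum and carrying out the same telescoping as in \eqref{urbproof2} (the $z_{\rm c}$-dependent factor $(4^{\frac{3\delta+1}{4\delta}}/\lambda)^{z_{\rm c}}$ contributes only sub-exponential or $\le 1$ factors precisely in the interesting regime where $\lambda$ is large), I would obtain
\[
\pi(\Omega_{\mathrm{seg}})\;\le\;\mathrm{poly}(n)\,9^{\alpha\sqrt n}\Big(\frac{\lambda^{2c}\,64^{\frac{3\delta+1}{4\delta}}}{\gamma^{\hat b_{\rm p}-b_{\rm p}}}\Big)^{n},
\]
which is $\le\xi_4^{n}$ with $\xi_4\in(0,1)$ exactly when $\gamma^{\hat b_{\rm p}-b_{\rm p}}>\lambda^{2c}\,64^{\frac{3\delta+1}{4\delta}}$. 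Here the $\lambda^{2c}$ factor is the worst-case $\lambda^{h(\nu)-h(\sigma)}$ cost of inserting poor red into $\bar R$, so however large $\lambda$ is, one can always take $\gamma$ large enough to beat it — this is what lets the statement hold ``no matter how large $\lambda$ is.''

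The step I expect to be the main obstacle is establishing that the $(\hat b_{\rm p}-b_{\rm p})n$ deficit genuinely exists in \emph{every} segregated configuration and can be filled by the map while all four group cardinalities stay exactly fixed and the preimage count stays controlled. This requires (i) pinning down $\hat b_{\rm p}$ as the true urban capacity of the blue region under the even partition, honestly accounting for the slack from $|R\cap\mathcal R|\ge rn-\delta n$ and the $O(\sqrt n)$ boundary, and checking that the complementary surplus of poor red in $R$ is large enough to supply the move; and (ii) routing the displaced rich agents back into $R$ without exceeding the $2cn$ heterogeneous-edge budget or the $64^{\frac{3\delta+1}{4\delta}\,n}$ bits of reconstruction information. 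As flagged in the techniques discussion, the delicate point is the energy-energy tradeoff: the same agents that raise $\gamma^{p(\nu)}$ also raise $\lambda^{h(\nu)}$, so the recovery must be arranged so that the $\gamma$-gain is linear in $n$ while the $\lambda$-cost remains within the $2cn$ budget.
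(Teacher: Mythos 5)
Your proposal is correct and follows essentially the same route as the paper's proof: the same bridge system, the inversion maps $f_1,f_2$, the distributed recovery $f_3,f_4,f_5$ of Corollary~\ref{urb2}, the same three bounds ($h(\nu)-h(\sigma)\le 2cn-z_{\rm c}$, $\;p(\nu)-p(\sigma)\ge(\hat b_{\rm p}-b_{\rm p})n$, and the $\mathrm{poly}(n)\,9^{\alpha\sqrt n}\,64^{\frac{3\delta+1}{4\delta}n}$ preimage count), and the same final Peierls computation; your capacity-deficit argument for the evenly partitioned urban sites is exactly the paper's Claim~\ref{C:3}, which bounds $p(\sigma)\le (r+\delta)cn+\delta n+b_{\rm p}n$ for any segregated $\sigma$. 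The one deviation---your ``targeted'' recovery that routes surplus poor red into the deficit urban sites of $\bar R$ and displaces rich agents back into $R$---is unnecessary (and, taken literally, would force you to re-derive the edge-budget and preimage-counting lemmas you cite): the paper uses the unmodified distributed recovery, and the $\gamma$-gain falls out automatically because Lemma~\ref{I:f5_p} already guarantees $p(\nu)\ge(\min\{c,p\}-\delta)n$ for the canonical rebuild, so no agent-by-agent tracking is needed.
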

	 Hence, integration occurs because no matter how large the homophily bias weight $\lambda^{-(h(\sigma) - h(\nu))}$ is, as long as the  energy term arising from the wealth bias   $\gamma^{p(\sigma) -p(\nu)}$ is larger, then the stationary distribution will be very unlikely to be segregated.} See below for the proof.

\begin{proof}[Proof of Theorem \ref{miti}]
	We define the configuration space $S_{\beta, \delta}$ to be the set of configurations that are $(\beta, \delta)-$segregated. To prove Theorem \ref{miti}, it suffices to prove $\pi(S_{\beta, \delta}) \leq \xi_4^{n}$, where $\xi_4 \in (0,1).$
	The bridging and the mapping $\nu = f(\sigma) = (f_5 {\circ} f_4 {\circ} f_3 {\circ} f_3 {\circ} f_1)(\sigma)$ are defined as the following: we first construct a $\delta-$race\_wealth bridge system for $\sigma \in S_{\beta, \delta}$ (see Section 1.2 of SI for details). Then we do richness inversion and color inversion like defined in $f_1(\cdot)$ and $f_2(\cdot)$. 
    Then we do the color and richness recovery $\nu = (f_5{\circ}f_4{\circ}f_3{\circ})(\tau)$ in the distributed way as specified in Section 1.5 of SI. After the mapping, it is proved in Claim 26 of SI that for any $\sigma \in S_{\beta, \delta}$ with a given bridged color contour length $z_{\rm c}$, $h(\nu) - h(\sigma) \leq 2cn  - z_{\rm c}$ and $p(\sigma) - p(\nu) \leq b_{\rm p}n - \hat{b}_{\rm p}n$, where $\hat{b}_{\rm p} \triangleq \min\{c, p\}-(r+\delta)c - 2\delta.$ See proof details of Claim 26 from Section 7 of SI.

	For a given color contour length $z_{\rm c},$ for any $\nu = f(\sigma),$ the number of preimages follows from Claim 16. Similarly, we use Peierls argument \eqref{map1}, substituting the related bounds into which yields
	\begin{align*}
		&\pi(S_{\beta, \delta}) \leq \nonumber \\ &\sum_{\nu \in \Omega} \pi(\nu) \sum_{z_{\rm c} = \sqrt{r\cdot n}}^{\beta\sqrt{n}} n^{6}(3n + 1)(z_{\rm c}+1) 9^{\alpha\sqrt{n}}(\frac{4^{\frac{3\delta + 1}{4\delta}}}{\lambda})^{z_{\rm c}} (\frac{\lambda^{2c}64^{\frac{3\delta + 1}{4\delta}}}{\gamma^{\hat{b}_{\rm p} - b_{\rm p}}})^{n},
	\end{align*}
	where  $z_{\rm c} \geq \sqrt{r \cdot n}$ is due to the triangular lattice geometry, which is proved in Lemma 2.1 in \cite{cannon2016markov}, and $z_{\rm c} \leq \beta\sqrt{n}$ is due to $\sigma \in S_{\beta, \delta}$ and the definition of $(\beta, \delta)-$segregation.
	If $b_{\rm p} <  \hat{b}_{\rm p}$ and $\gamma^{\hat{b}_{\rm p} - b_{\rm p}} > \lambda^{2c}64^{\frac{3\delta + 1}{4\delta}}$, the sum will be exponentially small given large enough $n$, which means $\pi(S_{\beta, \delta}) \leq \xi_4^{n}$ for some $\xi_4 \in (0, 1).$
\end{proof}

\begin{remark}\label{inequality}
	If the number of poor blue agents satisfies $b_{\rm p} < \hat{b}_{\rm p} \triangleq \min\{c, p\}-(r+\delta)c - 2\delta,$ we can conclude the ratio between poor blue and poor red agents is smaller than the ratio between the blue and the red: $\frac{b_{\rm p}}{r_{\rm p}}  < \frac{b}{r},$ which is understood as income inequality.
\end{remark}
\begin{proof}
	If $c \leq p$, then it follows that $b_{\rm p} < c -(r+\delta)c - 2\delta = (b-\delta)c - 2\delta < (b-\delta)c < (b-\delta)p < b\cdot p$, which can be written as $\frac{b_{\rm p}}{r_{\rm p}} < \frac{b \cdot p}{r_{\rm p}}= b + b \cdot \frac{b_{\rm p}}{r_{\rm p}}$. Hence we can get $\frac{b_{\rm p}}{r_{\rm p}}  < \frac{b}{r}.$ If $p \leq c$, it follows that $b_{\rm p} < p -(r+\delta)c - 2\delta < p -(r+\delta)p - 2\delta < p (b - \delta) < b \cdot p$. Hence the same conclusion $\frac{b_{\rm p}}{r_{\rm p}}  < \frac{b}{r}$  follows. 
\end{proof}

\begin{remark}
    Although Theorem \ref{miti} is proved under the lattice-shaped urban sites shown in Figure~\ref{1b}. But urban sites can also be distributed in other ways and a similar proof will follow, like many (order of $\mathcal{O}(n)$) small clusters of disconnected urban sites.
\end{remark}

\section{Simulations}
\begin{figure*}
    	\begin{subfigure}[t]{0.45\textwidth}  
		\centering 
		\includegraphics[width=\textwidth]{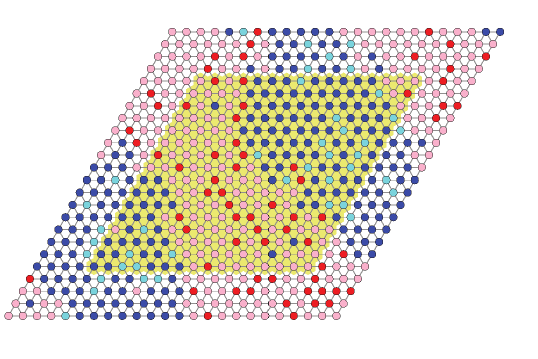}
		\caption[]%
		{{\footnotesize segregation no wealth bias when $\gamma = 1, \lambda = 2$.}}
		\label{fig:sim1e}
    	\end{subfigure}
        \hfill
    	\begin{subfigure}[t]{0.45\textwidth}   
		\centering 
	\includegraphics[width=\textwidth]{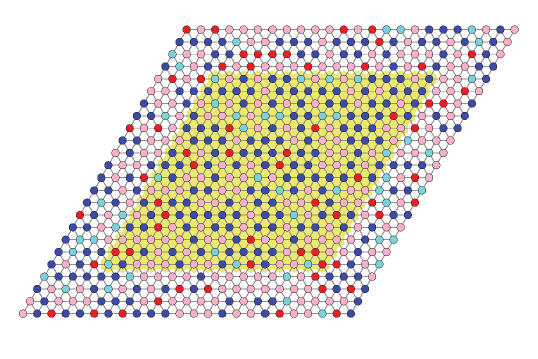}
		\caption[]%
		{\footnotesize integration with $\gamma = 1.01, \lambda = 1.01$.}
		\label{fig:sim1f}
    	\end{subfigure}
\caption[]{Simulations of $\mathcal{M}$ after five million iterations with 40$\%$ poor red(pink), 10$\%$ rich red (red), 40$\%$ rich blue (blue), and 10$\%$ poor blue (cyan) and a $46\%$ fraction of the urban sites.}\label{fig:sim4}
\end{figure*}

\begin{figure}[b]
		\centering
	\includegraphics[width=0.4\textwidth]{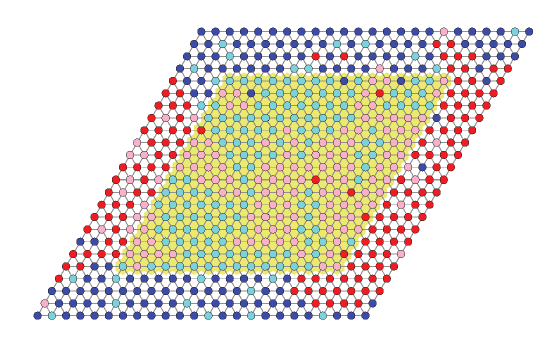}  
		\label{fig:sim2a}
	\caption[]{{Urbanized segregation without income inequality under strong wealth and racial biases ($\gamma = 200, \lambda = 1.01$). The racial and wealth distribution is 25$\%$ poor red, 25$\%$ rich red, 25$\%$ rich blue, and 25$\%$ poor blue. }
	}\label{fig:sim2}
\end{figure}

We supplement the  theorems with simulations of $\mathcal{M}$, shown in Figure~\ref{fig:sim1}, \ref{fig:sim3}, and \ref{fig:sim4}, for a city with income inequality starting from random initial locations of agents. Figure~\ref{fig:sim1} compares configurations after running $\mathcal{M}$  for the same number of iterations, varying only the values of $\lambda$, $\gamma$, and the placement of urban sites. Note that the parameter settings for $\lambda$ and $\gamma$ in the simulations are better than in our theorems, confirming that our bounds are likely not tight.

Figure~\ref{fig:sim1a} demonstrates Theorems~\ref{urb1} and~\ref{mix2} showing the coexistence of urbanization of poverty and racial integration outside the urban area under strong wealth bias but slight racial bias. 
Specially, since the chosen urban area can  accommodate all the poor agents in a city and the city has severe income inequality, Figure~\ref{fig:sim1a} can also be viewed as a verification of Remark~\ref{seg_not_hom}, showing segregation of poor red agents in the urban area and integration outside.  Individuals in Figure~\ref{fig:sim1a} have small racial biases, so the wealth biases can also drive racial segregation under the centralized placement of urban sites.
Figure~\ref{fig:sim1b} verifies Corollary \ref{urb2}, showing the urbanization of poverty with distributed urban sites. Compared with Figure~\ref{fig:sim1a}, the pink cluster gets dispersed via the distributed urban sites.
Figure~\ref{fig:sim1c} verifies Theorem~\ref{urb red thm}, showing the urbanized segregation. Due to income inequality, where most of the poor agents are red, we can see that the pink predominantly occupies the urban area. In contrast, in Figure~\ref{fig:sim2},  when there is income equality across races, we can see urbanized segregation and roughly the same amount of poor red and poor blue agents occupying the urban sites.
Figure~\ref{fig:sim1d} demonstrates Theorem~\ref{miti}, showing the mitigation of segregation via distributing the urban sites in the existence of agents' strong racial bias, which should lead to Figure~\ref{fig:sim1c} urban sites are not distributed. Compared with Figure~\ref{fig:sim1b}, whose segregation level is even smaller, the difference is that agents in Figure~\ref{fig:sim1b} have little racial bias, whereas in Figure~\ref{fig:sim1d} each agent has strong racial bias.
Figures~\ref{fig:sim1e} and~\ref{fig:sim1f} provide baselines of the main work. Figure~\ref{fig:sim1e} shows segregation under strong racial bias without wealth bias, which was proved in~\cite{cannon2019local, li2021programming}. Figure~\ref{fig:sim1f} shows integration under little racial and wealth bias, which is proved in Theorem~\ref{deurb}.

\section{Conclusions}
lIn this work, we consider the interplay of race and socioeconomic status  by introducing a heterogeneous stochastic Schelling model with urban sites as incentives for the poor individuals. We show that compact and centralized urban sites, like in a city center, encourage poor agents to cluster centrally, while infrastructure that is well distributed, like a large grid of bus routes spanning the entire city, tends to disperse the low-income agents.  Understanding the effects of these two scenarios on segregation can be helpful for understanding how to best distribute public amenities to help mitigate segregation.

We find that centralized infrastructure simultaneously causes  an “urbanization of poverty” (i.e., occupation of urban sites primarily by poor agents) and segregation when both the homophily and  incentives drawing poor agents to urban sites are large enough. Moreover, if there is {\it income inequality} where one race has a significantly higher proportion of poor agents, when homophily preferences are small and incentives drawing the poor individuals to urban sites are sufficiently large,  we get  racial {\it segregation} on urban sites and {\it integration} on non-urban sites, with high probability. However, we find there is overall mitigation of segregation (on urban and non-urban sites) whenever the urban sites are spatially distributed throughout the lattice and the incentives drawing poor agents to the urban sites exceed the homophily preference. We prove that in this case, no matter how strong homophily preferences are, it will be exponentially unlikely that a configuration chosen from stationarity will have large, homogeneous clusters of similarly colored agents, thus promoting integration in the city.  These findings suggest that deliberate urban planning can mitigate or enhance segregation.

We note that there are many limitations of the heterogeneous Schelling model studied here, with many variants worth considering. For instance, in addition to amenities that are only preferred by poor agents, we can introduce other amenities  preferred by rich agents \cite{leroy1983paradise}, such as fine arts and various recreational services. We intend to introduce such incentives for the rich in subsequent work.

Naturally, the model considered here is an abstraction that oversimplifies biases and ignores many factors affecting segregation and incentives in the real world. Many other important factors, such as housing prices and individuals' preferences for higher-income neighbors, are not captured in our model. Nonetheless, we believe that this simple model can provide insight into how socioeconomic incentives might worsen or mitigate segregation through the allocation of urban amenities. 
To supplement the theoretical findings presented here, we also are exploring relevant demographic data in cities across the United States \cite{zhao2022correlation}. After collecting  national data, we find some positive correlations between cities with more distributed amenities and better racial integration, as found in our model.

\bibliographystyle{ACM-Reference-Format}
\bibliography{references}

\pagebreak
\begin{center}
\textbf{\large Supplemental Information}
\end{center}
%%%%%%%%%% Merge with supplemental materials %%%%%%%%%%
%%%%%%%%%% Prefix a "S" to all equations, figures, tables and reset the counter %%%%%%%%%%
\setcounter{equation}{0}
\setcounter{figure}{0}
\setcounter{table}{0}
\setcounter{page}{1}
\makeatletter
\renewcommand{\theequation}{S\arabic{equation}}
\renewcommand{\thefigure}{S\arabic{figure}}
\renewcommand{\bibnumfmt}[1]{[S#1]}
\renewcommand{\citenumfont}[1]{S#1}

\setcounter{section}{0}

\section{Technical Summary and Proof Details}

\subsection{Detailed Balance Proof that $\pi$ is the Stationary Distribution of $\mathcal{M}$} \label{A1}
\begin{proof}
	Consider any two configurations $\sigma$ and $\tau$ that differ by one swap transition between agent $i$ and $j$. It follows from $\mathcal{M}$ that the probability of transitioning from $\sigma$ to $\tau$ is
	\begin{align*}
		M(\sigma, \tau) = \frac{1}{\binom{n}{2}} \lambda^{- N_{\sigma}(i) - N_{\sigma}(j)} \gamma^{- u_{\sigma}(i) - u_{\sigma}(j)}.
	\end{align*}
	A similar analysis shows 
	\begin{align*}
		M(\tau, \sigma) = \frac{1}{\binom{n}{2}} \lambda^{- N_{\tau}(i) - N_{\tau}(j)} \gamma^{- u_{\tau}(i) - u_{\tau}(j)}.
	\end{align*}
	For the finite torus triangular lattice with $n$ vertices, the total number of racially heterogeneous edges $h(\sigma) = 3n - e(\sigma),$ where $e(\sigma)$ is the number of homogeneous edges. Hence, substituting $\pi(\sigma) = \frac{\lambda^{-h(\sigma)}\gamma^{p(\sigma)}}{Z}$ into $\pi(\sigma)M(\sigma, \tau)$ yields
	\begin{align*}
		\pi(\sigma)M(\sigma, \tau) &= \frac{\lambda^{-h(\sigma)- N_{\sigma}(i) - N_{\sigma}(j)} \gamma^{p(\sigma) - u_{\sigma}(i) - u_{\sigma}(j)}}{\binom{n}{2} \cdot Z} \\ &= \frac{\lambda^{-3n + e(\sigma)- N_{\sigma}(i) - N_{\sigma}(j)} \gamma^{\left (\sum_{k \in \mathcal{A}}u_{\sigma}(k) \right ) - u_{\sigma}(i) - u_{\sigma}(j)}}{\binom{n}{2} \cdot Z} \\ &= \frac{\lambda^{e(\sigma)- N_{\sigma}(i) - N_{\sigma}(j)} \gamma^{\sum_{k \in \mathcal{A} \setminus \{i, j\}}u_{\sigma}(k)}}{\binom{n}{2} \cdot Z \cdot \lambda^{3n}}.
	\end{align*}
	We denote $\sum_{k \in \mathcal{A} \setminus \{i, j\}}u_{\sigma}(k) \triangleq p(\sigma \setminus \{i, j\})$, which means the total number of the poor on the urban sites after removing agent $i$ and $j$ from the configuration $\sigma$. Similarly, we denote the total number of homogeneous edges after removing agent $i$ and $j$ from the configuration $\sigma$ as $e(\sigma \setminus \{i, j\})$. If agent $i$ and $j$ are not neighbors, $e(\sigma)- N_{\sigma}(i) - N_{\sigma}(j) = e(\sigma \setminus \{i, j\})$; and if agent $i$ and $j$ are neighbors, $e(\sigma)- N_{\sigma}(i) - N_{\sigma}(j) = e(\sigma \setminus \{i, j\}) - 1.$ Hence,
	\begin{align*}
		\pi(\sigma)M(\sigma, \tau) = \left\{\begin{matrix}
			\frac{\lambda^{e(\sigma \setminus \{i, j\})} \gamma^{p(\sigma \setminus \{i, j\})}}{\binom{n}{2} \cdot Z \cdot \lambda^{3n}},  \text{$i$ and $j$ are not neighbors}\\ 
			\frac{\lambda^{e(\sigma \setminus \{i, j\}) - 1} \gamma^{p(\sigma \setminus \{i, j\})}}{\binom{n}{2} \cdot Z \cdot \lambda^{3n}},  \text{$i$ and $j$ are neighbors}
		\end{matrix}\right..
	\end{align*}
	A similar analysis shows that
	\begin{align*}
		\pi(\tau)M(\tau, \sigma) = \left\{\begin{matrix}
			\frac{\lambda^{e(\tau \setminus \{i, j\})} \gamma^{p(\tau \setminus \{i, j\})}}{\binom{n}{2} \cdot Z \cdot \lambda^{3n}},  \text{$i$ and $j$ are not neighbors}\\ 
			\frac{\lambda^{e(\tau \setminus \{i, j\}) - 1} \gamma^{p(\tau \setminus \{i, j\})}}{\binom{n}{2} \cdot Z \cdot \lambda^{3n}},  \text{$i$ and $j$ are neighbors}
		\end{matrix}\right..
	\end{align*}
	Since configurations $\sigma$ and $\tau$ only differ by the swap move between $i$ and $j$, hence after removing $i$ and $j$ from both configurations, the two configurations $\sigma \setminus \{i, j\}$ and $\tau \setminus \{i, j\}$ are the same. Thus, we can conclude $\pi(\sigma)M(\sigma, \tau) = \pi(\tau)M(\tau, \sigma).$ Since the detailed balance is satisfied, we conclude the stationary distribution $\pi$ is given by $\pi(\sigma) = \frac{\lambda^{-h(\sigma)}\gamma^{p(\sigma)}}{Z}$.
\end{proof}

\subsection{Bridge Systems} \label{A:crbridge}
Throughout the proofs, we use red and blue to represent the races, and richness of the color to represent the wealth of each agent (rich red is red; poor red is pink; rich blue is blue; poor blue is cyan). 
We first need to extend the bridging technique to expand the encoded information dimension from color only to both color and richness over the methods in \cite{miracle2011clustering, cannon2019local}, within our context. The following shows our adapted bridging technique.

\vskip.1in
\noindent{\bf Lattice Duality.} 
The hexagonal dual to the triangular lattice $G_{\triangle}$ is obtained by creating a  vertex at the centroid of each unit triangle in $G_{\triangle}$ and connecting two of these vertices if their corresponding unit triangles have a common edge (as shown in Figure \ref{2a}, the obtained hexagonal lattice is denoted by $G_{\rm{hex}}$).
Each edge $e^{'} \in G_{\rm{hex}}$ crosses a unique edge $e \in G_{\triangle}$ and separates two adjacent agents living at the of $e$. There is a bijection between edges of $G_{\triangle}$ and edges of $G_{\rm{hex}}$, associating an edge of $G_{\triangle}$ with the unique edge of $G_{\rm{hex}},$ and vice versa.

\vskip.1in
\noindent{\bf Color Contours and Color Bridges.}  If an edge $e^{'} \in G_{\rm{hex}}$ separates two agents heterogeneous in race, we call it a {\it color edge}. We define a \textit{color contour} to be made up of color edges and is a self-avoiding polygon in $G_{\rm hex}$ that never visits the same vertex twice except to start and end at the same place. The color contour is denoted in green as shown in Figure \ref{2b}.  The \textit{color bridges} are shown in dashed green. They are self-avoiding walks on $G_{\rm{hex}}$ that connect color contours to the boundary.

\vskip.1in
\noindent{\bf Richness Contours and Richness Bridges.}  If an edge $e^{'} \in G_{\rm{hex}}$ separates two agents heterogeneous in wealth, we call it a {\it richness edge}. We define a \textit{richness contour} to be made up of richness edges and is a self-avoiding polygon, which is denoted in orange as shown in Figure \ref{2c}. The \textit{richness bridges} are shown in dashed orange, which connects richness contours to the boundary.

\begin{figure} 
	\begin{subfigure}[t]{0.3\textwidth} 
		\centering
		\includegraphics [height=1.3in]{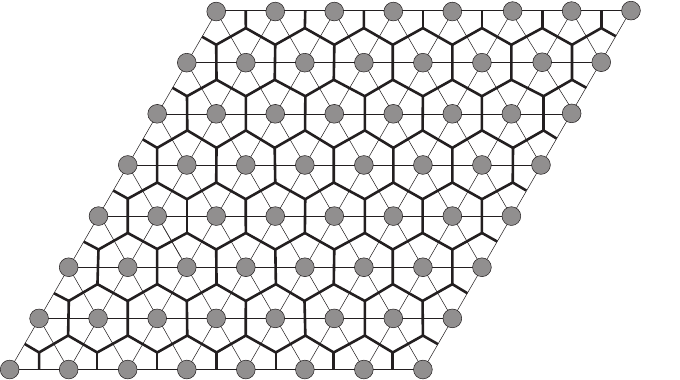}
		%\captionsetup{justification=centering}
		\captionsetup{justification=centering}
		\subcaption{}\label{2a}
	\end{subfigure}
	\begin{subfigure}[t]{0.3\textwidth} 
		\centering
		\includegraphics[height=1.3in]{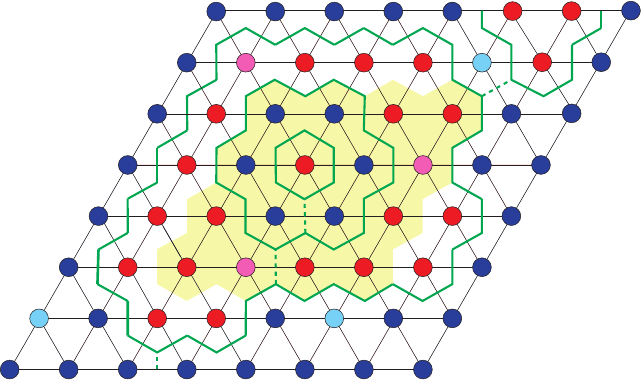}
		\captionsetup{justification=centering}
		\subcaption{}\label{2b}
	\end{subfigure}
	\begin{subfigure}[t]{0.3\textwidth} 
		\centering
		\includegraphics[height=1.3in]{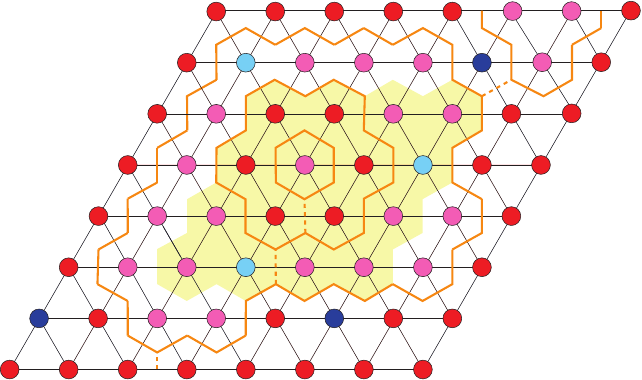}
		\captionsetup{justification=centering}
		\subcaption{}\label{2c}
	\end{subfigure}
	\caption[]{(a) The hexagon lattice $G_{\triangle}$ dual to the city lattice $G_{\triangle}$. (b) Demonstration of the color contours and color bridges. (c) Demonstration of the richness contours and richness bridges. }\label{2}
\end{figure}

\vskip.1in
\noindent{\bf $\delta-$Race\underbar\ Wealth Bridge System.}  
We first define the $\delta-$color bridge system similarly as \cite{miracle2011clustering,cannon2019local} as the following. For any color component $F$ \footnote{A color component is a maximal simply connected subset $F$ of agents where all agents in $F$ adjacent to a location not in $F$ have the same race, which we call the color of F.},  let $C_{\rm c}$ be a collection of color contours of $F$. The color bridges collection $B_{\rm c}$ connects each color contour in $C_{\rm c}$ to the boundary of $F$.  

An agent $P$ is bridged in terms of color in $F$ if there exists a path through agents of the same race as $P$ to the boundary of $F$ or a bridged color contour in $C_{\rm c}$. An agent is unbridged in terms of color if such a path does not exist. Then we define that $(B_{\rm c}, C_{\rm c})$ is a \textit{$\delta-$color bridge system} for $F$ if 
\begin{itemize}
	\item $|B_{\rm c}| \leq |C_{\rm c}| (1  - \delta)/2\delta$, where $|B_{\rm c}|$ is the total number of edges in $B_{\rm c}$ and $|C_{\rm c}|$ is the total number of edges in $C_{\rm c}$;
	\item the number of unbridged agents in terms of color in $F$ is $\leq \delta|F|$, where $|F|$ is the number of agents in $F$.
\end{itemize}
Note $\delta \in [0, 1]$ controls how much color information is omitted by the $\delta-$color bridge system, and see proof for Lemma 7.2 in \cite{cannon2019local} for the construction way of a $\delta-$color bridge system for any $F$.
\begin{lemma}[\cite{cannon2019local}]\label{I:cbr}
	For any color component $F$, there exists a $\delta-$color bridge system for $F$.
\end{lemma}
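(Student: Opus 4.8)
The plan is to construct the pair $(B_{\rm c}, C_{\rm c})$ by an iterative greedy peeling of contours, following the bridging construction of \cite{miracle2011clustering, cannon2019local}, and then to verify the two defining inequalities by a charging (amortization) argument. The key structural observation to start from is that the color contours of $F$ are pairwise disjoint self-avoiding polygons in $G_{\rm hex}$, so they form a laminar family and carry a natural forest structure under the nesting (containment) partial order, with the outer boundary of $F$ as the root. Every minority-color agent of $F$ lies in the interior of some innermost contour, and by definition an agent is bridged precisely when there is a monochromatic path from it either to the boundary of $F$ or to a contour already connected (bridged) into that reachable structure.

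Next I would run the following greedy rule, processing the forest from the root inward. Maintain the current bridge system, initialized so that only the boundary of $F$ is reachable. Repeatedly, while more than $\delta|F|$ agents remain unbridged, select a contour $c$ whose interior still traps the most unbridged agents, add $c$ to $C_{\rm c}$, and add to $B_{\rm c}$ a shortest self-avoiding walk in $G_{\rm hex}$ connecting $c$ to the already-reachable structure (the boundary, or a previously bridged contour or bridge). The geometric fact to establish here is that such a connecting walk can always be routed, and that its length can be amortized against the contour lengths committed to $C_{\rm c}$: a bridge need only reach the nearest enclosing bridged structure, and the lattice isoperimetric inequality (a contour of length $L$ encloses at most $O(L^2)$ agents) guarantees that a contour trapping many agents must itself be long, which is what lets its length pay for the bridge.

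To verify the two conditions, the second one, that the number of unbridged agents is at most $\delta|F|$, holds essentially by construction: the loop terminates (only finitely many contours exist and each iteration commits a new one), and its guard forces the unbridged count down to at most $\delta|F|$ at termination. The genuine content is the first inequality $|B_{\rm c}| \le |C_{\rm c}|(1-\delta)/2\delta$, which I would prove by charging each unit of added bridge length to at least $2\delta/(1-\delta)$ units of contour length accumulated in $C_{\rm c}$; the greedy threshold (bridge a contour only when the unbridged agents it protects are numerous relative to the bridge cost, so that via isoperimetry its length dominates) is tuned exactly so that this charging balances to the claimed ratio.

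The main obstacle is making these two bounds hold simultaneously: driving $|B_{\rm c}|$ small pushes one to bridge as few and as cheaply connectable contours as possible, whereas forcing the unbridged count below $\delta|F|$ requires bridging enough contours to cover all but a $\delta$-fraction of $F$. Reconciling them hinges on the precise greedy threshold together with the isoperimetric routing bound that lets a contour be connected to the reachable structure with length comparable to its own, which is where the constant $(1-\delta)/2\delta$ is pinned down. Since this lemma concerns color only and is independent of the wealth dimension, I would invoke the construction of \cite{cannon2019local} (their Lemma~7.2) directly; the wealth information enters only later, when the color and richness contour systems are overlaid into the $\delta$-color-and-richness bridge system.
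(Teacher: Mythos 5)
The paper never proves this lemma itself: it is imported from \cite{cannon2019local} (their Lemma 7.2), and the text merely points the reader to that construction. Your closing sentence --- invoke \cite{cannon2019local} directly since the statement concerns color only --- is therefore exactly what the paper does, and as a citation-level argument it is fine.

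The standalone construction you sketch, however, has a genuine gap in the amortization step. You claim that the isoperimetric inequality (``a contour of length $L$ encloses $O(L^2)$ agents, so a contour trapping many agents must be long'') lets each contour's length pay for its own bridge, i.e., that a contour can be routed to the nearest enclosing bridged structure by a walk of length comparable to its own length. This is false: the distance from a contour to the boundary of $F$ (or to any other bridged structure) is completely unrelated to that contour's own length. A single small pocket of the minority color --- a contour of constant length enclosing a handful of agents --- can sit at depth $\Theta(\sqrt{|F|})$ inside $F$; if $\delta|F|$ is smaller than the number of enclosed agents, your loop guard forces you to bridge it, and the required bridge has length $\Theta(\sqrt{|F|})$, which no charge against the constant-length contour can absorb. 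What actually saves the construction --- and what your greedy threshold never exploits --- is that bridging is only \emph{forced} when the enclosed agents exceed $\delta|F|$, which by the same isoperimetric inequality forces $\delta \le |c|^2/(16|F|)$ to be small, so the allowed budget $|C_{\rm c}|(1-\delta)/2\delta$ is inflated by a factor of order $|F|/|c|^2$ and can pay for a bridge of length up to order $|F|/|c|$. In other words, the bridge is financed by $\delta$ being small, not by the contour being long; balancing this tradeoff across a nested family of contours is precisely the inductive bookkeeping carried out in Lemma 7.2 of \cite{cannon2019local}, and a per-contour ``length pays for bridge'' charge cannot establish the ratio $(1-\delta)/2\delta$.
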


Similarly, we can also define the $\delta-$richness bridge system, and the richness component. The following lemma holds similarly.
\begin{lemma}[\cite{cannon2019local}]\label{I:rbr}
	For any richness component $F$, there exists a $\delta-$richness bridge system for $F$.
\end{lemma}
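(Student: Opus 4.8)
The statement is the wealth-analogue of Lemma~\ref{I:cbr}, obtained by replacing the binary attribute \emph{race} (red/blue) with the binary attribute \emph{wealth} (rich/poor). My plan is therefore not to reprove anything from scratch but to observe that the construction of a $\delta$-color bridge system in Lemma~\ref{I:cbr} (from Lemma~7.2 in \cite{cannon2019local}) never uses the meaning of the two color classes—only that each agent carries a fixed two-valued label and that we work on the hexagonal dual $G_{\rm hex}$. Since ``rich/poor'' is exactly such a two-valued label, running the identical construction on the rich/poor partition produces a $\delta$-richness bridge system for any richness component $F$.

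Concretely, I would first record that every ingredient of Appendix~\ref{A:crbridge} dualizes verbatim. The lattice duality $G_{\triangle}\leftrightarrow G_{\rm hex}$ is attribute-free; a \emph{richness edge} of $G_{\rm hex}$ is the dual edge separating two neighbors differing in wealth, exactly as a color edge separates neighbors differing in race; a \emph{richness contour} is a self-avoiding polygon of richness edges and a \emph{richness bridge} a self-avoiding walk of $G_{\rm hex}$ joining a contour to the boundary of $F$. A richness component is defined just like a color component, as a maximal simply connected set of agents whose boundary agents all share a single wealth value. Hence the objects $(B_{\rm r},C_{\rm r})$ and the notions of \emph{bridged}/\emph{unbridged in terms of wealth} are literal relabelings of their color counterparts, and the two defining conditions to be achieved are $|B_{\rm r}|\le |C_{\rm r}|\,(1-\delta)/2\delta$ together with at most $\delta|F|$ agents unbridged in terms of wealth.

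With this dictionary in place I would invoke the greedy construction of Lemma~7.2 in \cite{cannon2019local}: enumerate the richness contours of $F$ and bridge a contour to the boundary precisely when omitting it would otherwise strand too many enclosed agents relative to that contour's length, leaving the remaining short, few-enclosing contours unbridged. The accompanying charging argument—each unit of contour length pays for at most $(1-\delta)/2\delta$ units of bridge, while the agents left unbridged all lie inside the omitted short contours and are therefore at most a $\delta$-fraction of $|F|$—then yields the two inequalities. Every line of this argument reads only ``do these two adjacent agents carry the same label?'', so it is insensitive to whether the label is race or wealth, and the bounds transfer unchanged.

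The only thing that genuinely needs checking—rather than a real obstacle—is that a richness component supports the same contour/bridge combinatorics as a color component; this holds because both are built from a binary partition of the \emph{same} agent set on the \emph{same} dual lattice, and no isoperimetric or geometric step in Lemma~7.2 of \cite{cannon2019local} uses any property of race beyond its being one of two values. I would flag that Lemma~\ref{I:rbr} asserts existence of the richness system in isolation; the extra care postponed to the $\delta$-color-and-richness bridge system of Appendix~\ref{A:crbridge} is to run the color and richness constructions \emph{simultaneously} so the two collections of contours and bridges do not interfere, but that coupling lies beyond the present statement.
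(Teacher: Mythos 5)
Your proposal is correct and matches the paper's own treatment: the paper simply notes that the richness bridge system and richness component are defined analogously to their color counterparts and that the lemma ``holds similarly,'' deferring the actual construction to Lemma~7.2 of \cite{cannon2019local}, exactly as you do. Your write-up is just a more explicit version of that relabeling argument, including the (correct) observation that the joint color-and-richness coupling is a separate matter handled later.
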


We call the joint  $\delta-$color and $\delta-$richness bridge system  a \textit{$\delta-$-race\underbar\ wealth bridge system.}
For any configuration $\sigma  \in  \Omega_{\rm t},$ we can construct a $\delta-$race\underbar\ wealth bridge system. See Figure \ref{3a} for illustrations. where at most $\delta n$ agents are not bridged in terms of color, and at most $\delta n$ agents are not bridged in terms of richness. Combining Lemma \ref{I:cbr} and Lemma \ref{I:rbr}, the following lemma holds.
\begin{lemma}\label{I:rcbr}
	For any finite region $F$, there exists a $\delta-$race\underbar\ wealth bridge system for $F$.
\end{lemma}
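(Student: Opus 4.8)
\emph{Proof proposal.} The plan is to build the joint system by constructing its color part and its richness part independently and overlaying them, since Lemmas \ref{I:cbr} and \ref{I:rbr} already supply the two constituent systems on individual components. First I would decompose the finite region $F$ into its color components (the maximal simply connected subsets whose boundary agents share a single race, possibly nested inside one another). Lemma \ref{I:cbr} supplies, for each such component $F_i$, a $\delta$-color bridge system $(B^{(i)}_{\rm c}, C^{(i)}_{\rm c})$ with $|B^{(i)}_{\rm c}| \le |C^{(i)}_{\rm c}|(1-\delta)/(2\delta)$ and at most $\delta|F_i|$ color-unbridged agents. Taking $B_{\rm c} = \bigcup_i B^{(i)}_{\rm c}$ and $C_{\rm c} = \bigcup_i C^{(i)}_{\rm c}$, the length inequality survives summation because both of its sides are additive over the components, and the total color-unbridged count is at most $\delta\sum_i|F_i| = \delta|F|$ since the components partition the agents of $F$.

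Second, I would run the identical argument in the wealth dimension: decompose $F$ into its richness components, invoke Lemma \ref{I:rbr} on each, and union the outputs into $(B_{\rm r}, C_{\rm r})$, obtaining the richness-bridge length inequality and a richness-unbridged count of at most $\delta|F|$ by the same additivity. Crucially, the color and richness structures live on essentially edge-disjoint data: color edges of $G_{\rm hex}$ separate agents differing in race, while richness edges separate agents differing in wealth, so the two systems can be overlaid without conflict even when a single dual edge happens to be simultaneously a color edge and a richness edge (it then contributes independently to a green contour and an orange contour). The resulting pair $((B_{\rm c}, C_{\rm c}), (B_{\rm r}, C_{\rm r}))$ is, by the definition of the joint $\delta$-color and $\delta$-richness bridge system, precisely a $\delta$-color-and-richness bridge system for $F$, and the two unbridged bounds recover the claimed ``at most $\delta n$ color-unbridged and at most $\delta n$ richness-unbridged agents'' when $F$ is the whole configuration.

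The main obstacle to make rigorous is the additivity and consistency of bridging across component boundaries, which is delicate precisely because color (resp. richness) components may be nested. I would verify that an agent bridged within its own component remains bridged when $F$ is considered globally: each component's monochromatic boundary either lies on the outer boundary of $F$ or is adjacent to the boundary of the enclosing component, so that the local bridging path can be attached via the enclosing component's bridge to reach $\partial F$, and the local paths compose into global ones. Once this compositional property is checked, both defining conditions follow by summing the per-component bounds established in Lemmas \ref{I:cbr} and \ref{I:rbr}, and the richness case is entirely symmetric. \qed
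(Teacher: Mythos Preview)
Your proposal is correct and follows the same approach as the paper: the paper's proof is literally the one-line remark ``Combining Lemma~\ref{I:cbr} and Lemma~\ref{I:rbr}, the following lemma holds,'' since by definition a $\delta$-color-and-richness bridge system is just the overlay of a $\delta$-color bridge system and a $\delta$-richness bridge system. Your decomposition into components, union of the per-component systems, and additivity check spell out in detail what the paper leaves implicit; the nested-component concern you raise is real but is already absorbed into the construction underlying Lemmas~\ref{I:cbr} and~\ref{I:rbr} (from \cite{cannon2019local}), so no further work is needed here.
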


\vskip.1in
\noindent{\bf Crossing Contours and Non-Crossing Contours.}  The contour that touches the boundary of the defined domain is called a \textit{crossing contour}. The contour that does not touch the boundary is called a \textit{non-crossing contour}. For example, Figure \ref{2b} has one crossing color contour and three non-crossing color contours.

The sum of the number of edges of a contour is called the length of the contour. We denote the length of the bridged non-crossing contours as $y$, and length of the crossing contours as $x$. The length of the bridged non-crossing color contours is denoted by $y_{\rm c}$. The length of the crossing color contours is $x_{\rm c}$. We can also define non-crossing richness contours $y_{\rm r}$ and crossing richness contours  $x_{\rm r}$ in similar way. It follows that $y = y_{\rm c} + y_{\rm r}$ and $x = x_{\rm c} + x_{\rm r}$. We call $z_{\rm c } \triangleq x_{\rm c} + y_{\rm c}$ the bridged color contour length, and $z_{\rm r } \triangleq x_{\rm r} + y_{\rm r}$ the bridged richness contour length.

%Given $x$ and $y$, there are at most XX ways of constructing the bridge system made up of a $\delta-$color bridge system and a $\delta-$richness bridge system. It is counted in a depth-first way and see Appendix XX for the proof.

For a given bridged color or richness contour length, the following lemmas bound the number of possible bridge systems, which can be counted in a depth-first way (see proof details in \cite{cannon2019local}).
\begin{lemma}[Lemma 7.6 in \cite{cannon2019local}] \label{I:rbridge}
	For a given $z_{\rm r} = y_{\rm r} + x_{\rm r}$, there are at most $(z_{\rm r} + 1) 3^{\alpha \sqrt{n}}4^{\frac{3\delta +1}{4\delta}z_{\rm r}}$ ways of constructing a $\delta-$richness bridge system.
\end{lemma}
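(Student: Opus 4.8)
The statement is a purely combinatorial counting bound, and the cleanest route exploits the fact that, by the definition in Appendix~\ref{A:crbridge}, a $\delta$-richness bridge system is a bridge system on the hexagonal dual $G_{\rm hex}$ of \emph{exactly} the same combinatorial type as a $\delta$-color bridge system: a family of self-avoiding richness contours together with self-avoiding bridge walks joining them to the boundary, subject to the quota $|B_{\rm r}| \le |C_{\rm r}|(1-\delta)/(2\delta)$. The counting argument uses nothing about the contours beyond the degree-$3$ structure of $G_{\rm hex}$, which is shared by color and richness edges alike. So the plan is to reproduce the depth-first encoding argument behind Lemma~7.6 of \cite{cannon2019local} verbatim, reading ``richness'' for ``color'' throughout; the point is that a system of prescribed lengths is recoverable from a bounded-alphabet transcript of one deterministic depth-first traversal of the union of all contours and bridges, and the number of such transcripts is what we bound.

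First I would peel off the factor $(z_{\rm r}+1)$. Since $z_{\rm r} = x_{\rm r} + y_{\rm r}$ with nonnegative integers $x_{\rm r}, y_{\rm r}$, there are at most $z_{\rm r}+1$ ways to apportion the prescribed total length between crossing and non-crossing contours, and it suffices to count systems for each fixed pair $(x_{\rm r}, y_{\rm r})$. Next, the crossing structures (crossing contours and the bridges that meet the boundary) are all anchored on the boundary of the domain, whose perimeter is at most $\alpha\sqrt{n}$. Traversing the boundary once and recording at each of its $\le \alpha\sqrt{n}$ edges one of a constant number ($\le 3$) of flags---whether a crossing contour or bridge emanates there and in which admissible first direction---determines all boundary-attachment data. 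This yields the factor $3^{\alpha\sqrt{n}}$; crucially it is independent of $x_{\rm r}$ because the \emph{lengths} of the crossing walks are charged instead to the interior count below.

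For the interior, I would run a depth-first exploration that walks every bridge edge exactly twice (descending to a contour and returning) and every contour edge once. Because every vertex of $G_{\rm hex}$ has degree $3$, each step of a self-avoiding walk admits at most $2$ continuations, and each explored edge carries in addition only a constant number of decision flags (enter or leave a bridge, turn onto a contour, close a polygon). The total number of explored edges is $z_{\rm r} + 2|B_{\rm r}|$, and the $\delta$-bridge quota gives $|B_{\rm r}| \le \tfrac{1-\delta}{2\delta}\,z_{\rm r}$, so the bridge traversals number at most $\tfrac{1-\delta}{\delta}\,z_{\rm r}$. Accounting the per-edge choices separately on the $z_{\rm r}$ contour edges and on the $\le \tfrac{1-\delta}{\delta}\,z_{\rm r}$ bridge traversals, and collecting the constant-sized alphabets, produces the bound $4^{\frac{3\delta+1}{4\delta}\,z_{\rm r}}$ on the number of interior transcripts. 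Multiplying the three factors yields the claimed $(z_{\rm r}+1)\,3^{\alpha\sqrt{n}}\,4^{\frac{3\delta+1}{4\delta}\,z_{\rm r}}$, and existence of such a system at all is guaranteed by Lemma~\ref{I:rbr}.

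The main obstacle is pinning down the exponent constant $\frac{3\delta+1}{4\delta}$ rather than the crude $z_{\rm r}/\delta$ that a uniform ``$2$ choices per explored edge'' estimate would give. This requires charging the self-avoiding branching to contour edges and to bridge traversals at their correct, distinct rates and using the bridge-to-contour ratio $\tfrac{1-\delta}{2\delta}$ exactly, which is precisely the delicate per-edge bookkeeping carried out in the proof of Lemma~7.6 of \cite{cannon2019local}. Once that bookkeeping is set up for color contours, it transfers unchanged to richness contours, since both live on $G_{\rm hex}$ and obey the same bridge-system axioms.
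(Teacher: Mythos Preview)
Your proposal is correct and matches the paper's approach: the paper does not prove this lemma at all but simply imports it from \cite{cannon2019local}, remarking only that the bound ``can be counted in a depth-first way (see proof details in \cite{cannon2019local}),'' and implicitly relying on the fact that a $\delta$-richness bridge system is combinatorially identical to a $\delta$-color bridge system on $G_{\rm hex}$. Your sketch of the depth-first encoding, the $(z_{\rm r}+1)$ split, the boundary factor $3^{\alpha\sqrt{n}}$, and the reduction to the per-edge bookkeeping of Lemma~7.6 in \cite{cannon2019local} is exactly that argument spelled out, with the correct observation that nothing in the count distinguishes richness edges from color edges.
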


\begin{lemma}[Lemma 7.6 in \cite{cannon2019local}] \label{I:cbridge}
	For a given $z_{\rm c} = y_{\rm c} + x_{\rm c}$, there are at most $(z_{\rm c} + 1) 3^{\alpha \sqrt{n}}4^{\frac{3\delta +1}{4\delta}z_{\rm c}}$ ways of constructing a $\delta-$color bridge system.
\end{lemma}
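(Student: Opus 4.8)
The plan is to exploit a key structural observation: the counting is \emph{label-agnostic}. A $\delta$-color bridge system $(B_{\rm c}, C_{\rm c})$ is nothing more than a collection of self-avoiding contours together with bridges to the boundary drawn on the degree-$3$ dual lattice $G_{\rm hex}$, and the bound in the statement depends only on this geometry, not on whether the separated agents differ in race or in wealth. Consequently the argument is word-for-word the same as the richness version in Lemma \ref{I:rbridge} and as Lemma 7.6 of \cite{cannon2019local}, with the bridged richness contour length $z_{\rm r}$ replaced throughout by the bridged color contour length $z_{\rm c}$. I would therefore reproduce that contour-counting argument verbatim, so the real content of the proof is checking that nothing in the count ever used the semantic meaning of a contour edge (separating colors versus separating wealth classes), which it does not.

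For completeness I would sketch the underlying technique, which is a depth-first traversal (contour-tracing) encoding. The idea is to encode each bridge system by a depth-first walk of the combined contour-and-bridge structure on $G_{\rm hex}$, and then bound the number of possible walks. Because $G_{\rm hex}$ has maximum degree $3$, at each vertex a non-backtracking walk has at most two choices for its next edge, supplemented by a bounded number of bookkeeping bits recording whether the current edge belongs to a contour or to a bridge and whether a bridge starts, ends, or the walk returns at that vertex. The two factors in the stated bound arise from two distinct parts of the structure: the crossing contours anchor on the boundary of the domain, which on the triangular lattice has length at most $\alpha\sqrt{n}$, and at each boundary edge there are only a constant number of ways a crossing contour can meet it, so encoding all crossing attachments contributes the factor $3^{\alpha\sqrt{n}}$; meanwhile the polynomial prefactor $(z_{\rm c}+1)$ absorbs auxiliary choices such as the number of distinct contours and the split of $z_{\rm c}$ into its crossing part $x_{\rm c}$ and non-crossing part $y_{\rm c}$.

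The heart of the count concerns the non-crossing contours and their attached bridges. In a depth-first traversal every contour edge is visited once while every bridge edge is visited twice (once outward and once on the return), so the total traversed length is $z_{\rm c} + 2\lvert B_{\rm c}\rvert$. Here the defining inequality $\lvert B_{\rm c}\rvert \le \lvert C_{\rm c}\rvert (1-\delta)/(2\delta)$ of a $\delta$-color bridge system is exactly what keeps the bridge contribution proportional to the contour length; substituting $\lvert C_{\rm c}\rvert \le z_{\rm c}$ bounds the total traversed length linearly in $z_{\rm c}$, and converting the per-step constant-factor count into base $4$ after tallying the bookkeeping bits yields the factor $4^{\frac{3\delta+1}{4\delta}z_{\rm c}}$. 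The crucial point is that without the bridge constraint the bridges could be arbitrarily long and the base of the exponential would depend on $n$; the constraint is precisely what forces a constant base.

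The main obstacle will be the bookkeeping rather than any deep idea: one must verify that the traversal encoding is \emph{injective}, i.e.\ that the recorded walk together with its marking bits genuinely determines $(B_{\rm c}, C_{\rm c})$ uniquely, and simultaneously pin down the exact per-edge bit budget over crossing edges, non-crossing edges, and bridge edges so that the constant comes out to exactly $4^{\frac{3\delta+1}{4\delta}}$ rather than some looser base. Since this delicate accounting is already carried out in \cite{cannon2019local}, the only genuinely new verification I would supply is the remark that the tracing, the branch-point handling, and the bridge-length inequality are all insensitive to whether the contours separate colors or wealth classes, so the color bound is identical to the established richness bound.
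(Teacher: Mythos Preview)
Your proposal is correct and matches the paper's treatment: the paper does not give its own proof but simply cites Lemma~7.6 of \cite{cannon2019local}, noting only that the bridge systems ``can be counted in a depth-first way,'' and your central observation that the count is label-agnostic (so the color bound is identical to the richness bound of Lemma~\ref{I:rbridge}) is exactly the point. Your additional sketch of the depth-first encoding and the role of the constraint $|B_{\rm c}|\le |C_{\rm c}|(1-\delta)/(2\delta)$ in producing the exponent $\frac{3\delta+1}{4\delta}$ goes beyond what the paper itself supplies, but is consistent with the cited source.
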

Combining Lemma \ref{I:rbridge} and \ref{I:cbridge} yields the following lemma.
\begin{lemma}\label{I:bridge_num}\label{I:rcbridge}
	For a given $z_{\rm c} = y_{\rm c} + x_{\rm c}$, there are at most $(\max\{z_{\rm r} \} + 1)(z_{\rm c} + 1) 9^{\alpha \sqrt{n}}4^{(\frac{3\delta +1}{4\delta})(z_{\rm c} + \max\{z_{\rm r} \})}$ ways of constructing a $\delta-$race\underbar\ wealth bridge system.
\end{lemma}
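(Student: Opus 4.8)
The plan is to prove this as a pure product-counting argument, since the statement itself indicates that it "combines" Lemma \ref{I:rbridge} and Lemma \ref{I:cbridge}. The first step is to record that, by the definition of a $\delta$-color-and-richness bridge system as the joint of a $\delta$-color bridge system $(B_{\rm c}, C_{\rm c})$ and a $\delta$-richness bridge system $(B_{\rm r}, C_{\rm r})$, any such joint system decomposes uniquely into its color part and its richness part, and conversely any valid pair $\bigl((B_{\rm c},C_{\rm c}),(B_{\rm r},C_{\rm r})\bigr)$ reassembles into a valid joint system. The color part is a function of the race labels alone (color edges, color contours, color bridges), while the richness part is a function of the wealth labels alone; even though a single dual edge may simultaneously be a color edge and a richness edge (when its two incident agents differ in both race and wealth), this does not couple the two constructions, because we are counting ordered pairs of subsystems rather than partitioning a shared edge set. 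Hence the number of joint systems equals the product of the number of $\delta$-color bridge systems and the number of $\delta$-richness bridge systems.

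Next I would invoke the two single-attribute counts. For the fixed color contour length $z_{\rm c}$, Lemma \ref{I:cbridge} gives at most $(z_{\rm c}+1)\,3^{\alpha\sqrt{n}}\,4^{\frac{3\delta+1}{4\delta}z_{\rm c}}$ color bridge systems. For the richness side, only $z_{\rm c}$ is fixed in the statement, so the bridged richness contour length $z_{\rm r}$ is free and ranges over $0,1,\dots,\max\{z_{\rm r}\}$, where $\max\{z_{\rm r}\}\le 3n$ is bounded by the total number of dual edges of $G_{\triangle}$. For each individual value of $z_{\rm r}$, Lemma \ref{I:rbridge} bounds the number of richness bridge systems by $(z_{\rm r}+1)\,3^{\alpha\sqrt{n}}\,4^{\frac{3\delta+1}{4\delta}z_{\rm r}}$. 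Since $t\mapsto (t+1)\,4^{\frac{3\delta+1}{4\delta}t}$ is increasing, every term is dominated by its value at $t=\max\{z_{\rm r}\}$, and there are at most $\max\{z_{\rm r}\}+1$ admissible values of $z_{\rm r}$; bounding each term by the largest therefore yields at most $(\max\{z_{\rm r}\}+1)\,3^{\alpha\sqrt{n}}\,4^{\frac{3\delta+1}{4\delta}\max\{z_{\rm r}\}}$ richness bridge systems in total, the single linear prefactor $(\max\{z_{\rm r}\}+1)$ absorbing the number of admissible lengths.

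Multiplying the two bounds is the final step. Using $3^{\alpha\sqrt{n}}\cdot 3^{\alpha\sqrt{n}}=9^{\alpha\sqrt{n}}$ and $4^{\frac{3\delta+1}{4\delta}z_{\rm c}}\cdot 4^{\frac{3\delta+1}{4\delta}\max\{z_{\rm r}\}}=4^{(\frac{3\delta+1}{4\delta})(z_{\rm c}+\max\{z_{\rm r}\})}$, the product becomes $(\max\{z_{\rm r}\}+1)(z_{\rm c}+1)\,9^{\alpha\sqrt{n}}\,4^{(\frac{3\delta+1}{4\delta})(z_{\rm c}+\max\{z_{\rm r}\})}$, which is exactly the claimed bound.

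I expect the only genuine subtlety --- as opposed to bookkeeping --- to be the decomposition/independence step: one must confirm that the sharing of dual edges between color and richness contours does not break the product rule. The clean way to argue this is that the color subsystem depends solely on the induced two-coloring of agents by race and the richness subsystem solely on the induced two-coloring by wealth, so the pair (color system, richness system) is specified by two logically independent sets of choices; the product count is then immediate and no inclusion--exclusion over shared edges is required. The only other point to handle is the passage from "for a given $z_{\rm r}$" in Lemma \ref{I:rbridge} to "over all admissible $z_{\rm r}$", which is routine monotonicity and costs merely the polynomial factor $(\max\{z_{\rm r}\}+1)$ --- harmless in the downstream Peierls estimates \eqref{urbproof1} and \eqref{urbproof2}.
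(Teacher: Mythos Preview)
Your proposal is correct and matches the paper's approach exactly: the paper simply states that the lemma follows by ``combining Lemma~\ref{I:rbridge} and~\ref{I:cbridge}'', and your product-counting argument is the natural unpacking of that one line. Your discussion of the independence of the color and richness subsystems and the monotone replacement of $z_{\rm r}$ by $\max\{z_{\rm r}\}$ is more explicit than the paper but in the same spirit; note only that strictly summing over all admissible $z_{\rm r}$ would produce an extra $(\max\{z_{\rm r}\}+1)$ factor, but this polynomial discrepancy is present in the paper's statement as well and is immaterial for the downstream Peierls estimates.
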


\begin{figure} 
	\begin{subfigure}[t]{0.3\textwidth} 
		\centering
		\includegraphics[height=1.3in]{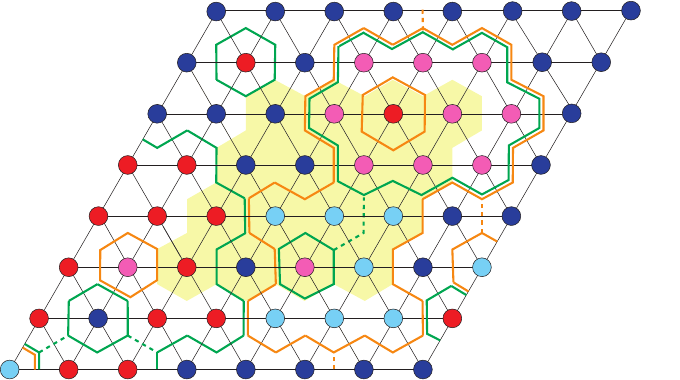}
		\captionsetup{justification=centering}
		\subcaption{}\label{3a}
	\end{subfigure}
	\begin{subfigure}[t]{0.3\textwidth} 
		\centering
		\includegraphics[height=1.3in]{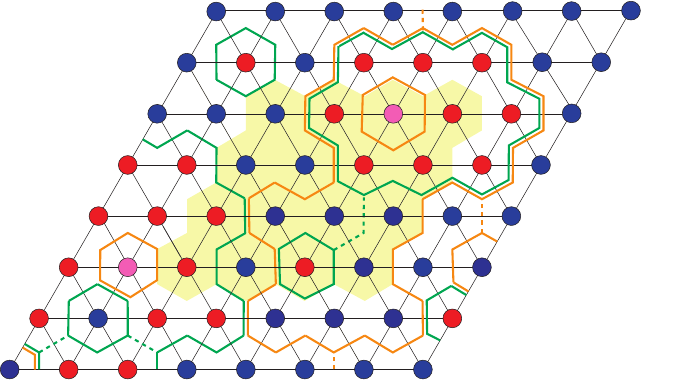}
		\captionsetup{justification=centering}
		\subcaption{}\label{3b}
	\end{subfigure}
	\begin{subfigure}[t]{0.3\textwidth} 
		\centering
		\includegraphics[height=1.3in]{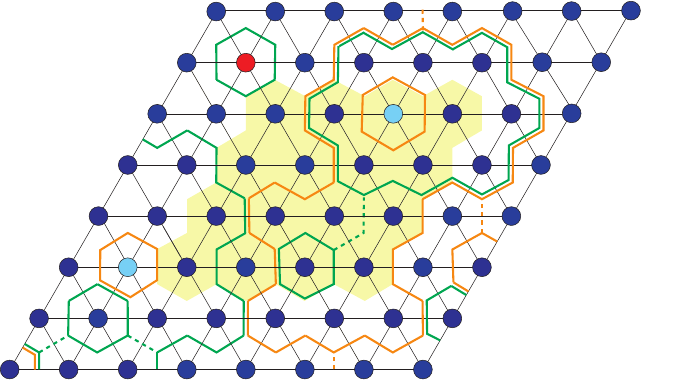}
		\captionsetup{justification=centering}
		\subcaption{}\label{3c}
	\end{subfigure}
	\caption[]{(a) A configuration $\sigma$ with  a $\delta-$race\underbar\ wealth bridge system. (b) Richness inversion in $\psi = f_1(\sigma)$. (c) Color inversion in $\tau = f_2(\psi)$. }\label{3}
\end{figure}

\subsection{The Inversion Mappings} \label{A:inv}
For any configuration $\sigma \in \Omega_{\rm t}$, after bridging, in order to get the upper bounds of $h(\nu) - h(\sigma)$ and $p(\sigma) - p(\nu)$, we define the mappings to first eliminate the bridged racially heterogeneous edges $h(\sigma)$ and bridged poor agents, and then define mappings to recover the racially heterogeneous edges and the poor agents on the urban sites in various designed ways. Different proof targets will lead to different recovery mappings, shown in the theorems in the following sections. For this part, we define the \textit{richness inversion} function $f_1(\cdot)$ and \textit{color inversion} function $f_2(\cdot)$ which eliminate most of the racially heterogeneous edges and poor agents respectively.

\vskip.1in
\noindent{\bf  Richness Inversion.} 
First, we represent each agent with two bits, with poor red denoted $00$, poor blue denoted $01$, rich red denoted $10$ and rich blue denoted $11.$ 
We define the  \textit{richness inversion} function $f_1(\cdot)$ as follows: for any agent $ij$, where $i$ is the richness bit, and $j$ is the color bit, the richness bit is flipped to ${(i + b)(\text{mod} 2)}$ for agent $ij$ that is surrounded by $b$ bridged richness contours or unbridged crossing richness contours (see the left corner's contour as an example of  richness contours in Figure \ref{3b}). The color bit remains unchanged. See Figure \ref{3b} for illustrations. 
\begin{lemma}[Lemma 7.5 in \cite{cannon2019local}] \label{I:f1}
	For any configuration $\psi = f_1(\sigma)$, there are at most $\delta n$ poor agents, and they are unbridged; no additional color edges are introduced; and for any mapped configuration $\psi$, there is only one preimage for a given $\delta-$richness bridge system.
\end{lemma}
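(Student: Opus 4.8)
The plan is to adapt the color-inversion argument of \cite{cannon2019local} (their Lemma 7.5) to the wealth dimension, treating $f_1$ as a parity flip of the richness bit that erases every bridged richness contour while leaving the color bits — and hence all color edges — untouched. I would first dispose of the claim that no additional color edges are introduced: by definition $f_1$ changes only the richness bit $i$ of each agent $ij$ and never the color bit $j$, so the color of every agent, and therefore the entire set of racially heterogeneous edges, is identical in $\sigma$ and $\psi = f_1(\sigma)$. In particular no color edge is created (indeed none is created or destroyed), which settles that part immediately.

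For the bound on poor agents, I would reinterpret the rule ``flip the richness bit to $(i+b)\bmod 2$ when the agent is surrounded by $b$ bridged richness contours or unbridged crossing richness contours'' as flipping, for each such contour, the richness of every agent it encloses. The key topological observation is that flipping the richness of the entire interior of a richness contour erases that contour: across each of its edges the two incident agents now agree in wealth, so it is no longer a richness edge, while no new richness edge appears elsewhere because interior agents are all flipped together. Carrying this out simultaneously for all bridged (and unbridged crossing) richness contours — parity handling nested contours correctly, exactly as in the Ising-type Peierls flip — leaves a configuration whose only surviving richness boundaries separate unbridged regions. I would then argue that every poor agent that was bridged in terms of richness, i.e.\ joined through poor agents to the boundary of its component or to a bridged richness contour, is flipped to rich, so that the poor agents surviving in $\psi$ lie within the set of agents that are unbridged in terms of richness in $\sigma$. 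By the defining property of a $\delta$-richness bridge system (Lemma \ref{I:rbr}), the number of unbridged agents in each component $F$ is at most $\delta|F|$, and summing over components gives at most $\delta n$ such agents; hence $\psi$ has at most $\delta n$ poor agents and they are all unbridged.

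For invertibility, I would observe that once the $\delta$-richness bridge system is fixed, the integer $b$ assigned to each agent is determined purely by that bridge data, so $f_1$ acts as the involution that adds (mod $2$) a fixed $\{0,1\}$-pattern to the richness bits. Re-applying the same parity flips to $\psi$ recovers $\sigma$, and this reconstruction is forced, so for a given $\delta$-richness bridge system there is exactly one preimage; this is precisely the bookkeeping that will later let us control $|f^{-1}(\nu)|$ through Lemma \ref{I:rcbridge}.

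The main obstacle is the topological heart of the second paragraph: verifying rigorously that the parity flip erases every bridged richness contour without spawning new richness edges, and that the flip is oriented so that bridged poor agents become rich rather than the reverse. This requires the standard but delicate Peierls nesting argument together with the convention — inherited from the bridge-system construction of \cite{cannon2019local} — that the reference wealth along the boundary is rich, so that the surviving non-reference (poor) agents are confined to the unbridged set whose size the bridge system controls.
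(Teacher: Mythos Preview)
The paper does not supply its own proof of this lemma; it simply cites Lemma~7.5 of \cite{cannon2019local} and states the adapted conclusion for the richness dimension. Your proposal is a faithful reconstruction of that cited argument --- the parity flip erases the bridged richness contours, leaves the color bits (hence the color edges) untouched, confines the surviving poor agents to the unbridged set of size at most $\delta n$, and is an involution once the bridge system is fixed --- so it matches the intended approach.
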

%%Lemma 7.5 only shows  there are at most $\delta(1-c)n$ poor agents on the non-urban sites. But I just think the remaining two sentences in this lemma is just very obvious.

\vskip.1in
\noindent{\bf Color Inversion.} 
We define the function \textit{color inversion} $f_2(\cdot)$ as: for any agent $ij$, where $i$ is the richness bit, and $j$ is the color bit, the color bit is flipped to ${(j + b)(\text{mod} 2)}$ for the agent $ij$ that is surrounded by $b$ bridged color contours or unbridged crossing color contours (like the red agent on the right boundary in Figure \ref{3b}). The richness bit remains unchanged. See Figure \ref{3c} for illustrations.
\begin{lemma}[Lemma 7.8 in \cite{cannon2019local}] \label{I:f2}
	 For any $\tau = f_2(\psi)$, $(x_{\rm c}+y_{\rm c})$ of the original racially heterogeneous edges in $\psi$ are eliminated: $h(\tau) - h(\psi) \leq - (x_{\rm c}+y_{\rm c})$;
	 no additional poor agents are introduced; and for any mapped configuration $\tau$ with a given bridge system, there is only one preimage that can be mapped to it.
\end{lemma}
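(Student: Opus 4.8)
The plan is to verify the three assertions of the lemma separately, treating the color bit exactly as in the single-color inversion of \cite{cannon2019local} and checking that the extra richness bit never interferes. Recall that $f_2$ acts in place: it leaves every agent's position and richness bit fixed and replaces its color bit $j$ by $(j+b)\bmod 2$, where $b$ is the number of bridged color contours together with unbridged crossing color contours that surround the agent. I would first record, from Lattice Duality, that each edge of $G_{\triangle}$ has a unique dual in $G_{\rm hex}$, so each $G_{\triangle}$-edge is crossed by at most one contour edge; in particular the $x_{\rm c}+y_{\rm c}$ edges carried by the bridged color contours correspond to $x_{\rm c}+y_{\rm c}$ \emph{distinct} edges of $G_{\triangle}$, each racially heterogeneous in $\psi$ by the definition of a color edge.

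For the edge count I would run the standard contour/parity argument on $G_{\rm hex}$. Fix an edge $e\in G_{\triangle}$ with endpoints $u,v$. The counts $b_u$ and $b_v$ differ exactly by the number of $b$-counted contours that separate $u$ from $v$, i.e.\ whose dual passes through $e$. If $e$ is crossed by a single bridged color-contour edge, then $b_u$ and $b_v$ differ by one; since $u,v$ had opposite color bits in $\psi$, after the flip the color bits become $j_u+b_u$ and $j_v+b_v$, whose difference is $(j_u-j_v)+(b_u-b_v)\equiv 1+1\equiv 0 \pmod 2$, so $e$ becomes racially homogeneous. Conversely, if no $b$-counted contour crosses $e$ then $b_u=b_v$, the relative colors of $u,v$ are preserved, and $e$ retains its homogeneous/heterogeneous status; in particular no new heterogeneous edge is created off the bridged contours. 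Summing over the $x_{\rm c}+y_{\rm c}$ bridged color-contour edges yields $h(\tau)-h(\psi)\le -(x_{\rm c}+y_{\rm c})$.

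The remaining two assertions are bookkeeping. Since $f_2$ touches only color bits, the richness bit of every agent — and hence the set of poor agents and their fixed locations — is unchanged, so no poor agent is created; this is the $f_2$-analogue of the ``no additional color edges'' clause of Lemma~\ref{I:f1}. For injectivity, observe that the surrounding-contour count $b$ depends only on an agent's position and on the bridge system, not on any colors; thus $b$ is identical in $\psi$ and in $\tau$, and re-applying $f_2$ with the same bridge system flips each color bit back (as $j+b+b\equiv j \pmod 2$), recovering $\psi$. Hence, for a fixed bridge system, $f_2$ is an involution and has a unique preimage.

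The main obstacle is the parity step: one must argue cleanly that crossing a bridged color-contour edge changes $b$ by exactly one. This requires the contours to be realized as self-avoiding polygons and walks on $G_{\rm hex}$, as in the bridge-system construction, so that ``surrounded by'' is well defined even on the torus, and it requires care in how crossing contours and their bridges to the boundary are counted in $b$ (so that an open crossing contour still induces a consistent inside/outside parity). Once this geometric fact is in hand — it is exactly the content imported from Lemma~7.8 of \cite{cannon2019local}, now applied to the color bit alone while the richness bit is carried along inertly — the edge-elimination bound together with the richness-preservation and injectivity claims follow immediately.
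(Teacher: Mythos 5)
Your proposal is correct, but it relates to the paper in an unusual way: the paper never proves this lemma at all — it is imported verbatim, statement and proof, as Lemma 7.8 of \cite{cannon2019local}, with only the cosmetic change that the richness bit is carried along inertly. Your write-up therefore does strictly more than the paper does. The pieces you supply are all sound: each $G_{\triangle}$-edge has a unique dual in $G_{\rm hex}$, and the color edges form edge-disjoint self-avoiding polygons (every $G_{\rm hex}$-vertex is dual to a triangle of three agents, hence has color-degree $0$ or $2$), so the $x_{\rm c}+y_{\rm c}$ counted contour edges are distinct and each is heterogeneous in $\psi$; the parity computation $(j_u-j_v)+(b_u-b_v)\equiv 0 \pmod 2$ shows exactly those edges become homogeneous while every other edge preserves its status, giving $h(\tau)-h(\psi)\le -(x_{\rm c}+y_{\rm c})$; richness preservation is immediate since $f_2$ touches only color bits; and the involution argument $j+b+b\equiv j \pmod 2$ (valid because $b$ depends only on position and the bridge system, not on colors) gives the unique preimage. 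The one step you explicitly defer — that ``surrounded by'' yields a consistent parity for \emph{crossing} contours — is genuinely the geometric crux: on the torus a non-contractible self-avoiding cycle does not bound a region, so one needs the fundamental-domain boundary convention of \cite{cannon2019local} to make $b$ well defined and to make it change by exactly one across such a contour edge. Since the paper defers the entire lemma to that citation, your partial deferral on this single point leaves your argument no less complete than the paper's, and considerably more informative.
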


\subsection{The Color and Richness Recovery Mappings} \label{c and r reovery}
After eliminating the bridged poor agents and the bridged racially heterogeneous edges in $f_1(\cdot)$ and $f_2(\cdot)$, we need to recover the same ratio of each color and richness as in $\sigma$, which is defined in $f_3(\cdot)$, $f_4(\cdot)$ and $f_5(\cdot)$ as the following.

%\vskip.1in
\noindent{\bf Pink Recovery.}
For any $\tau = (f_2{\circ}f_1)(\sigma)$, we define the \textit{pink recovery} function $f_3(\tau)$ as to flip the agents' colors to pink starting from a fixed place in a given order except when encountering the following unbridged agents: we flip the unbridged pink to cyan, cyan to red, and red to blue. The flipping process stops once reaching the correct number of the pink agents as in $\sigma.$
 
\begin{lemma} \label{I:f3}
For any mapped configuration $\zeta$, if the starting location of $f_3$ and the flipping order are specified, there are at most $n$ preimages that can be mapped to it: $|f_3^{-1}(\zeta)| \leq n$. 
\end{lemma}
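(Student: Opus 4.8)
The plan is to exploit that $f_3$ is a \emph{deterministic} procedure once its starting location and flipping order are fixed, so the only information the map can destroy is \emph{how far the flipping proceeded}. First I would record the forward behavior precisely: starting from the fixed site and following the fixed order, $f_3$ visits agents one at a time and applies a fixed local rule --- recoloring a visited bridged agent to pink, and cyclically shifting a visited unbridged agent by pink$\to$cyan$\to$red$\to$blue --- and it halts the instant the pink count reaches its target value, namely the original number $r_{\rm p}n$ of poor red agents, which is common to every configuration in $\Omega$. Consequently $f_3$ alters only a \emph{prefix} of the agents in the specified order, those visited up to the halting position, and leaves every agent after that position untouched.

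To bound $|f_3^{-1}(\zeta)|$, I would reconstruct candidate preimages by guessing where the flipping stopped. Given $\zeta$ and a hypothesized halting agent (one of the $n$ agents, in the fixed order), every agent after that position must already coincide with the preimage $\tau$, while the agents in the processed prefix are recovered by inverting the local rule. The cyclic shift on unbridged agents is a permutation of the four labels and hence reversible, and the recoloring of bridged agents is reversible once we invoke the canonical structure of $\tau$ guaranteed by Lemmas \ref{I:f1} and \ref{I:f2}: after $f_1$ every bridged agent is rich, and after $f_2$ each bridged color component is monochromatic, so the pre-flip label of each altered bridged agent is pinned down by its position within its component. Thus each guessed halting position yields at most one candidate $\tau$, and since there are at most $n$ such positions we obtain $|f_3^{-1}(\zeta)| \le n$. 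I would also note that only an \emph{upper} bound is needed, so it is harmless if some guessed positions correspond to no genuine preimage; each simply contributes at most one candidate.

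The main obstacle is certifying the invertibility of the recoloring step for bridged agents, since, unlike the cyclic shift, "set to pink" is not reversible in isolation. The argument must therefore lean on the structural guarantees produced by $f_1$ and $f_2$, and I would make explicit that the flipping is carried out componentwise along a canonical order so that knowing the halting position determines exactly which agents were altered and, through the monochromatic all-rich structure of $\tau$, what their labels were beforehand. A secondary point to verify is that the halting rule is well defined even though the pink count need not be monotone (the at most $\delta n$ unbridged exceptions can decrease it), but this does not affect the count, as the preimage is still fully specified by the number of processed agents.
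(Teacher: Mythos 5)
Your proposal is correct and follows essentially the same route as the paper's proof, which simply observes that given $\zeta$ the preimage is determined once the stopping place is known, and there are at most $n$ possible stopping places. The additional detail you supply on inverting the local flipping rules (using the structure of bridged agents guaranteed by Lemmas \ref{I:f1} and \ref{I:f2}) is left implicit in the paper but is consistent with its argument.
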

\begin{proof}
Given $\zeta$, it suffices to recover its preimage if we are given the stopping place and there are at most $n$ possible stopping places.
\end{proof}

%\vskip.1in
\noindent{\bf Cyan Recovery.}
For any $\zeta = (f_3{\circ}f_2{\circ}f_1)(\sigma)$, we define the \textit{cyan recovery} function $f_4(\zeta)$ as to flip the agents starting from the stopping place of $f_3(\cdot)$ in a given order to cyan except when encountering the following unbridged agents: we remain the unbridged pink to pink, red to red, and flip cyan to blue. The flipping will be stopped after reaching the right number of the cyan agents as in $\sigma.$ The proof of Lemma \ref{I:f4} is similar to the proof of Lemma \ref{I:f3}. Given any mapped configuration $\phi$, it suffices to recover its preimage $f_4^{-1}(\phi)$ if we are given the starting location and the stopping place of $f_4(\cdot)$, which is bounded by $n^2.$
\begin{lemma}\label{I:f4}
For any mapped configuration $\phi$, if the flipping order is specified, there are at most $n^2$ preimages that can be mapped to it: $|f_4^{-1}(\phi)| \leq n^2$. 
\end{lemma}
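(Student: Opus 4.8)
The plan is to mirror the proof of Lemma~\ref{I:f3}, viewing $f_4$ as a single deterministic sweep over the agents in the globally fixed traversal order and isolating the one place where $f_4$ genuinely differs from $f_3$. Recall that $f_4$ acts on $\zeta = (f_3 \circ f_2 \circ f_1)(\sigma)$ by walking through agents, in the prescribed order, from a \emph{starting location} to a \emph{stopping location}, recoloring each agent in this window by a local rule that depends only on the agent's current color and its bridged/unbridged status (the default is to set the agent to cyan, while unbridged pink and unbridged red are left fixed and unbridged cyan is sent to blue), and leaving every agent outside the window untouched. The window terminates exactly when the number of cyan agents first equals the target value carried over from $\sigma$.

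First I would observe that, for a fixed bridge system, the local recoloring rule is invertible: given the image color of a site in $\phi$ together with the recorded bridged/unbridged label of that site, the preimage color in $\zeta$ is uniquely determined. This is the same bijectivity used for the local rule of $f_3$, now applied to the cyan/blue exchange and to the controlled bridged background left behind by $f_1$ and $f_2$. Consequently, if I am handed the two endpoints of the window --- the starting location $s_1$ and the stopping location $s_2$ --- the entire preimage $\zeta$ is pinned down: sites lying before $s_1$ or after $s_2$ in the traversal order agree with $\phi$, and sites inside the window are recovered one at a time by inverting the local rule along the fixed order.

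It then remains to count the admissible endpoint pairs, and here lies the one substantive difference from Lemma~\ref{I:f3}. For $f_3$ the sweep begins at a canonical, predetermined location, so only the stopping place is a free parameter, giving at most $n$ preimages. For $f_4$, by contrast, the sweep begins at the \emph{stopping location of} $f_3$, which is not canonical --- it depends on the particular preimage and cannot be read off from $\phi$ alone. Thus both the starting location $s_1$ and the stopping location $s_2$ must be supplied, and each ranges over at most $n$ sites. Since each pair $(s_1, s_2)$ yields at most one preimage, I conclude $|f_4^{-1}(\phi)| \leq n \cdot n = n^2$, which is exactly the claim.

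The main obstacle I anticipate is bookkeeping rather than conceptual: I must verify that the local rule really does restrict to a bijection on the color states that can actually occur inside the recovery window, so that no two distinct preimage colors collapse onto the same image color for a given bridged/unbridged label. This relies on the structural guarantees of the earlier maps --- that after $f_1$ the only poor agents are unbridged (Lemma~\ref{I:f1}) and that $f_2$ introduces no new poor agents (Lemma~\ref{I:f2}) --- which together constrain the color composition of the bridged background and ensure the cyan/blue exchange is reversible. Once this invertibility is confirmed, the two-endpoint count above immediately delivers the bound $|f_4^{-1}(\phi)| \leq n^2$.
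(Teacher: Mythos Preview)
Your proposal is correct and follows essentially the same approach as the paper: the paper's proof simply states that given the starting location and the stopping place of $f_4(\cdot)$ one can recover the preimage, and since each has at most $n$ possibilities the bound $n^2$ follows. Your version is more detailed in justifying the invertibility of the local recoloring rule, but the core argument (two free endpoint parameters, each with at most $n$ choices) is identical.
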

%For any $\zeta = (f_3{\circ}f_2{\circ}f_1)(\sigma)$, the cyan recovery operation $\phi = f_4(\zeta)$ will introduce no additional racially heterogeneous edges related with the unbridged agents. 

%\vskip.1in
\noindent{\bf Red Recovery.}
For any $\phi = (f_4{\circ}f_3{\circ}f_2{\circ}f_1)(\sigma)$, we define the \textit{red recovery} function $f_5(\cdot)$ as to flip agents starting from the stopping place of $f_4(\cdot)$ in a given order to red except when encountering the following unbridged agents: we flip the unbridged red to blue, cyan to pink, and pink to cyan.  To guarantee the right number of the cyan and pink in the mapped configuration $\nu$, whenever we flip an unbridged cyan to pink during this phase, we flip one pink back to cyan starting from the stop location of $f_3(\cdot)$. If we encounter non-pink agents, we first recover its colors before $f_3(\cdot)$ and use the rule of flipping the unbridged agents for $f_4(\cdot)$. Whenever we flip an unbridged pink to cyan, we flip one cyan to pink starting from the starting location of $f_4(\cdot)$, and if we encounter agents that are not cyan, we first recover its colors before $f_4(\cdot)$ and use the rule of $f_3(\cdot)$. We stop such operations after reaching the right number of the red for $\nu$. 

The proof of Lemma \ref{I:f5} is similar as Lemma \ref{I:f3}. To show the upper bound of the number of preimages $|f_5^{-1}(\nu)|$ for a given $\nu$: we first need to find the stop location of $f_4(\cdot)$, which has at most $n$ possibilities. Then we complement the colors of the first $k$ elements of $\nu$, where $k \in \{0, 1, ..., n-1\}$, and possibly we also need to find the stop location of $f_3$ (same location as the starting location of $f_4$), which has at most $n$ possibilities. Altogether there are at most $n^3$ different preimages.
\begin{lemma}\label{I:f5}
For any mapped configuration $\nu$, if the flipping order is specified, there are at most $n^3$ preimages that can be mapped to it: $|f_5^{-1}(\nu)| \leq n^3$. 
\end{lemma}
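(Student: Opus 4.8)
The plan is to bound $|f_5^{-1}(\nu)|$ by the same device that proves Lemmas \ref{I:f3} and \ref{I:f4}: I will argue that, given the image $\nu$ together with the data already fixed by the construction (the $\delta$-color-and-richness bridge system and the global flipping order), the preimage $\phi$ is completely determined by a short list of \emph{reference locations}, each drawn from a set of size at most $n$. Because $f_5$ is a deterministic sweep once its starting and stopping points are pinned down, the only freedom in reconstructing $\phi$ lies in recovering these locations, and counting the possibilities yields the bound. This mirrors the pattern already established: $f_3$ needs only its stopping place (giving $n$ preimages), $f_4$ needs a start and a stop (giving $n^2$), and $f_5$ will require one further anchor, pushing the count to $n^3$.

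First I would pin down which locations the forward map consults. The main red-recovery sweep begins at the stopping place of $f_4$ and proceeds in the fixed order, flipping bridged agents toward red and applying the unbridged exception rules (red $\to$ blue, cyan $\to$ pink, pink $\to$ cyan) until the red count of $\nu$ is reached. The two compensating sweeps — one flipping a pink $\to$ cyan each time an unbridged cyan is turned pink, the other flipping a cyan $\to$ pink each time an unbridged pink is turned cyan — are anchored at the stopping place of $f_3$ and the starting place of $f_4$, respectively. By the definition of $f_4$, which itself begins at the stopping place of $f_3$, these two anchors coincide, so the forward map refers to only two distinct locations: the stopping place of $f_4$ (where the main sweep starts) and the stopping place of $f_3$ (which also serves as the start of $f_4$ and as the common anchor of both compensations).

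To invert, I would reconstruct $\phi$ from $\nu$ by reversing the main sweep together with the paired compensating sweeps in lockstep, which needs exactly three indices, each with at most $n$ choices: (i) the stopping place of $f_4$, i.e.\ where the main sweep began; (ii) an integer $k \in \{0, 1, \dots, n-1\}$ recording how far the main sweep advanced before the red count was met, equivalently which leading block of $\nu$ must have its colors restored; and (iii) the stopping place of $f_3$ (= starting place of $f_4$), needed to undo the compensations. Once this triple is fixed, the fixed flipping order and the bridge data make every reverse step forced: unbridged agents are identified by the bridge system, their preimages under the exception rules are uniquely determined, and each reversed compensation is triggered precisely when the corresponding unbridged cyan/pink flip is undone in the main sweep. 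Hence the preimage is unique given the triple, giving $|f_5^{-1}(\nu)| \leq n \cdot n \cdot n = n^3$.

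The main obstacle is the coupled bookkeeping introduced by the compensating sweeps, which is exactly what distinguishes $f_5$ from $f_3$ and $f_4$ and raises the count from $n$ and $n^2$ to $n^3$. I must verify that reversing the main sweep and the two compensations can be carried out consistently without any further location data; in particular, that the ``recover-then-apply-the-companion-rule'' clauses — where a compensating sweep, on meeting a non-pink or non-cyan agent, first restores that agent's color under the other map's rule — do not cascade into additional unknown stopping points. The key claim is that each such clause is deterministic once the anchor in (iii) is known, since the companion map's rule and order are themselves fixed; thus the whole tangle of flips and counter-flips unwinds from the single triple of reference points. Confirming this determinism — that no hidden fourth degree of freedom is required — is the crux, and once it is established the $n^3$ bound follows immediately.
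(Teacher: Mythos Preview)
Your proposal is correct and follows essentially the same approach as the paper. The paper's proof identifies the same three pieces of data---the stop location of $f_4$, an index $k\in\{0,\dots,n-1\}$ recording how far the main sweep advanced, and the stop location of $f_3$ (which coincides with the start of $f_4$)---and concludes the $n^3$ bound in the same way; your treatment of the compensating sweeps and why no fourth anchor is needed is more explicit than the paper's, but the underlying argument is identical.
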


\subsection{The Centralized Recovery and Distributed Recovery} \label{recovery_way}
\begin{figure} 
	\begin{subfigure}[t]{0.3\textwidth} 
		\centering
		\includegraphics[height=1.3in]{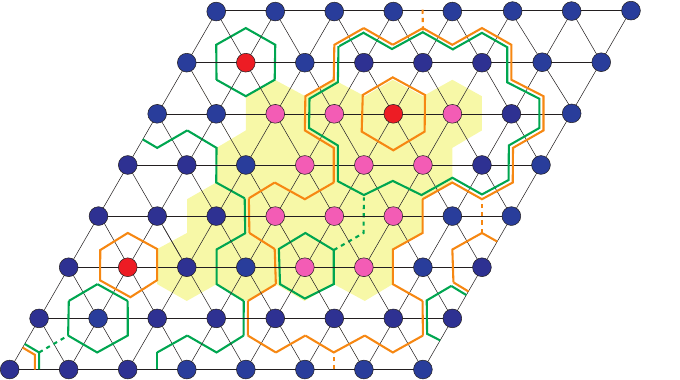}
		\captionsetup{justification=centering}
		\subcaption{}\label{4a}
	\end{subfigure}
	\begin{subfigure}[t]{0.3\textwidth} 
		\centering
		\includegraphics[height=1.3in]{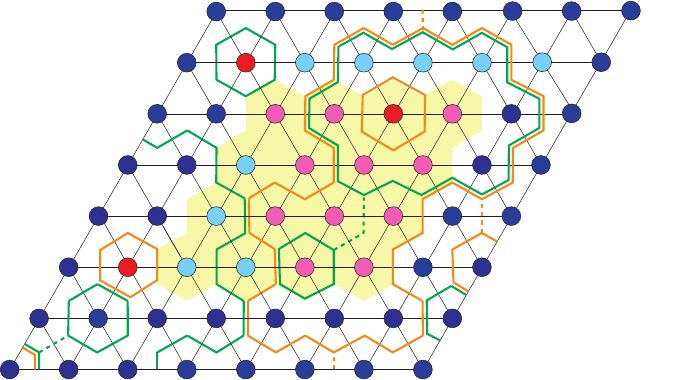}
		\captionsetup{justification=centering}
		\subcaption{}\label{4b}
	\end{subfigure}
	\begin{subfigure}[t]{0.3\textwidth} 
		\centering
		\includegraphics [height=1.3in]{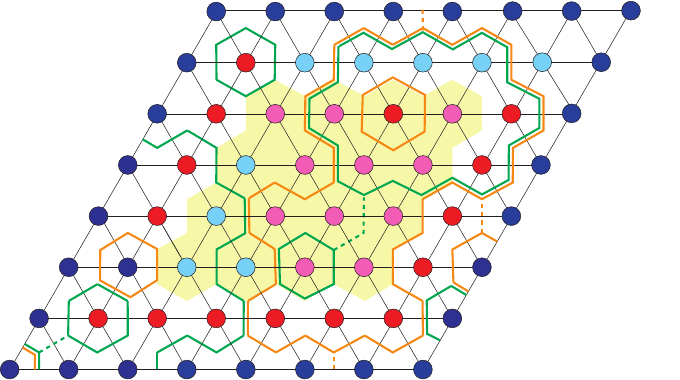}
		%\captionsetup{justification=centering}
		\captionsetup{justification=centering}
		\subcaption{}\label{4c}
	\end{subfigure}
	\caption[]{(a) Demonstration of $\zeta = f_3(\tau)$. (b) Demonstration of $\phi = f_4(\zeta)$. (c) Demonstration of $\nu = f_5(\phi)$. }\label{4}
\end{figure}

\noindent{\bf Centralized Recovery.}
If the urban sites are centralized like shown in Figure 1a in the paper, for the pink, cyan, and red recoveries defined in $f_3, f_4$ and $f_5$, the starting location of $f_3$ and the flipping order of each function can be specified in the \textit{centralized way}: the starting location of $f_3$ is specified to be the center of the urban area, and the flipping order for $f_3, f_4, f_5$ are specified as in clockwise direction and loop to the immediate outer layer when completing flipping one clockwise cycle like shown in Figure \ref{4a}, \ref{4b}, and \ref{4c}. In such a way, the following upper bound can be obtained.
\begin{lemma}\label{I:f5_h1}
	If the recoveries are specified in the centralized way, for any $\nu = (f_5{\circ}f_4{\circ}f_3)(\tau),$ there are at most $3\alpha \sqrt{n}$ racially heterogeneous edges introduced by the recovery operations: $h(\nu) - h(\tau) \leq 3\alpha \sqrt{n}$.
\end{lemma}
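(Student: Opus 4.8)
The plan is to track the geometry of the four color classes produced by the centralized recovery and to count only the interfaces that are racially heterogeneous. Recall that $f_3,f_4,f_5$ sweep clockwise and loop to the next outer layer starting from the center of the urban hexagon (Figure \ref{1a}), so $f_3$ deposits the pink (poor red) agents as a roughly hexagonal disk around the center, $f_4$ wraps the cyan (poor blue) agents in a concentric annulus immediately outside the pink disk, $f_5$ wraps the red (rich red) agents in the next annulus, and the remaining outermost sites are left blue (rich blue). First I would make precise that, up to the small set of unbridged agents and the single radial seam where each spiral layer closes, these four regions are nested, so that reading outward the \emph{race} pattern is red (pink), blue (cyan), red (red), blue (blue).

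Reading this sequence, the only adjacencies that can turn racially heterogeneous under the recovery lie on the three interfaces pink/cyan, cyan/red, and red/blue; every other newly created adjacency is within a single race (pink--pink, cyan--cyan, red--red, blue--blue), and crucially the pink disk never touches the red annulus because the cyan annulus separates them. Hence it suffices to bound the length of each of these three interfaces. Each interface is a closed curve in the dual lattice $G_{\mathrm{hex}}$ separating an inner spiral-filled hexagonal region of at most $n$ cells from its complement, so the same triangular-lattice frontier estimate that yields the factor $\alpha\sqrt{n}$ in the bridge-counting Lemmas \ref{I:rbridge} and \ref{I:cbridge} bounds each interface by $\alpha\sqrt{n}$ edges. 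Summing over the three interfaces gives $h(\nu)-h(\tau)\leq 3\alpha\sqrt{n}$.

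The delicate point, which I expect to be the main obstacle, is controlling how far the nested picture is perturbed by the unbridged agents and by the spiral seam. The recovery rules flip only bridged agents to the target color and apply cyclic color shifts to the unbridged ones, so the recovered regions are not perfectly monochromatic annuli; I would argue that these perturbations do not create new \emph{macroscopic} interfaces, i.e.\ each concentric boundary remains a single contour whose length is still governed by the hexagonal frontier of the spiral and hence by $\alpha\sqrt{n}$, rather than contributing an $\Omega(\delta n)$ term. Finally, I would present this lemma alongside the companion estimate $p(\sigma)-p(\nu)\leq -\delta n$, because the three-interface cost is precisely the \emph{energy--energy tradeoff} flagged after Theorem \ref{urb1}: placing the poor classes pink and cyan innermost, nearest the centralized urban sites, is what forces the $p$-bound, and the resulting $3\alpha\sqrt{n}=o(n)$ penalty in $\lambda^{-h}$ is dominated by the $\gamma^{\delta n}$ gain, so the ordering pink--cyan--red--blue is justified by both quantities at once.
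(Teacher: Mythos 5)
Your decomposition --- three nested interfaces (pink/cyan, cyan/red, red/rest), each bounded by the perimeter $\alpha\sqrt{n}$ of the fundamental domain --- is exactly the paper's, and that part of the count is fine. The genuine gap is the step you yourself flag as ``the main obstacle'' and then leave as a promissory note: why the up-to-$\delta n$ unbridged agents scattered through the bulk of each region contribute nothing to $h(\nu)-h(\tau)$. Moreover, the mechanism you propose for closing it is the wrong one: you suggest the recovered regions stay essentially monochromatic annuli, each concentric boundary ``remaining a single contour'' of length $O(\sqrt{n})$. They do not. An unbridged pink agent deep inside the pink region becomes cyan under $f_3$, so in $\nu$ it sits surrounded by racially heterogeneous edges in the middle of the pink disk; the nested picture is genuinely perturbed in the bulk, and no contour-geometry argument will bound those edges by the frontier of the spiral --- if each such agent contributed even one new heterogeneous edge you would indeed get an $\Omega(\delta n)$ term.

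What saves the lemma, and what the paper actually verifies, is an edge-by-edge cancellation: the heterogeneous edges around unbridged agents in $\nu$ were already heterogeneous in $\tau$, so they do not count toward the \emph{difference} $h(\nu)-h(\tau)$. Concretely, by Lemmas \ref{I:f1} and \ref{I:f2} the color-bridged agents of $\tau$ are all blue-race (cyan or blue), and the unbridged-flip rules are designed so that within each recovery region the map acts uniformly on race: in the pink region of $f_3$ every agent's race is flipped (bridged cyan/blue $\to$ pink, unbridged pink $\to$ cyan, cyan $\to$ red, red $\to$ blue), in the cyan region of $f_4$ every agent's race is preserved, and in the red region of $f_5$ every agent's race is again flipped. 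A uniform race-flip (or race-preservation) on a region leaves the heterogeneity status of every edge with both endpoints in that region unchanged, so the only edges whose status can change are those crossing the three region boundaries, giving $h(\nu)-h(\tau)\leq 3\alpha\sqrt{n}$. This verification hinges on the precise flip rules of $f_3,f_4,f_5$, not on the spiral geometry, and without it your bound does not follow. A smaller slip: the separation you call crucial --- pink never touching red --- is immaterial, since pink and red are the same race and such edges would be homogeneous anyway; the separation that matters is that pink never touches rich blue.
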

\begin{proof}
	In the defined way of centralized recovery, racially heterogeneous edges are created along the hexagon boundaries between the pink and cyan regions, the cyan and red regions, and the red and the rest region. Each boundary can be upper bounded by the perimeter of the fundamental domain, which is $\alpha \sqrt{n}$, and in total $3\alpha \sqrt{n}$.
	
	Inside each pink, cyan, and red region, the unbridged agents will not create additional racially heterogeneous edges: for any $\tau = (f_2{\circ}f_1)(\sigma)$, It follows from the definition of $f_2$ and Lemma \ref{I:f2} that the bridged agents in $\tau$ are either cyan or blue. The flipping rules for the unbridged agents in $f_3, f_4$ and $f_5$ thus can be verified to not introduce additional racially heterogeneous edges. Hence we get $h(\nu) - h(\tau) \leq 3\alpha \sqrt{n}.$ 
\end{proof}

\noindent{\bf Distributed Recovery.} If the urban sites evenly partition the city(diamond-shaped city) like shown in Figure 1b in the paper, for the pink, cyan, and red recoveries defined in $f_3, f_4$ and $f_5$, the starting location of $f_3$ and the flipping order of each function can be specified in the \textit{distributed way}: 
the starting location of $f_3$ is specified to be the most top left corner of the urban site, and the flipping order for $f_3, f_4, f_5$ are specified as to only flip the agents on the urban sites row after row, and then flip agents on the non-urban sites row after row after finishing flipping all the agents on the urban sites. In such a way, the following upper bound can be obtained.
\begin{lemma}\label{I:f5_h2}
	If the recoveries are specified in the distributed way, for any $\nu = (f_5{\circ}f_4{\circ}f_3)(\tau),$ there are at most $2 c n$ racially heterogeneous edges introduced by the recovery operations: $h(\nu) - h(\tau) \leq 2cn$.
\end{lemma}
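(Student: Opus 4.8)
The plan is to reuse the skeleton of the centralized argument (Lemma \ref{I:f5_h1}) and replace its clean hexagon-perimeter estimate by a bound tailored to the distributed layout. Recall that every racially heterogeneous edge of $\nu$ joins a ``red-side'' agent (pink or rich red) to a ``blue-side'' agent (cyan or rich blue). In the distributed recovery, $f_3, f_4, f_5$ color the urban sites first in row-major order and the non-urban sites afterwards in row-major order, so after urbanization this produces, among the urban sites, a pink block followed by a cyan block, and among the non-urban sites a red block followed by a blue block. I would therefore classify the heterogeneous edges of $\nu$ into three types: (i) those internal to the urban region, arising at the pink/cyan transition; (ii) those internal to the non-urban region, arising at the red/blue transition; and (iii) those crossing the urban--non-urban interface.

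First I would dispose of the unbridged agents exactly as in Lemma \ref{I:f5_h1}: by Lemma \ref{I:f2} every bridged agent of $\tau$ is cyan or blue, so the flip rules of $f_3, f_4, f_5$ applied to unbridged agents never place opposite races adjacent and hence create no new racially heterogeneous edges. This reduces the problem to counting only the block-boundary edges created by the row-major color ordering, together with the interface edges.

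Next I would bound types (i) and (ii). Because each region is filled in row-major order, the pink/cyan transition (respectively the red/blue transition) is confined to a single row of the fundamental domain and contributes only $O(\sqrt{n})$ heterogeneous edges, the same $\alpha\sqrt{n}$-type estimate that controls the boundaries in the centralized case; this is lower order than $2cn$. The dominant contribution is type (iii), and here I would invoke the even partition of Figure \ref{1b}: each interface edge has exactly one endpoint on $\mathcal{U}$, and in the distributed (``bus route'') layout the total number of urban--non-urban edges is linear in $cn$, bounded by $2cn$. A crude count bounding every interface edge as heterogeneous, together with the lower-order intra-region transitions, then yields $h(\nu) - h(\tau) \le 2cn$.

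The main obstacle is precisely this interface term. In the centralized case the three boundaries are hexagon perimeters of length $\alpha\sqrt{n}$, so the recovery costs only $3\alpha\sqrt{n}$ heterogeneous edges; in the distributed case the urban region is deliberately spread across the lattice, so its boundary with the non-urban region is \emph{linear} in $n$ rather than $O(\sqrt{n})$, and this is unavoidable. The real work is to verify that the even partition forces this interface to be at most $2cn$ and that no other source (the unbridged flips, the row-major block transitions) contributes at the linear scale. This is what makes the crude $2cn$ bound both correct and, for the Peierls sum in Corollary \ref{urb2} where one only needs $2cn \le 3n$ to absorb it into the contour length $z_{\rm c}$, entirely sufficient.
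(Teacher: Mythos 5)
Your decomposition matches the paper's structurally: the recovery's unbridged-agent flips create no new racially heterogeneous edges (exactly as in Lemma \ref{I:f5_h1}), the intra-region pink/cyan and red/blue transitions are lower order, and the dominant contribution is the urban/non-urban interface. The problem is that you never prove the one fact that constitutes the lemma: that this interface is at most $2cn$. You assert it (``the total number of urban--non-urban edges is linear in $cn$, bounded by $2cn$'') and then concede in your final paragraph that ``the real work is to verify that the even partition forces this interface to be at most $2cn$.'' That verification \emph{is} the proof; without it the proposal reduces to a restatement of the claim. Note also that the bound is not obtainable by generic counting: each urban site has six neighbors, so the trivial bound is $6cn$, and even using the fact that urban sites form one-dimensional routes (so each has roughly four non-urban neighbors) one gets about $4cn$. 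The constant $2$ comes only from the specific geometry of the even partition.

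The paper's proof supplies exactly this computation. Since the urban sites of total size $cn$ evenly partition the lattice into rows and columns of bus routes, there are $\frac{c\sqrt{n}}{2}$ urban rows (and likewise columns) of $\sqrt{n}$ sites each, so consecutive routes are spaced $k = \sqrt{n}/(\frac{c\sqrt{n}}{2}-1) \geq \frac{2}{c}$ apart. The non-urban region therefore splits into at most $(\sqrt{n}/k)^2$ diamonds, each of perimeter at most $4k$, and the newly created heterogeneous edges lie along these diamond boundaries, giving $h(\nu)-h(\tau) \leq \frac{4n}{k} \leq 2cn$. You are right that for Corollary \ref{urb2} the bound is immediately relaxed to $3n$, so any linear bound would do there; but the constant $2c$ is used as stated in Claim \ref{C:3} for Theorem \ref{miti}, where it enters the condition $\gamma^{\hat{b}_{\rm p}-b_{\rm p}} > \lambda^{2c}64^{\frac{3\delta+1}{4\delta}}$, so the spacing argument (or some explicit substitute) cannot be omitted.
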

\begin{proof}
		In the defined way of distributed recovery, compared with $\tau$, racially heterogeneous edges are possibly created along the diamond boundaries between the urban and non-urban sites (see Figure 1b in the paper for demonstration), which can be upper bounded by the sum of perimeters of all the small diamonds. 
		
		The perimeter of a diamond can be upper bounded by $4 \cdot k$:  since the urban sites, with total size $c \cdot n$, evenly partitions the finite lattice, the number of row urban sites is $\frac{cn}{2}.$ Each row of urban sites has $\sqrt{n}$ sites, so there are $\frac{c\sqrt{n}}{2}$ rows of urban sites. Since each column of urban sites also has $\sqrt{n}$ sites, we can get the side length of a diamond to be $k = \frac{\sqrt{n}}{\frac{c\sqrt{n}}{2} -1} \geq \frac{2}{c}$. Since the side length of one diamond is $k,$ the total number of the diamonds can be upper bounded by $(\frac{\sqrt{n}}{k})^2$. Hence the sum of perimeters of all the diamonds can be upper bounded by $\frac{4n}{k} \leq 2cn,$ which yields $h(\nu) - h(\tau) \leq 2cn.$ 
\end{proof}

\begin{lemma}\label{I:f5_p}
	For any $\nu = (f_5{\circ}f_4{\circ}f_3{\circ}f_2{\circ}f_1)(\sigma),$ if the recovery is specified in the centralized way or distributed way, the number of the poor agents on the urban sites is lower bounded by $p(\nu) \geq (\min\{c, p\} - \delta)n.$
\end{lemma}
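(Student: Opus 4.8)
The plan is to exploit the fact that both recovery orders are engineered to write poor agents into the urban sites before anything else, so that the only deficit from the maximal value $\min\{c,p\}n$ is controlled by the bounded number of unbridged agents. First I would record the single structural fact that makes the recovery count-faithful: by the stopping rules in the definitions of $f_3,f_4,f_5$ (the same rules underlying the preimage bounds of Lemmas \ref{I:f3}, \ref{I:f4}, \ref{I:f5}), the image $\nu$ has exactly the same number of pink, cyan, red, and blue agents as $\sigma$; in particular $\nu$ has exactly $pn$ poor agents ($r_{\rm p}n$ pink and $b_{\rm p}n$ cyan). Recall also that by Lemma \ref{I:f1} the intermediate configuration $\tau=(f_2\circ f_1)(\sigma)$ has at most $\delta n$ poor agents, all unbridged, so the recovery essentially rebuilds the entire poor population from scratch and is free to decide where it lands. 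The target $\min\{c,p\}n$ is just the smaller of the two regimes ``all poor fit onto urban sites'' ($p\le c$) and ``urban sites saturate with poor'' ($p>c$), and I expect these to collapse into one bound once the ordering is in place.

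Second, I would use the geometry of the two recovery schemes to show that the first $cn$ cells they visit are exactly the urban sites $\mathcal{U}$. In the distributed case this is immediate from the order defining Lemma \ref{I:f5_h2}, which flips every urban site (row by row) before touching any non-urban site. In the centralized case the order defining Lemma \ref{I:f5_h1} starts at the centre of the urban hexagon and spirals outward, so the first $cn$ cells it fills are precisely the centred urban hexagon (cf.\ Figure \ref{4}). Since $f_3$ (pink) is applied first, then $f_4$ (cyan), then $f_5$ (red) and finally the blues, the $pn$ poor agents are written into the initial segment of this order; that initial segment meets $\mathcal{U}$ in its first $\min\{cn,pn\}=\min\{c,p\}n$ cells. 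Thus, up to the exceptional cells described next, every one of those $\min\{c,p\}n$ urban cells receives a poor agent.

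Third, I would bound the shortfall by $\delta n$. The only urban cells of this poor prefix that fail to receive a poor agent are those on which $f_3,f_4,f_5$ invoke their unbridged exception rules rather than writing the default poor colour: for instance $f_3$ sends an unbridged cyan to red (poor $\to$ rich) and $f_4$ sends an unbridged cyan to blue, whereas the compensating rules of $f_5$ (cyan $\leftrightarrow$ pink, red $\to$ blue) preserve wealth and so cost nothing. By the defining property of the $\delta$-color-and-richness bridge system the number of unbridged agents is at most $\delta n$, so at most $\delta n$ urban cells of the prefix end up rich, giving $p(\nu)\ge \min\{c,p\}n-\delta n$ uniformly in both recovery schemes. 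I expect this last step to be the main obstacle: one must verify that the intertwined count-compensation built into $f_5$ (flipping a pink back to cyan whenever an unbridged cyan is promoted to pink, and symmetrically) does not push more than $\delta n$ poor agents out of $\mathcal{U}$, and that the extension of the poor prefix needed to still reach the \emph{exact} counts $r_{\rm p}n$ and $b_{\rm p}n$ adds only $O(\delta n)$ positions; pinning this down requires tracking the unbridged agents jointly through all five maps rather than through any one of them in isolation.
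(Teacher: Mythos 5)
Your proposal is correct and takes essentially the same route as the paper's proof: both use the facts that the recovery maps $f_3,f_4,f_5$ restore the exact pink/cyan/red/blue counts of $\sigma$, that the pink-then-cyan-then-red order combined with the urban-first flipping order (spiral from the center in the centralized case, urban rows first in the distributed case) writes the poor onto $\mathcal{U}$ before anywhere else, and that the only shortfall---at most $\delta n$---is caused by unbridged agents, since by Lemmas \ref{I:f1} and \ref{I:f2} the configuration $\tau=(f_2\circ f_1)(\sigma)$ has at most $\delta n$ poor agents, all unbridged. The one cosmetic difference is where the $\delta n$ loss is charged (the paper charges it to the stray unbridged poor agents that may sit off $\mathcal{U}$ yet still count toward the poor quota, while you charge it to unbridged cells of the urban prefix that the exception rules flip to rich); these are two views of the same accounting, and your version is no less rigorous than the paper's own.
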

\begin{proof}
	It follows from Lemma \ref{I:f1} that the number of  poor agents is at most $\delta n$ in the mapped $\psi = f_1(\sigma)$ and they are unbridged. It follows from Lemma \ref{I:f2} that no additional poor agents will be introduced for any $\tau = f_2(\psi).$ In the worst case scenario, these $\delta n$ poor agents will not be recovered on the urban sites in $\nu = (f_{5} {\circ} f_{4} {\circ} f_{3})(\tau)$. For the rest of the poor agents, because of the recovery way in the order of pink followed by cyan then red, it yields that $p(\nu) \geq  (\min\{c, p\} - \delta)n$. 
\end{proof}

\section{Proof Supports of Theorem 2 (Urbanization of Poverty)} \label{A:urb1}
%\subsection{Proof of Claim \ref{C:1}}

\begin{claim}\label{C:1}
	For any $\sigma \in \Omega_{\neg \text{urb}}$ with a given bridged color contour length $z_{\rm c}$, for the defined mapping $\nu = f(\sigma),$ $h(\nu) - h(\sigma) \leq 3\alpha \sqrt{n}  - z_{\rm c}$ and $p(\sigma) - p(\nu) \leq - \delta n$, where $\delta = \epsilon /2.$
\end{claim}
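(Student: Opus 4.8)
The plan is to prove both inequalities by decomposing each target quantity along the five-stage composition $f = f_5\circ f_4\circ f_3\circ f_2\circ f_1$ and invoking the per-stage lemmas already established for the inversion and recovery maps. Writing $\psi=f_1(\sigma)$, $\tau=f_2(\psi)$, and $\nu=(f_5\circ f_4\circ f_3)(\tau)$, the heterogeneous-edge count telescopes as
\begin{align*}
h(\nu)-h(\sigma)=\big(h(\nu)-h(\tau)\big)+\big(h(\tau)-h(\psi)\big)+\big(h(\psi)-h(\sigma)\big),
\end{align*}
and I would bound the three pieces separately. The innermost piece is zero: since the richness inversion $f_1$ flips only richness bits and leaves every color bit unchanged, it alters no racially heterogeneous edge, so by Lemma \ref{I:f1} (``no additional color edges are introduced'') we have $h(\psi)=h(\sigma)$. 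The middle piece is controlled by the color inversion: Lemma \ref{I:f2} gives $h(\tau)-h(\psi)\le -(x_{\rm c}+y_{\rm c})=-z_{\rm c}$, using the identification $z_{\rm c}=x_{\rm c}+y_{\rm c}$ from the bridge system. The outer piece is the centralized recovery, where Lemma \ref{I:f5_h1} yields $h(\nu)-h(\tau)\le 3\alpha\sqrt{n}$. Summing the three bounds gives exactly $h(\nu)-h(\sigma)\le 3\alpha\sqrt{n}-z_{\rm c}$.

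For the poverty bound I would compare the number of poor agents on urban sites in $\sigma$ and in $\nu$. Because $\sigma\in\Omega_{\neg\text{urb}}$, Definition \ref{def_urb} fails for $\sigma$, so the number of poor on urban sites satisfies $p(\sigma)<\min\{c,p\}n-\epsilon n$. On the recovered side, Lemma \ref{I:f5_p} guarantees $p(\nu)\ge(\min\{c,p\}-\delta)n$, since the centralized recovery reinstates the poor onto urban sites in the pink-then-cyan-then-red order, losing at most the $\delta n$ unbridged poor agents produced by $f_1$. Subtracting, I get
\begin{align*}
p(\nu)-p(\sigma)>\big(\min\{c,p\}-\delta\big)n-\big(\min\{c,p\}-\epsilon\big)n=(\epsilon-\delta)n,
\end{align*}
so that $p(\sigma)-p(\nu)<-(\epsilon-\delta)n$. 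Substituting $\delta=\epsilon/2$ makes $\epsilon-\delta=\delta$, which delivers $p(\sigma)-p(\nu)\le-\delta n$, as claimed.

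I expect no serious obstacle within the scope of this claim itself: once the stage-by-stage lemmas are available, Claim \ref{C:1} is essentially a bookkeeping assembly of them, with the only care needed being (i) to observe that $f_1$ is color-neutral so it contributes nothing to the edge count, and (ii) to track that the ``$-\epsilon n$'' deficit from membership in $\Omega_{\neg\text{urb}}$ and the ``$-\delta n$'' loss from unbridged poor agents combine correctly under $\delta=\epsilon/2$. The genuine difficulty has been pushed into the supporting lemmas---in particular Lemma \ref{I:f5_h1}, which requires that the centralized recovery create heterogeneous edges only along the three hexagonal boundaries between the pink, cyan, and red regions (each of perimeter at most $\alpha\sqrt{n}$), and Lemma \ref{I:f5_p}, which requires that the recovery order restore essentially all poor agents onto urban sites. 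With those in hand, Claim \ref{C:1} follows directly.
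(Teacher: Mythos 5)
Your proposal is correct and follows essentially the same route as the paper's proof: both bound $h(\nu)-h(\sigma)$ by combining Lemma \ref{I:f2} (the $-z_{\rm c}$ gain from color inversion) with Lemma \ref{I:f5_h1} (the $3\alpha\sqrt{n}$ cost of centralized recovery), and both obtain the poverty bound by pairing the defining inequality $p(\sigma)\le(\min\{c,p\}-\epsilon)n$ of $\Omega_{\neg\text{urb}}$ with Lemma \ref{I:f5_p} and setting $\delta=\epsilon/2$. Your only departure is a welcome one: you make explicit the step $h(\psi)=h(\sigma)$ via Lemma \ref{I:f1}, which the paper leaves implicit when it writes $h(\tau)-h(\sigma)\le -z_{\rm c}$ directly.
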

\begin{proof}[Proof of Claim \ref{C:1}]
	It follows from Lemma \ref{I:f2} that for any $\sigma \in \Omega_{\neg \text{urb}}$, $h(\tau) - h(\sigma) \leq - z_{\rm c}$. 
	It follows from Lemma \ref{I:f5_h1} that $h(\nu) - h(\tau) \leq 3\alpha \sqrt{n}.$ Combining the two inequalities, we get $h(\nu) - h(\sigma) \leq 3\alpha \sqrt{n}  - z_{\rm c}$.
	
	For any $\sigma \in \Omega_{\neg \text{urb}}$, $p(\sigma) \leq (\min\{c, p\} - \epsilon)n$ is satisfied. It follows from Lemma \ref{I:f5_p} that $p(\nu) \geq  (\min\{c, p\} - \delta)n$. Combining the inequalities, setting $\epsilon = 2\delta$, it follows that $p(\sigma) - p(\nu) \leq -\epsilon n + \delta n = -\delta n$.	
\end{proof}

%\subsection{Proof of Claim \ref{C:2}}
\begin{claim}\label{C:2}
	For a given color contour length $z_{\rm c}$, for any $\nu = f(\sigma)$, the number of configurations in $ \Omega_{\neg \text{urb}}$ that can map to $\nu$ is upper bounded by: $|f^{-1}(\nu)| \leq   (z_{\rm c}+1) 9^{\alpha \sqrt{n}}4^{(\frac{3\delta + 1}{4\delta})(z_{\rm c}+3n)} .$ 
\end{claim}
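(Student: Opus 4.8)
The plan is to reconstruct an arbitrary preimage $\sigma \in f^{-1}(\nu)$ by undoing the five composed maps in the reverse order $f_5, f_4, f_3, f_2, f_1$, and to bound the number of choices available at each stage. The key structural observation is that the count factorizes: the two \emph{inversion} maps $f_1$ and $f_2$ are one-to-one \emph{once the underlying $\delta$-color-and-richness bridge system is fixed} (Lemmas \ref{I:f1} and \ref{I:f2}), whereas the three \emph{recovery} maps $f_3, f_4, f_5$ each admit only polynomially many preimages (Lemmas \ref{I:f3}, \ref{I:f4}, \ref{I:f5}). Since the bridge system of $\sigma$ is not visible from $\nu$ alone, I would sum over all admissible bridge systems, so that
$$|f^{-1}(\nu)| \;\le\; \bigl(\text{\# bridge systems with the given } z_{\rm c}\bigr)\cdot |f_5^{-1}|\cdot|f_4^{-1}|\cdot|f_3^{-1}|.$$

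For the recovery factor, Lemmas \ref{I:f3}, \ref{I:f4}, and \ref{I:f5} give $|f_3^{-1}|\le n$, $|f_4^{-1}|\le n^2$, and $|f_5^{-1}|\le n^3$ (the flipping order being fixed by the chosen recovery scheme, so only the stopping/starting locations remain free), for a combined polynomial factor of at most $n^6$. For the bridge-system factor, I would invoke Lemma \ref{I:rcbridge}: for a fixed color contour length $z_{\rm c}$ and richness contour length $z_{\rm r}$ there are at most $(\max\{z_{\rm r}\}+1)(z_{\rm c}+1)\,9^{\alpha\sqrt{n}}\,4^{(\frac{3\delta+1}{4\delta})(z_{\rm c}+\max\{z_{\rm r}\})}$ such systems. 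The richness contours are constrained only by the torus geometry, so $z_{\rm r}$ is at most the total number of edges of $G_\triangle$, namely $3n$; substituting $\max\{z_{\rm r}\}\le 3n$ collapses the richness dependence into the factor $4^{(\frac{3\delta+1}{4\delta})(z_{\rm c}+3n)}$ while contributing a further polynomial factor $(3n+1)$. Multiplying the two bounds yields $|f^{-1}(\nu)|\le n^6(3n+1)(z_{\rm c}+1)\,9^{\alpha\sqrt{n}}\,4^{(\frac{3\delta+1}{4\delta})(z_{\rm c}+3n)}$; the stated claim records the dominant exponential part, and the polynomial prefactors $n^6(3n+1)$ are precisely the ones reinstated explicitly in the outer sum \eqref{urbproof1}.

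The step I expect to be the main obstacle is verifying that the listed data genuinely determine $\sigma$, i.e. that there is no hidden source of preimages. Because the recovery maps act differently on bridged and unbridged agents, and $f_5$ in particular performs compensating flips (sending an unbridged cyan to pink while flipping a pink back to cyan elsewhere), inverting them is only meaningful relative to a fixed bridge system; this is exactly why the enumeration over bridge systems must be carried out first and the recovery inversions second. I would therefore argue that, conditioned on a specific bridge system, the unbridged agents are pinned down, so that the $n^6$ worth of stopping/starting data supplied by Lemmas \ref{I:f3}--\ref{I:f5}, together with the unique $f_1$ and $f_2$ preimages (Lemmas \ref{I:f1}, \ref{I:f2}), reconstruct $\sigma$ uniquely. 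Confirming that these pieces of information do not overlap, and that the crude bound $z_{\rm r}\le 3n$ costs nothing essential because the richness contribution is already dominated by the wealth-bias gain $\gamma^{-\delta n}$ elsewhere in the Peierls estimate, is the delicate accounting at the heart of the claim.
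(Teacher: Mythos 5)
Your proposal matches the paper's own proof essentially step for step: the same factorization of $|f^{-1}(\nu)|$ into the product of the polynomial preimage counts for $f_3, f_4, f_5$ (Lemmas \ref{I:f3}--\ref{I:f5}, giving $n^6$), the uniqueness of the $f_1, f_2$ preimages conditioned on a fixed bridge system (Lemmas \ref{I:f1}, \ref{I:f2}), and the enumeration of bridge systems via Lemma \ref{I:rcbridge} with $z_{\rm r} \leq 3n$ from the total edge count of $G_\triangle$. You also correctly observe, as the paper implicitly does, that the claim as stated suppresses the polynomial prefactors $n^6(3n+1)$, which reappear in the outer Peierls sum \eqref{urbproof1}.
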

\begin{proof}[Proof of Claim \ref{C:2}]
	We denote $\nu = f_5(\phi)$, $\phi = f_4(\zeta)$, $\zeta = f_3(\tau)$, $\tau = f_2(\psi)$, and $\psi = f_1(\sigma)$. 
	Since $f(\sigma) = (f_5{\circ}f_4{\circ}f_3{\circ}f_2{\circ}f_1)(\sigma)$, 
	for any $\nu$, it follows that $|f^{-1}(\nu)| \leq |f_5^{-1}(\nu)| \cdot |f_4^{-1}(\phi)| \cdot  |f_3^{-1}(\zeta)| \cdot |f_2^{-1}(\tau)| \cdot |f_1^{-1}(\psi)|$.
	It follows from Lemma \ref{I:f5} that $|f_5^{-1}(\nu)| \leq n^3$. It follows from Lemma \ref{I:f4} that $|f_4^{-1}(\phi)| \leq n^2$. It follows from Lemma \ref{I:f3} that $|f_3^{-1}(\zeta)| \leq n$.

	For a given color contour length $z_{\rm c}$, the number of all possible bridge systems $b$ is upper bounded by $b \leq (\max\{z_{\rm r} \} + 1)(z_{\rm c} + 1) 9^{\alpha \sqrt{n}}4^{(\frac{3\delta +1}{4\delta})(z_{\rm c} + \max\{z_{\rm r} \})}$ in Lemma \ref{I:bridge_num}. 
	For the triangular lattice with $n$ vertices, the sum of all edges is $3n$. Because of the lattice duality, we conclude $z_{\rm r} \leq 3n.$ 
	Thus $b \leq (3n + 1)(z_{\rm c} + 1) 9^{\alpha \sqrt{n}}4^{(\frac{3\delta +1}{4\delta})(z_{\rm c} + 3n)}$.
	It follows from Lemma \ref{I:f1} and \ref{I:f2} that for a given bridge system and a given image, the number of the corresponding preimages of both $f_2$ and $f_1$ is one. 
	
	Hence, we conclude that for a given $z_{\rm c}$, $|f^{-1}(\nu)| \leq |f_5^{-1}(\nu)| \cdot |f_4^{-1}(\phi)| \cdot  |f_3^{-1}(\zeta)| \cdot b \leq n^6 (3n + 1)(z_{\rm c} + 1) 9^{\alpha \sqrt{n}}4^{(\frac{3\delta +1}{4\delta})(z_{\rm c} + 3n)}$.	
\end{proof}

\section{Proof of Corollary  3 (Distributed Urbanization of Poverty)} \label{A:colurb}
\begin{proof}[Proof of Corollary 3]
	The bridge system and mapping are the same as the proof of Theorem 2 except that: for $(f_{5} {\circ} f_{4} {\circ} f_{3}) (\tau)$, we recover the same ratios of each color and richness as in $\sigma$ in the distributed way defined in section \ref{recovery_way}. Claim \ref{C:2} still holds true. For Claim \ref{C:1}, it now follows from Lemma \ref{I:f5_h2} and \ref{I:f2} that $h(\nu) -h(\sigma) \leq 2cn - z_{\rm c} \leq 3n - z_{\rm c}$, and $p(\sigma) -p(\nu) \leq -\delta n$ holds true, where $\delta = \epsilon /2.$ Substituting the bounds into the Peierls argument yields
	\begin{align}\label{urbproof2}
		\pi(\Omega_{\neg \text{urb}}) &\leq \sum_{\nu \in \Omega} \pi(\nu) \sum_{z_{\rm c} = \sqrt{r\cdot n}}^{3n} n^{6}(3n + 1)(z_{\rm c}+1) 9^{\alpha\sqrt{n}}(\frac{4^{\frac{3\delta + 1}{4\delta}}}{\lambda})^{z_{\rm c}} (\frac{\lambda^3 64^{\frac{3\delta + 1}{4\delta}}}{\gamma^{\delta}})^{n}.
	\end{align}
	Similarly, when $\lambda \geq 4^{\frac{3\delta + 1}{4\delta}}$, if $\gamma^{\delta/3} >  \lambda \cdot 4^{\frac{3\delta + 1}{4\delta}}$, the sum will be exponentially small for sufficiently large $n$. When $ 1 \leq \lambda < 4^{\frac{3\delta + 1}{4\delta}}$, the sum further yields $\pi(\Omega_{\neg \text{urb}})  \leq n^{6} \cdot (3n +1) \cdot 9^{\alpha\sqrt{n}} \cdot 3n \cdot (\frac{\lambda \cdot 16^{\frac{3\delta + 1}{4\delta}}}{\lambda\gamma^{\delta/3}})^{3n}.$ As long as $\gamma^{\delta/3} > 16^{\frac{3\delta + 1}{4\delta}}$, the sum will still be exponentially small for sufficiently large $n$. Combining the two cases, we can see that as long as $\gamma^{\delta/3} > 4^{\frac{3\delta + 1}{4\delta}} \max\{\lambda, 4^{\frac{3\delta + 1}{4\delta}} \}$ and $\lambda > 1$, $\pi(\Omega_{\neg \text{urb}}) \leq \xi_{1}^{n},$ for $\xi_{1} \in (0,1).$ Substituting $\delta = \frac{\epsilon}{2}$ the lower bound of $\gamma$ yields Corollary 3.	
\end{proof}

%%%%%%%%%%%%%%%%%%%%%%%%%%%5

\section{Proof of Theorem 4 (Dispersion of Poverty)} \label{deurb:D}
\begin{proof}[Proof of Theorem 4]
	Let $\Omega_{\rm{urb}} \in \Omega$ be the set of configurations that have $\epsilon-$urbanization of poverty. To prove Theorem 4, it suffices to prove $\pi(\Omega_{\rm{urb}}) \leq \xi_{2}^{n}$ for some $\xi_2 \in (0, 1)$ and large enough $n$.
	
	For each $\sigma \in \Omega_{\rm{urb}},$ we construct a $\delta-$richness bridge system and define the mapping $\nu = d(\sigma) = (d_2 \circ f_1)(\sigma)$ as the following:
	we first do the richness inversion and obtain $\tau = f_1(\sigma)$; next for $\tau$, we randomly flip the blue to cyan until the right number of the cyan, and we randomly flip the red to pink until the right number of the pink and obtain $\nu = d_2(\tau).$

	\begin{claim}\label{C:6}
		For any $\sigma \in \Omega_{\rm{urb}}$, for the defined mapping $\nu = d(\sigma),$ $h(\nu) - h(\sigma) \leq 0$ and $p(\sigma) - p(\nu) \leq cn.$
	\end{claim}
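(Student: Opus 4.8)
The plan is to exploit the single structural feature of the map $d = d_2 \circ f_1$: it only ever toggles \emph{richness} bits and never color bits, so it preserves the race of every agent, and with it the entire set of racially heterogeneous edges. Both asserted inequalities then fall out almost immediately — the first with equality, the second from a trivial capacity bound.

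For the homophily term I would argue in two stages. The richness inversion $f_1$ flips only richness bits and leaves every color bit fixed; by Lemma \ref{I:f1} it introduces no new color edges, so for $\tau = f_1(\sigma)$ we have $h(\tau) = h(\sigma)$. The second stage $d_2$ recolors agents only within a single race, namely blue $\leftrightarrow$ cyan (rich blue $\leftrightarrow$ poor blue) and red $\leftrightarrow$ pink (rich red $\leftrightarrow$ poor red), so again no agent's race changes and every racially heterogeneous edge is left intact. Hence $h(\nu) = h(\tau) = h(\sigma)$, which gives $h(\nu) - h(\sigma) = 0 \leq 0$, the first bound (in fact an equality). For the poverty term I would use only geometric capacity: since each site holds at most one agent, the number of poor agents on urban sites never exceeds the number of urban sites, so $p(\sigma) \leq |\mathcal{U}| = cn$; combined with the trivial $p(\nu) \geq 0$ this yields $p(\sigma) - p(\nu) \leq cn$. (One could sharpen this to $p(\sigma) \le \min\{c,p\}n$ using $p<c$, but $cn$ is all the claim needs.)

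The claim itself is therefore routine, and there is no real obstacle inside it; the only point deserving care is the \emph{well-definedness} of $d_2$ — that after $f_1$ there remain enough rich blue and rich red agents to reflip into the required $b_{\rm p} n$ cyan and $r_{\rm p} n$ pink agents. This is precisely where the demographic hypotheses $r_{\rm p} < r_{\rm r} - \delta$ and $b_{\rm p} < b_{\rm r} - \delta$ of Theorem \ref{deurb} enter: Lemma \ref{I:f1} leaves at most $\delta n$ poor agents, so at least $(b-\delta)n > 2 b_{\rm p} n$ blue agents and at least $(r-\delta)n > 2 r_{\rm p} n$ red agents are rich, leaving ample room to reflip. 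I expect the genuine difficulty in the surrounding theorem — not in this claim — to be the companion entropy count bounding $|d^{-1}(\nu)|$, where the dispersed configurations must be shown to vastly outnumber their urbanized preimages exactly when $\gamma$ lies below the stated threshold $(\tfrac{r-\delta}{r_{\rm p}})^{r_{\rm p}/p}(\tfrac{b-\delta}{b_{\rm p}})^{b_{\rm p}/p}/2$.
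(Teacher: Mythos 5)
Your proposal is correct and follows essentially the same route as the paper: both bound $h(\nu)-h(\sigma)\leq 0$ by observing that $f_1$ and $d_2$ never alter color bits, and both get the poverty bound from $p(\nu)\geq 0$ together with a trivial upper bound on $p(\sigma)$ (you use the capacity bound $p(\sigma)\leq |\mathcal{U}|=cn$, the paper uses $p(\sigma)\leq \min\{c,p\}n = pn \leq cn$, an immaterial difference). Your added remark on the well-definedness of $d_2$ under the hypotheses $r_{\rm p}<r_{\rm r}-\delta$ and $b_{\rm p}<b_{\rm r}-\delta$ is a sensible extra check that the paper leaves implicit.
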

\begin{proof}[Proof of Claim \ref{C:6}]
	Because the color information remains unchanged for $f_1$ and $d_2$, $h(\nu) - h(\sigma) \leq 0$. For any $\sigma$, $p(\sigma) \leq \min\{c, p\}n$. The scenario under discussion follows that $p < c$, thus $p(\sigma) \leq pn.$ Since for any $\nu$, $p(\nu) \geq 0,$ hence $p(\sigma) - p(\nu) \leq pn.$
\end{proof}

	\begin{claim}\label{C:7}
		For a given richness contour length $z_{\rm r}$, for any $\nu = d(\sigma)$, the number of configurations in $ \Omega_{\rm{urb}}$ that can map to $\nu$ is upper bounded by: $|d^{-1}(\nu)| \leq (\beta \sqrt{n} +1)3^{\alpha \sqrt{n}}4^{\frac{3\delta + 1}{4\delta}\beta \sqrt{n}}2^{pn}.$ 
	\end{claim}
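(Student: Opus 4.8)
The plan is to bound $|d^{-1}(\nu)|$ by inverting the two stages of $d = d_2 \circ f_1$ separately and multiplying the worst-case counts, in the same spirit as the preimage counts for Theorem~\ref{urb1}. Writing $\tau = f_1(\sigma)$ so that $\nu = d_2(\tau)$, I would start from
\[
	|d^{-1}(\nu)| = \sum_{\tau \in d_2^{-1}(\nu)} |f_1^{-1}(\tau)| \le |d_2^{-1}(\nu)| \cdot \max_{\tau} |f_1^{-1}(\tau)|,
\]
and bound the two factors independently, the first by a subset count and the second by a richness–bridge–system count.

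For the $f_1$ factor, the first step is to show that the richness contour length is small for every $\sigma \in \Omega_{\rm{urb}}$. Since Theorem~\ref{deurb} operates in the regime $p < c < p + \epsilon$, Remark~\ref{R:w} (taking $m=0$) tells us that each such $\sigma$ is $(\beta,\epsilon)$-wealth segregated, hence has at most $\beta\sqrt{n}$ wealth-heterogeneous edges. Because every richness contour is a self-avoiding polygon built solely out of richness (wealth-heterogeneous) edges, this forces $z_{\rm r} \le \beta\sqrt{n}$. Substituting this into Lemma~\ref{I:rbridge} and using monotonicity of $4^{\frac{3\delta+1}{4\delta} z_{\rm r}}$ in $z_{\rm r}$ bounds the number of admissible $\delta$-richness bridge systems by $(\beta\sqrt{n}+1)\,3^{\alpha\sqrt{n}}\,4^{\frac{3\delta+1}{4\delta}\beta\sqrt{n}}$, and Lemma~\ref{I:f1} guarantees that each bridge system determines a unique $f_1$-preimage. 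Thus $\max_\tau |f_1^{-1}(\tau)|$ is at most this bridge-system count.

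For the $d_2$ factor, the key observation is that $d_2$ only converts rich agents to poor (blue$\to$cyan and red$\to$pink) and never the reverse, so the cyan agents of $\tau$ form a subset of the cyan agents of $\nu$, and likewise for pink. Recovering $\tau$ from $\nu$ therefore amounts to deciding, among the $b_{\rm p}n$ cyan and $r_{\rm p}n$ pink agents present in $\nu$, which ones were already poor in $\tau$; since the color bits are untouched by both $f_1$ and $d_2$, specifying these richness bits determines $\tau$ completely. This gives at most $2^{b_{\rm p}n}\cdot 2^{r_{\rm p}n} = 2^{pn}$ choices, using $p = b_{\rm p}+r_{\rm p}$. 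Multiplying the two factors yields $|d^{-1}(\nu)| \le (\beta\sqrt{n}+1)\,3^{\alpha\sqrt{n}}\,4^{\frac{3\delta+1}{4\delta}\beta\sqrt{n}}\,2^{pn}$, which is exactly the claimed bound.

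The work here is almost entirely structural rather than computational, so the only genuinely delicate point is the inequality $z_{\rm r}\le\beta\sqrt{n}$: it rests on correctly invoking the wealth-segregation consequence of $\epsilon$-urbanization in the window $p<c<p+\epsilon$, together with the fact that contours are composed only of richness edges and hence cannot exceed the number of wealth-heterogeneous edges. The $2^{pn}$ count is deliberately loose (the true number of flipped agents is at most $\delta n$ by Lemma~\ref{I:f1}, so a binomial $\binom{pn}{\le \delta n}$ would be sharper), but no refinement is needed for the claim, since this exponential factor is later offset against the demographic thresholds on $\gamma$ in the full proof of Theorem~\ref{deurb}.
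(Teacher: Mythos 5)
Your proposal is correct and takes essentially the same approach as the paper's proof: the same factorization $|d^{-1}(\nu)| \le |f_1^{-1}(\tau)|\cdot|d_2^{-1}(\nu)|$, the same use of Remark~\ref{R:w} in the window $p<c<p+\epsilon$ to get $z_{\rm r}\le\beta\sqrt{n}$, the same appeal to Lemma~\ref{I:rbridge} and Lemma~\ref{I:f1} to bound the $f_1$ factor by the bridge-system count, and the same $2^{pn}$ bound for $d_2$ by recording which poor agents of $\nu$ were flipped. Your elaborations (deriving $z_{\rm r}\le\beta\sqrt{n}$ directly from the fact that richness contours consist of wealth-heterogeneous dual edges, where the paper instead cites Lemma~7.4 of the prior work, and noting that $d_2$ only flips rich to poor so the count splits as $2^{b_{\rm p}n}\cdot 2^{r_{\rm p}n}$) are faithful fillings-in of the same argument, not a different one.
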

\begin{proof}[Proof of Claim \ref{C:7}]
	We denote $\tau = f_1(\sigma)$ and $\nu = d_2(\tau)$. Since $d(\sigma) = (d_2 {\circ} f_1)(\sigma)$, for any $\nu,$ it follows that $|d^{-1}(\nu)| \leq |f_{1}^{-1}(\tau)| \cdot |d_{2}^{-1}(\nu)|$.  For any $\nu \in \Omega,$ the number of preimages $|d_2^{-1}(\nu)|$ can be upper bounded by $2^{pn}$ by recording whether each poor agent is flipped or not. 
	
	Because every configuration $\sigma \in \Omega_{\rm{urb}}$ satisfies $\epsilon-$urbanization, assuming we also have $p < c < p+\epsilon,$ then $\sigma$ also satisfies $(\beta, \epsilon)-$wealth segregation (similarly defined as Definition 5 in the paper). Hence $z_{\rm r}$ can be upper bounded by $z_{\rm r} \leq \beta \sqrt{n}$ (See Lemma 7.4 in \cite{cannon2019local} for details). Thus it follows from Lemma \ref{I:rbridge} that for any given $z_{\rm r}$, the number of $\delta-$richness bridge systems can be upper bounded by $(\beta \sqrt{n} +1)3^{\alpha \sqrt{n}}4^{\frac{3\delta + 1}{4\delta}\beta \sqrt{n}}$. It follows from Lemma \ref{I:f1} that for any $\tau$, for a given $\delta-$richness bridge system, the number of preimages is one. Hence we conclude for any $\tau$ with a given $z_{\rm r}$, the number of preimages is upper bounded by: $|f_1^{-1}(\tau)| \leq (\beta \sqrt{n} +1)3^{\alpha \sqrt{n}}4^{\frac{3\delta + 1}{4\delta}\beta \sqrt{n}}$. Combining the two inequalities, we conclude $|d^{-1}(\nu)| \leq (\beta \sqrt{n} +1)3^{\alpha \sqrt{n}}4^{\frac{3\delta + 1}{4\delta}\beta \sqrt{n}} 2^{pn}$.
\end{proof}

	\begin{claim}\label{C:8}
		For any given $\sigma \in \Omega_{\rm{urb}},$ we denote $D(\sigma)$ to be the set of all possible images $\nu = d(\sigma)$ mapped from $\sigma$. It follows that $|D(\sigma)| \geq (\frac{r-\delta}{r_{\rm p}})^{(r_{\rm p}-\delta)n} (\frac{b-\delta}{b_{\rm p}})^{(b_{\rm p}-\delta)n}$.
	\end{claim}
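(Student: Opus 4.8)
The plan is to lower bound $|D(\sigma)|$ by counting the independent choices available in the second stage $d_2$, noting that the first stage contributes no multiplicity: once the richness bridge system is fixed, $\tau = f_1(\sigma)$ is a single configuration. By Lemma~\ref{I:f1}, $\tau$ has at most $\delta n$ poor agents, and since the color bit is left unchanged by both $f_1$ and $d_2$, the configuration $\tau$ still has exactly $bn$ blue-colored and $rn$ red-colored agents. The map $d_2$ then chooses which rich blue agents to recolor cyan so as to reach the target count $b_{\rm p} n$ of cyan, and independently which rich red agents to recolor pink so as to reach $r_{\rm p} n$ pink.

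First I would count the eligible agents and the required number of flips. In the worst case the $\delta n$ residual poor agents of $\tau$ are all concentrated in one wealth-color class, so among the $bn$ blue agents at least $(b-\delta)n$ are rich blue and hence eligible to be flipped, and $d_2$ must still flip at least $(b_{\rm p}-\delta)n$ of them to cyan; symmetrically at least $(r-\delta)n$ rich red are eligible and at least $(r_{\rm p}-\delta)n$ must become pink. Because the cyan-flips and pink-flips are chosen independently and flip the richness bits at fixed lattice positions, distinct choices produce distinct images, so
\[
|D(\sigma)| \geq \binom{(b-\delta)n}{(b_{\rm p}-\delta)n}\binom{(r-\delta)n}{(r_{\rm p}-\delta)n}.
\]

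Then I would apply the elementary estimate $\binom{m}{k}\geq (m/k)^k$ to each factor, obtaining bases $\frac{b-\delta}{b_{\rm p}-\delta}$ and $\frac{r-\delta}{r_{\rm p}-\delta}$, and finally replace each denominator $b_{\rm p}-\delta$ (resp. $r_{\rm p}-\delta$) with the larger quantity $b_{\rm p}$ (resp. $r_{\rm p}$), which only decreases the base and hence preserves the inequality, landing exactly on $(\frac{r-\delta}{r_{\rm p}})^{(r_{\rm p}-\delta)n}(\frac{b-\delta}{b_{\rm p}})^{(b_{\rm p}-\delta)n}$. The demographic hypotheses $b_{\rm p}<b_{\rm r}-\delta$ and $r_{\rm p}<r_{\rm r}-\delta$ ensure these bases exceed one, so the bound grows exponentially (which is what makes it useful for Theorem~\ref{deurb}), though that fact is not needed to prove the inequality itself.

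The step I expect to be the main obstacle is the worst-case bookkeeping of how the residual $\delta n$ poor agents of $\tau$ overlap the target color-wealth counts: I must verify that, no matter how those agents split between cyan and pink, both binomial coefficients remain simultaneously valid lower bounds, i.e.\ at least $(b-\delta)n$ and $(r-\delta)n$ eligible rich agents remain and at least $(b_{\rm p}-\delta)n$ and $(r_{\rm p}-\delta)n$ flips are still required. The secondary point needing care is distinctness of the images, which holds because two different chosen position-sets flip different richness bits and hence yield different configurations $\nu$.
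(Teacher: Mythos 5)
Your proposal is correct and follows essentially the same route as the paper's proof: the first stage $f_1$ contributes no multiplicity, the choices in $d_2$ are counted by a product of two binomial coefficients, and the claimed bound follows from $\binom{m}{k} \geq (m/k)^k$ together with worst-casing the at most $\delta n$ residual poor agents of $\tau$ (the paper applies the power bound first and then worst-cases the parameters, while you worst-case the binomials first; the two orderings are equivalent). The step you flagged as the main obstacle does go through, but the correct justification is that each factor has the form $\binom{m-a}{k-a} = \binom{m-a}{m-k}$ with the \emph{same} offset $a \leq \delta n$ subtracted from both arguments, so it is decreasing in $a$ and minimized at $a = \delta n$; the criterion you stated (``at least $(b-\delta)n$ eligible agents remain and at least $(b_{\rm p}-\delta)n$ flips are required'') would not by itself imply the binomial lower bound, since $\binom{m}{k}$ is not monotone in $k$.
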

\begin{proof}[Proof of Claim \ref{C:8}]
	It follows from the definition of $f_1$ that for any given $\sigma,$ there is only one configuration can be obtained from $f_1(\sigma)$.
	For any given $\tau$, we define $D_2(\tau)$ to be the set of all possible configurations obtained from $d_2(\tau)$ by flipping the pink and cyan back; then $|D_2(\tau)| = \begin{pmatrix}
		rn - a_{\tau}\\ 
		r_{\rm p}n - a_{\tau}^{'}
	\end{pmatrix} \cdot \begin{pmatrix}
		bn - a_{\tau}\\ 
		b_{\rm p}n - a_{\tau}^{''}
	\end{pmatrix}$, where $a_{\tau}$ is the number of unbridged agents in $\tau$, $a_{\tau}^{'}$ is the number of unbridged pink, and $a_{\tau}^{''}$ is the number of unbridged cyan. Thus $|D_2(\tau)|$ can be further lower bounded by 
	\begin{align}
		|D_2(\tau)|\geq (\frac{r - a_{\tau}}{r_{\rm p} - a_{\tau}^{'}})^{(r_{\rm p} - a_{\tau}^{'})n} (\frac{b - a_{\tau}}{b_{\rm p} - a_{\tau}^{''}})^{(b_{\rm p} - a_{\tau}^{''})n} \geq (\frac{r-\delta}{r_{\rm p}})^{(r_{\rm p}-\delta)n} (\frac{b-\delta}{b_{\rm p}})^{(b_{\rm p}-\delta)n}.
	\end{align}
	Hence $|D(\sigma)|$ can be lower bounded by $|D(\sigma)| \geq 1 \cdot |D_2(\tau)| \geq (\frac{r-\delta}{r_{\rm p}})^{(r_{\rm p}-\delta)n} (\frac{b-\delta}{b_{\rm p}})^{(b_{\rm p}-\delta)n}.$
\end{proof}

	Finally, we define a weighted bipartite graph $G(\Omega_{\rm{urb}}, \Omega,E)$ with an edge of weight $\pi(\sigma)$ between $\sigma \in \Omega_{\rm{urb}}$ and $\nu \in \Omega$. The total weight of edges is
	\begin{align}\label{LHS}
		\sum_{\sigma \in \Omega_{\rm{urb}}} \pi(\sigma) \cdot |D(\sigma)| \geq \pi(\Omega_{\rm{urb}})  (\frac{r-\delta}{r_{\rm p}})^{(r_{\rm p}-\delta)n} (\frac{b-\delta}{b_{\rm p}})^{(b_{\rm p}-\delta)n}. 	
	\end{align}
	On the other hand, the weight of the edges is at most 
	\begin{align} \label{RHS}
		&\sum_{\nu \in \Omega} \sum_{\sigma \in d^{-1}(\nu)}\max_{\sigma \in \Omega_{\rm{urb}}}{\pi(\sigma)} =  \sum_{\nu \in \Omega} \pi(\nu) \sum_{\sigma \in d^{-1}(\nu)} \frac{\max_{\sigma \in \Omega_{\rm{urb}}}{\pi(\sigma)}}{\pi(\nu)}  |d^{-1}(\nu)|\nonumber\\
		&\leq \sum_{\nu \in \Omega} \pi(\nu) \sum_{z_{\rm r = \sqrt{pn}}}^{\beta \sqrt{n}} \lambda^{\max(h(\nu)-h(\sigma))} \gamma^{\max(p(\sigma)-p(\nu))} (\beta \sqrt{n} +1)3^{\alpha \sqrt{n}}4^{\frac{3\delta + 1}{4\delta}\beta \sqrt{n}} 2^{pn}
		\nonumber \\&\leq \gamma^{pn}  \beta \sqrt{n} (\beta \sqrt{n} +1)3^{\alpha \sqrt{n}}4^{\frac{3\delta + 1}{4\delta}\beta \sqrt{n}} 2^{pn},
	\end{align}
	where the inequalities in Claim \ref{C:6} and \ref{C:7} has been substituted in the above derivation. Combining \eqref{LHS} and \eqref{RHS},  we have
	\begin{align*}
		\pi(\Omega_{\rm{urb}})  (\frac{r-\delta}{r_{\rm p}})^{(r_{\rm p}-\delta)n} (\frac{b-\delta}{b_{\rm p}})^{(b_{\rm p}-\delta)n} \leq \beta \sqrt{n} (\beta \sqrt{n} +1)3^{\alpha \sqrt{n}}4^{\frac{3\delta + 1}{4\delta}\beta \sqrt{n}} \gamma^{cn} 2^{pn}.
	\end{align*}
	For large enough $n$, to have $\pi(\Omega_{\rm{urb}}) \leq \xi_3^{n}$ for some $\xi_3 \in (0, 1)$, it suffices to have $$\gamma^{pn}2^{pn} < (\frac{r-\delta}{r_{\rm p}})^{(r_{\rm p}-\delta)n} (\frac{b-\delta}{b_{\rm p}})^{(b_{\rm p}-\delta)n} < (\frac{r-\delta}{r_{\rm p}})^{r_{\rm p}n} (\frac{b-\delta}{b_{\rm p}})^{b_{\rm p}n},$$ which can be rewritten as 
	$$\gamma < (\frac{r-\delta}{r_{\rm p}})^{\frac{r_{\rm p}}{p}} (\frac{b-\delta}{b_{\rm p}})^{\frac{b_{\rm p}}{p}} / 2.$$
	Since $\gamma > 1,$ to make the right hand side of the above inequality greater than one, it suffices to have $\frac{r-\delta}{r_{\rm p}} > 2$ and $\frac{b-\delta}{b_{\rm p}} > 2$, which can be rewritten as $r_{\rm p} < r_{\rm r} - \delta$ and $b_{\rm p} < b_{\rm r} - \delta.$  
\end{proof}

%%%%%%%%%%%%%%%%%%

\section{Proof Supports of Theorem 7 (Urbanized Racial Segregation)} \label{A:red urb}

	\begin{claim}\label{C:12}
	If $\lambda > 3^{\frac{\alpha}{\beta}} 4^{\frac{3\delta+1}{4\delta}}$, $\pi(\Omega_{\text{urb} \wedge \neg \text{seg}}) \leq \xi_{0}^{\sqrt{n}}$ for some $\xi_0 \in (0, 1).$
\end{claim}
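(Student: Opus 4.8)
The plan is to run the Peierls argument of \eqref{map1} on color structure, in close analogy with the two-color segregation argument of \cite{miracle2011clustering, cannon2019local}, but restricting the domain to configurations that are already urbanized. The essential observation that makes the bound on $\gamma$ disappear from the statement is that every map I use flips only \emph{color} bits and never relocates an agent or alters its wealth bit; hence the indicator $u_\sigma(i)$ is unchanged for every agent $i$, the poor-on-urban count is preserved exactly ($p(\nu) = p(\sigma)$), and the wealth-bias factor $\gamma^{p(\sigma) - p(\nu)}$ in $\pi(\sigma)/\pi(\nu)$ is identically $1$ and drops out. For the same reason I only need a $\delta$-color bridge system (wealth is a passive label here), which is why the crossing entropy will appear with base $3$ rather than $9$, matching the $3^{\alpha/\beta}$ in the threshold.

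Concretely, for each $\sigma \in \Omega_{\text{urb} \wedge \neg \text{seg}}$ I would first use the geometric input that non-segregation forces long color contours: since $\sigma$ fails Definition \ref{segjust}, any $\delta$-color bridge system for $\sigma$ must have bridged color-contour length $z_{\rm c} \geq \beta\sqrt{n}$ (the contrapositive of the statement that a bridge system of total length below $\beta\sqrt{n}$ exhibits a separating set $R$; cf.\ \cite{cannon2019local}). Fixing such a bridge system, I apply the color inversion $f_2$ of Lemma \ref{I:f2}, eliminating the $z_{\rm c}$ bridged heterogeneous edges so that $h(f_2(\sigma)) - h(\sigma) \leq -z_{\rm c}$. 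I then restore the four type cardinalities using color-bit flips only: within the poor class I flip pink $\leftrightarrow$ cyan to correct the pink count, and within the rich class I flip red $\leftrightarrow$ blue to correct the red count; since the poor total $pn$ and rich total $(1-p)n$ are automatically preserved, these two counts determine all four, and $p$ is preserved at every step. Carrying out the recovery in a clustered order (as in the centralized recovery of Lemma \ref{I:f5_h1}) keeps the reintroduced heterogeneous edges to $O(\sqrt{n})$, which, as in \cite{cannon2019local}, is controlled against the crossing-bridge entropy rather than contributing a competing $\lambda^{\Theta(\sqrt n)}$ factor.

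On the entropy side, for a fixed image $\nu$ and a given $z_{\rm c}$ the number of preimages is at most the number of $\delta$-color bridge systems, namely $(z_{\rm c}+1)\,3^{\alpha\sqrt{n}}\,4^{\frac{3\delta+1}{4\delta}z_{\rm c}}$ by Lemma \ref{I:cbridge}, times a polynomial factor recording the stopping locations of the color recoveries. Substituting into \eqref{map1}, using $\gamma^{p(\sigma)-p(\nu)} = 1$ and summing the geometric series over $z_{\rm c} \geq \beta\sqrt{n}$, the bound is dominated by its smallest term $z_{\rm c} = \beta\sqrt{n}$ and reduces, up to polynomial factors, to
\begin{align*}
\pi(\Omega_{\text{urb} \wedge \neg \text{seg}}) \;\leq\; \mathrm{poly}(n)\,\Bigl[\,3^{\alpha}\Bigl(\tfrac{4^{\frac{3\delta+1}{4\delta}}}{\lambda}\Bigr)^{\beta}\,\Bigr]^{\sqrt{n}}.
\end{align*}
The bracketed base is strictly below $1$ exactly when $\bigl(\lambda / 4^{\frac{3\delta+1}{4\delta}}\bigr)^{\beta} > 3^{\alpha}$, i.e.\ when $\lambda > 3^{\alpha/\beta} 4^{\frac{3\delta+1}{4\delta}}$, which is the hypothesis of the claim; this yields $\pi(\Omega_{\text{urb} \wedge \neg \text{seg}}) \leq \xi_0^{\sqrt{n}}$ for some $\xi_0 \in (0,1)$.

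The main obstacle is the four-type recovery. I must restore the pink and red cardinalities simultaneously, keep the reintroduced heterogeneous edges and the preimage count from exceeding the color-bridge budget, and do so without ever touching a richness bit. The payoff of this restriction, and what makes the argument cleaner than the urbanization maps of Appendix \ref{c and r reovery}, is that wealth stays a spectator: there is no richness inversion, no cross-type restoration, and in particular none of the energy-energy tradeoff between $\lambda$ and $\gamma$ that complicates Theorem \ref{urb1}. The only delicate point is to specify the clustered flipping order so that the recovered minority regions of each color present a single short boundary, so that the $\Theta(\sqrt{n})$-scale bookkeeping (which matters here because the target bound is $\xi_0^{\sqrt{n}}$, not $\xi_0^{n}$) stays on the right side of the threshold.
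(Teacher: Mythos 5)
Your proposal follows essentially the same route as the paper's own proof: a $\delta$-color bridge system only (wealth bits never touched, so $p(\nu)=p(\sigma)$ and $\gamma$ cancels identically), the color inversion $f_2$ removing the $z_{\rm c}\geq\beta\sqrt{n}$ bridged heterogeneous edges forced by non-segregation, a wealth-preserving color-count recovery in clustered order introducing $O(\sqrt{n})$ boundary edges, and a preimage count of $(z_{\rm c}+1)3^{\alpha\sqrt{n}}4^{\frac{3\delta+1}{4\delta}z_{\rm c}}$ times a polynomial for the stopping locations --- this is exactly the paper's mapping $g = g_3\circ g_2\circ f_2$ together with Claims \ref{C:4} and \ref{C:5}, and your final Peierls sum and threshold match the paper's. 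One caveat, which you share with the paper's own write-up: the $\lambda^{+2\alpha\sqrt{n}}$ energy loss from the reintroduced boundary edges (present in Claim \ref{C:4}) is silently dropped from the displayed Peierls sum, so your assertion that these edges are ``controlled against the crossing-bridge entropy rather than contributing a competing $\lambda^{\Theta(\sqrt{n})}$ factor'' is exactly as (im)precise as the paper --- absorbing that factor honestly via $z_{\rm c}\geq\beta\sqrt{n}$ would strengthen the required hypothesis to $\lambda^{1-2\alpha/\beta} > 3^{\alpha/\beta}4^{\frac{3\delta+1}{4\delta}}$.
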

\begin{proof}[Proof of Claim \ref{C:12}]
	To prove $\pi(\Omega_{\text{urb} \wedge \neg \text{seg}}) \leq \xi_{0}^{\sqrt{n}}$, we define the bridge system and the mapping $g(\cdot)= (g_{3} {\circ} g_{2} {\circ} f_{2}) (\cdot)$ from the set $\Omega_{\text{urb} \wedge \neg \text{seg}}$ to $\Omega$ as the following: 	for any $\sigma \in \Omega_{\text{urb} \wedge \neg \text{seg}}$, we construct a $\delta-$color bridge system for it. Then we do color inversion like defined in $f_2(\cdot)$, and get $\psi = f_2(\sigma).$ 
	
	$\tau = g_2(\psi)$ is defined as the following: starting from the center of the urban area, we flip the color bit ($0$ to $1$ and $1$ to $0$) of each agent layer by layer in a given order until the right number of pink and cyan is reached.
	
	$\nu = g_3(\tau)$ is then defined as the following: starting from outside the urban area boundary, we flip the color bit ($0$ to $1$ and $1$ to $0$) of each agent layer by layer in a given order until the right number of red and blue is reached. During this phase, whenever we flip the color bit for an unbridged pink, we go back to the stopping location of $g_2(\cdot)$ to flip one more cyan to pink, and vise versa.

	\begin{claim}\label{C:4}
		For any $\sigma \in \Omega_{\text{urb} \wedge \neg \text{seg}}$ with a given bridged color contour length $z_{\rm c}$, for the defined mapping $\nu = g(\sigma),$ $h(\nu) - h(\sigma) \leq 2\alpha \sqrt{n}  - z_{\rm c}$ and $p(\sigma) - p(\nu) \leq 0.$
	\end{claim}
	\begin{proof}[Proof of Claim \ref{C:4}]
		It follows from Lemma \ref{I:f2} that $h(\psi) - h(\sigma) \leq -z_{\rm c}$ and the richness information remains unchanged. For the color recoveries $(g_3 {\circ} g_2)(\psi)$, similar as the proving strategy of Lemma \ref{I:f5_h1}, we at most introduce $2\alpha \sqrt{n}$ racially heterogeneous edges compared with $h(\psi)$ due to the hexagon boundaries created between pink and cyan, and cyan and red. Each boundary can be upper bounded by the perimeter of the fundamental domain $\alpha \sqrt{n}$.  The richness information remains unchanged. Thus $h(\nu) - h(\sigma) \leq h(\nu) - h(\psi) + h(\psi) -h(\sigma) \leq 2\alpha \sqrt{n}  - z_{\rm c}$ and $p(\sigma) - p(\nu) \leq 0.$
	\end{proof}

	\begin{claim}\label{C:5}
		For a given color contour length $z_{\rm c}$, for any $\nu = g(\sigma)$, the number of configurations in $ \Omega_{\text{urb} \wedge \neg \text{seg}}$ that can map to $\nu$ is upper bounded by: $|g^{-1}(\nu)| \leq n^3 (z_{\rm c}+1)3^{\alpha \sqrt{n}}4^{\frac{3\delta + 1}{4\delta}z_{\rm c}}.$ 
	\end{claim}
	\begin{proof}[Proof of Claim \ref{C:5}]
		For any given mapped $\tau$, the number of preimages $|g_2^{-1}(\tau)|$ is at most $n$, since we only need to know the stopping locations of the flipping operations, and there are at most $n$ possibilities of the stopping location. For any given mapped $\nu$, the number of preimages $|g_3^{-1}(\tau)|$ is at most $n^2$, since we need the stopping location information of $g_2(\cdot),$ which is upper bounded by $n$ and the stopping location of $g_3(\cdot)$ is also upper bounded by $n.$
		
		To bound $|g^{-1}(\nu)|$ for a given $\nu$ and a given $z_{\rm c} = x_{\rm c} + y_{\rm c}$, we can first bound the number of possible bridge systems for a given $z_{\rm c}$, which yields $(z_{\rm c}+1)3^{\alpha \sqrt{n}}4^{\frac{3\delta + 1}{4\delta}z_{\rm c}}$. See proof details of this bound from Lemma 7.6 in \cite{cannon2019local}. 
		For any configuration $\psi \in f_2(\sigma)$ with a given bridge system, there is only one configuration $\sigma$ that can be mapped to $\psi$ (Lemma \ref{I:f2}). 
		Thus combining with $|g_2^{-1}(\tau)|$ and $|g_3^{-1}(\nu)|$, it yields $|g^{-1}(\nu)| \leq n^3 (z_{\rm c}+1)3^{\alpha \sqrt{n}}4^{\frac{3\delta + 1}{4\delta}z_{\rm c}}.$	
	\end{proof}

	Finally, substituting the bounds into Peierls Argument of Equation (1) of the paper yields
	\begin{align}\label{urbproof1}
		\pi(\Omega_{\text{urb} \wedge \neg \text{seg}}) &\leq \sum_{\nu \in \Omega} \pi(\nu) \sum_{z_{\rm c} = \beta \sqrt{n}}^{3n} n^3 (z_{\rm c}+1) 3^{\alpha\sqrt{n}} (\frac{4^{\frac{3\delta + 1}{4\delta}}}{\lambda})^{z_{\rm c}} \nonumber \\
		&\leq \sum_{\nu \in \Omega} \pi(\nu) \sum_{z_{\rm c} = \beta \sqrt{n}}^{3n} n^3 (z_{\rm c}+1) (\frac{3^{\frac{\alpha}{\beta}} 4^{\frac{3\delta+1}{4\delta}}}{\lambda})^{z_{\rm c}},
	\end{align}				
	where  $z_{\rm c} \geq \beta\sqrt{n}$ is due to $\sigma \in \Omega_{\text{urb} \wedge \neg \text{seg}}$ does not satisfy $(\beta,\epsilon)-$segregation (see Lemma 7.4 in \cite{cannon2019local} for details), and $z_{\rm c} \leq \beta\sqrt{n}$, which is the sum of all edges of $G_{\triangle}$.
	If $\lambda > 3^{\frac{\alpha}{\beta}} 4^{\frac{3\delta+1}{4\delta}}$, the sum will be exponentially small for sufficiently large $n$, which means $\pi(\Omega_{\text{urb} \wedge \neg \text{seg}}) \leq \xi_{0}^{\sqrt{n}}$ for some $\xi_0 \in (0,1).$ 
\end{proof}

%%%%%%%%%%%%%%%%%%%%%

\section{Proof Supports of Theorem 8 (Urbanization and Integration)} \label{A2}
	\begin{claim}\label{C:9}
		For any $\sigma \in \Omega_{\text{urb} \wedge \text{seg}}$ with bridged color contour length $z_{\rm c}$, for the defined mapping $\nu = s(\sigma),$ $h(\nu) - h(\sigma) \leq -z_{\rm c} + 3n$ and $p(\sigma) - p(\nu) \leq 0.$
	\end{claim}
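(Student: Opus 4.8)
The plan is to bound $h(\nu)-h(\sigma)$ by splitting the map $s = s_2 \circ f_2$ at the intermediate configuration $\tau = f_2(\sigma)$, and to obtain the wealth bound by checking that neither stage ever touches the wealth coordinate.

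For the heterogeneous-edge count I would telescope as $h(\nu)-h(\sigma) = \bigl(h(\tau)-h(\sigma)\bigr) + \bigl(h(\nu)-h(\tau)\bigr)$. The first summand is exactly the situation governed by Lemma \ref{I:f2}: applied with input $\sigma$, color inversion eliminates every bridged racially heterogeneous edge, so $h(\tau)-h(\sigma) \leq -(x_{\rm c}+y_{\rm c}) = -z_{\rm c}$. For the second summand I would use only the coarsest estimate available: $s_2$ merely relabels colors (cyan to pink, blue to red) and moves no agent, so $\nu$ is still a configuration on $G_{\triangle}$, whose triangular torus carries exactly $3n$ edges; hence $h(\nu) \leq 3n$, and since $h(\tau) \geq 0$ we get $h(\nu)-h(\tau) \leq 3n$ no matter which agents the random recovery selects. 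Summing gives $h(\nu)-h(\sigma) \leq 3n - z_{\rm c}$.

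I expect this second step to be the one worth flagging: it discards essentially all information about how many new heterogeneous edges the recovery creates, charging a full $\lambda^{3n}$ to the energy ratio $\pi(\sigma)/\pi(\nu)$. This is deliberate — a sharper per-flip bound is unavailable because $s_2$ flips agents randomly, and it is unnecessary because the hypothesis $\lambda^3 < (\frac{p-\delta}{b_{\rm p}})^{b_{\rm p}}(\frac{1-p-\delta}{b_{\rm r}})^{b_{\rm r}}/2^{r}$ of Theorem \ref{mix2} is precisely calibrated so that the combinatorial multiplicity of the random recovery (counted in a later companion claim, in the style of Theorem \ref{deurb}) dominates this $\lambda^{3n}$ factor.

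For the wealth bound I would argue that $p$ is invariant under the entire map. The color inversion $f_2$ flips only the color bit and fixes all locations, so the set of poor agents and their positions relative to $\mathcal{U}$ are unchanged, giving $p(\tau)=p(\sigma)$. The recovery $s_2$ sends cyan to pink and blue to red; each substitution preserves the wealth bit (cyan and pink are both poor; blue and red are both rich) and again fixes locations, so $p(\nu)=p(\tau)$. Therefore $p(\nu)=p(\sigma)$, which yields $p(\sigma)-p(\nu)\leq 0$, in fact with equality. The one well-definedness point to check — that after inversion enough cyan and blue agents survive to be flipped up to the correct pink and red totals — follows from the $\leq \delta n$ bound on unbridged agents in the color bridge system together with the demographic hypotheses of Theorem \ref{mix2}, and it is orthogonal to the two inequalities being asserted.
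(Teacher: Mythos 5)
Your proposal is correct and follows essentially the same route as the paper's proof: decompose through $\tau = f_2(\sigma)$, apply Lemma \ref{I:f2} to get $h(\tau)-h(\sigma)\leq -z_{\rm c}$, bound the recovery stage crudely by the $3n$ total edges of $G_{\triangle}$, and observe that both $f_2$ and $s_2$ fix the richness bit and all locations so that $p(\nu)=p(\sigma)$. Your phrasing of the wealth argument is in fact cleaner than the paper's (which contains typos, writing $f_1$ for $f_2$ and $p(\sigma)-h(\nu)$ for $p(\sigma)-p(\nu)$), but the substance is identical.
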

	\begin{proof}[Proof of \ref{C:9}]
		Because the richness information remains unchanged for $f_1$ and $s_2$, $p(\sigma) - h(\nu) \leq 0$. It follows from Lemma \ref{I:f2} that for a given $z_{\rm c}$ and any $\tau = f_2(\sigma)$, $h(\tau) -h(\sigma) \leq -z_{\rm c}$. The maximal number of racially heterogeneous edges created by $\nu = s_2(\tau)$ can be bounded by $3n$, which is sum of all the edges in $G_{\triangle}$. Hence $h(\nu) - h(\sigma) = h(\nu) - h(\tau) +  h(\tau) - h(\sigma) \leq 3n - z_{\rm c}.$
	\end{proof}
	
	\begin{claim}\label{C:10}
		For a given color contour length $z_{\rm c}$, for any $\nu = s(\sigma)$, the number of configurations in $ \Omega_{\text{urb} \wedge \text{seg}}$ that can map to $\nu$ is upper bounded by: $|s^{-1}(\nu)| \leq (z_{\rm c} +1)3^{\alpha \sqrt{n}}4^{\frac{3\delta + 1}{4\delta} z_{\rm c}}2^{rn}.$ 
	\end{claim}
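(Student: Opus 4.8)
The plan is to bound the preimage count by factoring the composite map $s = s_2 \circ f_2$, mirroring the argument for Claim \ref{C:7} but with the roles of color and richness interchanged. Writing $\tau = f_2(\sigma)$ and $\nu = s_2(\tau)$, for any fixed $\nu$ every preimage $\sigma \in s^{-1}(\nu)$ satisfies $f_2(\sigma) \in s_2^{-1}(\nu)$, so
\begin{align*}
|s^{-1}(\nu)| \leq \sum_{\tau \in s_2^{-1}(\nu)} |f_2^{-1}(\tau)| \leq |s_2^{-1}(\nu)| \cdot \max_{\tau} |f_2^{-1}(\tau)|,
\end{align*}
and it suffices to bound each factor separately and multiply.

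First I would bound $|s_2^{-1}(\nu)|$. The map $s_2$ touches only color bits: it flips some cyan to pink and some blue to red, leaving each agent's richness bit untouched. To invert it, record for each of the $rn$ red-colored agents (red or pink) appearing in $\nu$ a single bit indicating whether it was flipped from a blue-colored agent by $s_2$ or was already red-colored in $\tau$. Since richness is preserved, this bit together with the known richness uniquely determines the agent's original color, so $|s_2^{-1}(\nu)| \leq 2^{rn}$, the exact dual of the $2^{pn}$ bound used in Claim \ref{C:7}.

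Next I would bound $|f_2^{-1}(\tau)|$ for a fixed color contour length $z_{\rm c}$. By Lemma \ref{I:f2}, once a $\delta$-color bridge system is fixed, the color inversion $f_2$ has a unique preimage; hence the number of preimages is at most the number of $\delta$-color bridge systems compatible with $z_{\rm c}$, which by Lemma \ref{I:cbridge} is at most $(z_{\rm c}+1)3^{\alpha\sqrt{n}}4^{\frac{3\delta+1}{4\delta}z_{\rm c}}$. Multiplying the two factors gives
\begin{align*}
|s^{-1}(\nu)| \leq (z_{\rm c}+1)3^{\alpha\sqrt{n}}4^{\frac{3\delta+1}{4\delta}z_{\rm c}}\,2^{rn},
\end{align*}
as claimed.

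The main point to verify carefully is that a single bit per red-colored agent genuinely suffices to invert $s_2$: because the cyan-to-pink and blue-to-red flips both alter only the color bit, one must confirm that the preserved richness bit plus the flip indicator pins down the preimage color with no residual dependence on the order in which flips were applied. This is precisely the bookkeeping dual to Claim \ref{C:7}. Note that the segregation hypothesis $\sigma \in \Omega_{\text{urb} \wedge \text{seg}}$ plays no role in this count; it enters only later, when the lower bound $z_{\rm c} \geq \beta\sqrt{n}$ is used to make the Peierls sum exponentially small.
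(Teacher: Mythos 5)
Your proposal is correct and follows essentially the same route as the paper's proof: factor $s = s_2 \circ f_2$, bound $|s_2^{-1}(\nu)| \leq 2^{rn}$ by one flip-indicator bit per red-colored agent, and bound $|f_2^{-1}(\tau)|$ by the number of $\delta$-color bridge systems via Lemma \ref{I:cbridge} together with the uniqueness statement in Lemma \ref{I:f2}. Your version is, if anything, slightly more careful than the paper's in writing the preimage count as a sum over $\tau \in s_2^{-1}(\nu)$ before bounding it by the product of the two factors.
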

	\begin{proof}[Proof of Claim \ref{C:10}]
		We denote $\tau = f_2(\sigma)$ and $\nu = s_2(\tau)$. Since $s(\sigma) = (s_2 {\circ} f_2)(\sigma)$, for any $\nu,$ it follows that $|s^{-1}(\nu)| \leq |f_{2}^{-1}(\tau)| \cdot |s_{2}^{-1}(\nu)|$.  For any $\nu \in \Omega,$ the number of preimages $|s_2^{-1}(\nu)|$ can be upper bounded by $2^{rn}$ by recording whether each red agent is flipped or not. 
		
		It follows from Lemma \ref{I:cbridge} that for any given $z_{\rm c}$, the number of $\delta-$color bridge systems can be upper bounded by $(z_{\rm c} +1)3^{\alpha \sqrt{n}}4^{\frac{3\delta + 1}{4\delta}z_{\rm c}}$. It follows from Lemma \ref{I:f2} that for any $\tau$, for a given $\delta-$color bridge system, the number of preimages is one. Hence we conclude for any $\tau$ with a given $z_{\rm c}$, the number of preimages is upper bounded by: $|f_2^{-1}(\tau)| \leq (z_{\rm c} +1)3^{\alpha \sqrt{n}}4^{\frac{3\delta + 1}{4\delta}z_{\rm c}}$. Combining the two inequalities, we conclude $|s^{-1}(\nu)| \leq (z_{\rm c} +1)3^{\alpha \sqrt{n}}4^{\frac{3\delta + 1}{4\delta}z_{\rm c}} 2^{rn}$.
	\end{proof}
	%Because every configuration $\sigma \in \Omega_{\text{urb} \wedge \text{seg}}$ satisfies $(\beta, \delta)$segregation, Hence $z_{\rm r}$ can be upper bounded by $z_{\rm r} \leq \beta \sqrt{n}$ (See Lemma 7.4 in \cite{cannon2018local} for details). 
	\begin{claim}\label{C:11}
		For any given $\sigma \in  \Omega_{\text{urb} \wedge \text{seg}},$ we denote $S(\sigma)$ to be the set of all possible images $\nu = s(\sigma)$ mapped from $\sigma$. It follows that $|S(\sigma)| \geq (\frac{p-\delta}{r_{\rm p}})^{(r_{\rm p}-\delta)n} (\frac{1-p-\delta}{r_{\rm r}})^{(r_{\rm r}-\delta)n}.$
	\end{claim}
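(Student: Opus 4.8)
The plan is to copy the counting scheme from Claim \ref{C:8}, with the roles of color and wealth interchanged. The first step is to observe that the color inversion $f_2$ is deterministic: by Lemma \ref{I:f2} each $\sigma$ has a unique image $\tau = f_2(\sigma)$, so all the multiplicity in $S(\sigma)$ is produced by the randomized recovery $s_2$. Hence $|S(\sigma)| = |S_2(\tau)|$, where $S_2(\tau)$ denotes the set of configurations reachable from $\tau$ by the two flipping phases, and it suffices to count the admissible choices made by $s_2$.

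The second step is to set up the count. In $\tau$ every bridged agent has been recolored to blue while keeping its wealth bit, so apart from the unbridged agents the poor all appear as cyan and the rich all appear as blue. The phase ``flip cyan to pink'' chooses a subset of the bridged cyan to promote, until the pink total reaches $r_{\rm p}n$; the phase ``flip blue to red'' independently chooses a subset of the bridged blue, until the red total reaches $r_{\rm r}n$. Writing $a_\tau \le \delta n$ for the number of agents left unbridged in color and $a_\tau', a_\tau''$ for the numbers of unbridged pink and unbridged red respectively, the two phases are independent and each ``flip back'' is a bijection, so no image is overcounted and
\begin{align*}
|S_2(\tau)| = \binom{pn - a_\tau}{r_{\rm p}n - a_\tau'}\binom{(1-p)n - a_\tau}{r_{\rm r}n - a_\tau''},
\end{align*}
the first factor ranging over the $\approx pn$ poor and the second over the $\approx (1-p)n$ rich.

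The final step is to lower-bound each factor. Applying the elementary bound $\binom{m}{k} \ge (m/k)^k$ and then using $a_\tau, a_\tau', a_\tau'' \le \delta n$ to absorb the unbridged corrections gives $\binom{pn-a_\tau}{r_{\rm p}n-a_\tau'} \ge (\frac{p-\delta}{r_{\rm p}})^{(r_{\rm p}-\delta)n}$ and likewise $(\frac{1-p-\delta}{r_{\rm r}})^{(r_{\rm r}-\delta)n}$ for the rich factor; multiplying yields the claim. The step I would be most careful about is the direction of the exponent bound: pushing the exponent $r_{\rm p}n - a_\tau'$ down to $(r_{\rm p}-\delta)n$ is only legitimate once the base $\frac{p-\delta}{r_{\rm p}}$ exceeds one (equivalently $b_{\rm p} > \delta$), and similarly one needs $\frac{1-p-\delta}{r_{\rm r}} > 1$. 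Both are guaranteed by the demographic hypotheses of Theorem \ref{mix2}, which are exactly the conditions making these bases, and their product against $2^r$, large. I would also verify that only bridged agents of the appropriate wealth class enter each flipping pool, so that the two binomials stay independent and the upper indices are genuinely $pn - a_\tau$ and $(1-p)n - a_\tau$.
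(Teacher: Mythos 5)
Your proposal follows the paper's argument essentially verbatim: $f_2$ contributes a single image $\tau$ per $\sigma$, the multiplicity of $s_2$ is counted by the product $\binom{pn-a_\tau}{r_{\rm p}n-a_\tau'}\binom{(1-p)n-a_\tau}{r_{\rm r}n-a_\tau''}$ over the unbridged-corrected pools, and the claimed bound follows from $\binom{m}{k}\ge (m/k)^k$ together with $a_\tau, a_\tau', a_\tau'' \le \delta n$. The caveat you flag --- that lowering the exponents to $(r_{\rm p}-\delta)n$ and $(r_{\rm r}-\delta)n$ is only valid when the bases $\frac{p-\delta}{r_{\rm p}}$ and $\frac{1-p-\delta}{r_{\rm r}}$ are at least $1$ (i.e., $b_{\rm p}, b_{\rm r} \ge \delta$) --- is a genuine subtlety that the paper's own proof silently skips, so your write-up is, if anything, slightly more careful; note only that this condition is not literally implied by the product hypothesis of Theorem \ref{mix2} and should be stated as an explicit demographic assumption.
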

	\begin{proof}[Proof of Claim \ref{C:11}]
		It follows from the definition of $f_2$ that for any given $\sigma,$ there is only one configuration can be obtained from $f_2(\sigma)$.
		For any given $\tau$, we define $S_2(\tau)$ to be the set of all possible configurations obtained from $s_2(\tau)$ by flipping the pink and red back; then $|S_2(\tau)| = \begin{pmatrix}
			pn - a_{\tau}\\ 
			r_{\rm p}n - a_{\tau}^{'}
		\end{pmatrix} \cdot \begin{pmatrix}
			n - pn - a_{\tau}\\ 
			r_{\rm r}n - a_{\tau}^{''}
		\end{pmatrix}$, where $a_{\tau}$ is the number of unbridged agents in $\tau$, $a_{\tau}^{'}$ is the number of unbridged pink, and $a_{\tau}^{''}$ is the number of unbridged red. Thus $|S_2(\tau)|$ can be further lower bounded by 
		\begin{align}
			|S(\tau)|\geq (\frac{p - a_{\tau}}{r_{\rm p} - a_{\tau}^{'}})^{(r_{\rm p} - a_{\tau}^{'})n} (\frac{1 -p - a_{\tau}}{r_{\rm r} - a_{\tau}^{''}})^{(r_{\rm r} - a_{\tau}^{''})n} \geq (\frac{p-\delta}{r_{\rm p}})^{(r_{\rm p}-\delta)n} (\frac{1-p-\delta}{r_{\rm r}})^{(r_{\rm r}-\delta)n}.
		\end{align}
		Hence $|S(\sigma)|$ can be lower bounded by $|S(\sigma)| \geq 1 \cdot |S_2(\tau)| \geq (\frac{p-\delta}{r_{\rm p}})^{(r_{\rm p}-\delta)n} (\frac{1-p-\delta}{r_{\rm r}})^{(r_{\rm r}-\delta)n}.$
	\end{proof}

%\begin{remark}\label{coo}
%	For the demographic choices of $r_{\rm p} < b_{\rm p} - \delta$ and $r_{\rm r} < b_{\rm r} - \delta,$ if $ 1< \lambda^3 < (\frac{p-\delta}{r_{\rm p}})^{r_{\rm p}} (\frac{1-p-\delta}{r_{\rm r}})^{r_{\rm r}} / 2^r,$ and $\gamma^{\delta/3} > 16^{\frac{3\delta + 1}{4\delta}},$ the conclusion of Theorem \ref{mix2} also holds true. 
%\end{remark}

%\begin{proof}[Proof of Remark \ref{coo}]\label{coo_prof}
%	We can rewrite $|S_2(\tau)|$ in the proof of Claim \ref{C:11} as:  $|S_2(\tau)| = \begin{pmatrix}
%		pn - a_{\tau}\\ 
%		r_{\rm p}n - a_{\tau}^{'}
%	\end{pmatrix} \cdot \begin{pmatrix}
%		n - pn - a_{\tau}\\ 
%		r_{\rm r}n - a_{\tau}^{''}
%	\end{pmatrix}$. Thus $|S_2(\tau)|$ can be further lower bounded by 
%	\begin{align}\label{newS}
%		|S(\tau)|  \geq (\frac{p-\delta}{r_{\rm p}})^{(r_{\rm p}-\delta)n} (\frac{1-p-\delta}{r_{\rm r}})^{(r_{\rm r}-\delta)n}.
%	\end{align}
%Substituting \eqref{newS} into \eqref{inequal_coo}, and to have $\pi(\Omega_{\text{urb} \wedge \text{seg}})  \leq \xi_3^{n}$,  it suffices to have $\frac{p-\delta}{r_{\rm p}} > 2$ and $\frac{1-p-\delta}{r_{\rm r}} > 2$, which can be rewritten as $r_{\rm p} < b_{\rm p} - \delta$ and $r_{\rm r} < b_{\rm r} - \delta.$
%Since $\lambda >1$, it suffices to have $\frac{p-\delta}{r_{\rm p}} > 2$ and $\frac{1-p-\delta}{r_{\rm r}} > 2$, which yield the conclusion of Remark \ref{coo}.
%\end{proof}

%%%%%%%%%%%%%%%%%%%

\section{Proof Support of Theorem 10 (Integration for Distributed $\mathcal{U}$)} \label{A:miti}
	\begin{claim}\label{C:3}
		For any $\sigma \in S_{\beta, \delta}$ with a given bridged color contour length $z_{\rm c}$, for the defined mapping $\nu = f(\sigma),$ $h(\nu) - h(\sigma) \leq 2cn  - z_{\rm c}$ and $p(\sigma) - p(\nu) \leq b_{\rm p}n - \hat{b}_{\rm p}n$, where $\hat{b}_{\rm p} \triangleq \min\{c, p\}-(r+\delta)c - 2\delta.$
	\end{claim}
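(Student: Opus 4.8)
The plan is to bound the two quantities independently. For the edge count, I track $h$ through the composition, writing $\psi = f_1(\sigma)$, $\tau = f_2(\psi)$, and $\nu = (f_5\circ f_4\circ f_3)(\tau)$. Since $f_1$ flips only richness bits, it introduces no color edges (Lemma~\ref{I:f1}), so $h(\psi)=h(\sigma)$. The color inversion $f_2$ removes exactly the bridged color-contour edges, giving $h(\tau)-h(\psi)\le -z_{\rm c}$ (Lemma~\ref{I:f2}). Finally, the distributed recovery creates heterogeneous edges only along the diamond boundaries separating urban from non-urban sites, which Lemma~\ref{I:f5_h2} caps at $2cn$, so $h(\nu)-h(\tau)\le 2cn$. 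Telescoping the three differences yields $h(\nu)-h(\sigma)\le 2cn-z_{\rm c}$, which is the first inequality.

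For the urban-poor count I combine a lower bound on $p(\nu)$ with an upper bound on $p(\sigma)$. The lower bound $p(\nu)\ge(\min\{c,p\}-\delta)n$ is exactly Lemma~\ref{I:f5_p}, since the distributed recovery loads urban sites with poor agents (pink, then cyan) before recovering anything else. For the upper bound, I decompose $p(\sigma)$ as the number of poor blue on urban sites plus the number of poor red on urban sites. The poor blue term is at most $b_{\rm p}n$, the total number of poor blue agents. For the poor red term, let $R$ be the red region witnessing $(\beta,\delta)$-segregation (Definition~\ref{segjust}) and split the poor red on urban sites according to whether they lie in $R$ or in $\bar R$: those in $\bar R$ number at most the red agents outside $R$, hence at most $\delta n$, while those in $R$ are at most $|\mathcal{U}\cap R|$, the number of urban sites inside the red region.

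The heart of the argument is the estimate $|\mathcal{U}\cap R|\le (r+\delta)cn$. Because $R$ contains at least $rn-\delta n$ red agents and at most $\delta n$ wrong-colored agents, its total size satisfies $|R|\le(r+\delta)n$. The distributed placement makes $\mathcal{U}$ an even diamond-grid partition of the lattice (the structure used in Lemma~\ref{I:f5_h2}), so urban sites inside a region with a small, $O(\sqrt n)$, boundary cannot exceed their global density $c$; since the segregation boundary is at most $\beta\sqrt n$, this yields $|\mathcal{U}\cap R|\le c\,|R|\le (r+\delta)cn$ up to lower-order boundary corrections. Assembling the three pieces gives $p(\sigma)\le(b_{\rm p}+(r+\delta)c+\delta)n$, and subtracting the lower bound on $p(\nu)$ produces $p(\sigma)-p(\nu)\le b_{\rm p}n-\hat b_{\rm p}n$ with $\hat b_{\rm p}=\min\{c,p\}-(r+\delta)c-2\delta$, as claimed.

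The main obstacle is precisely this geometric estimate on $|\mathcal{U}\cap R|$, which is exactly where the distributed case departs from the centralized one: with centralized sites a segregated red cluster could swallow a dense contiguous block of urban sites and retain its poor red agents on them, whereas the even partition forces any small-perimeter region to capture only a $c$-fraction of its vertices as urban sites. Making this rigorous requires quantifying how the diamond grid of side length $\Theta(1/c)$ meets a region of perimeter $O(\sqrt n)$ and checking that the boundary correction is genuinely lower order, so that it stays within the $\delta$-slack already present in the statement. The remaining bookkeeping—that the recovery maps $f_3,f_4,f_5$ preserve the four type cardinalities, so that the split of $p(\sigma)$ into its poor-blue and poor-red parts is legitimate—follows directly from the definitions of those maps.
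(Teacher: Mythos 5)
Your proposal is correct and takes essentially the same route as the paper's own proof: the same telescoping of the edge count, $h(\nu)-h(\sigma)\le \bigl(h(\nu)-h(\tau)\bigr)+\bigl(h(\tau)-h(\sigma)\bigr)\le 2cn-z_{\rm c}$ via Lemma~\ref{I:f2} and Lemma~\ref{I:f5_h2}, the same lower bound $p(\nu)\ge(\min\{c,p\}-\delta)n$ from Lemma~\ref{I:f5_p}, and the same decomposition of $p(\sigma)$ into cyan agents (at most $b_{\rm p}n$), poor red outside $R$ (at most $\delta n$), and poor red inside $R$ (at most $|\mathcal{U}\cap R|\le (r+\delta)cn$). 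The geometric estimate you flag as the main obstacle is exactly the step the paper dispatches in a single unproved sentence (``since the urban sites are evenly distributed with the total size $c\cdot n$, the maximal number of the urban sites in $R$ is $(r+\delta)cn$''), so your write-up is, if anything, more explicit than the paper about what that assertion requires.
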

\begin{proof}[Proof of Claim \ref{C:3}]
	For any configuration $\sigma \in S_{\beta, \delta}$, the poor agents on the urban sites are either in the red region $R$ or outside $R$. It follows from the definition of $(\beta, \delta)-$segregation that the size of $R$ is at most $(r + \delta)n$.
	Since the urban sites are evenly distributed with the total size $c \cdot n$, the maximal number of the urban sites in $R$ is $(r + \delta) cn.$ Hence the maximal number of the poor agents on the urban sites in $R$ is $(r + \delta) cn.$ Outside $R$, the poor agents on the urban sites could be unbridged agents whose number is upper bounded by $\delta \cdot n$, or the cyan agents whose number is $b_{\rm p}n$. Hence the total number of the poor on the urban sites for any $\sigma$ follows that $p(\sigma) \leq (r + \delta) cn + \delta n + b_{\rm p}n.$

	After the richness and color inversions, it follows from Lemma \ref{I:f1} and \ref{I:f2} that for any $\tau = (f_2\circ f_1)(\sigma),$ $h(\tau) - h(\sigma) \leq -z_{\rm c}$ and the number of the poor in $\tau$ is less than $\delta n$ and they are unbridged.
	After the color and richness recovery, it  follows from Lemma \ref{I:f5_p} that $p(\nu) \geq \min\{c, p\}n - \delta n$. Hence $p(\sigma) - p(\nu) \leq  b_{\rm p}n - \hat{b}_{\rm p} n$, where $\hat{b}_{\rm p} = \min\{c, p\}-(r+\delta)c - 2\delta$. 
	
	It also follows from Lemma \ref{I:f5_h2} that for any $\nu = (f_5{\circ}f_4{\circ}f_3{\circ})(\tau)$, $h(\nu) - h(\tau) \leq 2cn.$
	Combining with $h(\tau) - h(\sigma) \leq -z_{\rm c}$, we get $h(\nu) - h(\sigma) \leq 2cn - z_{\rm c}.$ 
\end{proof}

% \begin{remark}\label{inequality}
% 	If the number of the poor blue satisfies $b_{\rm p} < \hat{b}_{\rm p} \triangleq \min\{c, p\}-(r+\delta)c - 2\delta,$ we can conclude the ratio between the poor blue and the poor red is smaller than the ratio between the blue and the red: $\frac{b_{\rm p}}{r_{\rm p}}  < \frac{b}{r},$ which is understood as income inequality.
% \end{remark}
% \begin{proof}%[Proof of Remark \ref{inequality}]
% 	If $c \leq p$: it follows that $b_{\rm p} < c -(r+\delta)c - 2\delta = (b-\delta)c - 2\delta < (b-\delta)c < (b-\delta)p < b\cdot p$, which can be written as $\frac{b_{\rm p}}{r_{\rm p}} < \frac{b \cdot p}{r_{\rm p}}= b + b \cdot \frac{b_{\rm p}}{r_{\rm p}}$. Hence we can get $\frac{b_{\rm p}}{r_{\rm p}}  < \frac{b}{r}.$ If $p \leq c$, it follows that $b_{\rm p} < p -(r+\delta)c - 2\delta < p -(r+\delta)p - 2\delta < p (b - \delta) < b \cdot p$. Hence the same conclusion $\frac{b_{\rm p}}{r_{\rm p}}  < \frac{b}{r}$  follows. 
% \end{proof}

\end{document}